\newtheorem{myex}{Example}
\newtheorem{mydef}{Def.}
\newtheorem{myprob}{Problem}
\newtheorem{myconj}{Conjecture}
\newtheorem{mythm}{Theorem}
\newtheorem{myprop}{Proposition}
\newtheorem{mycor}{Corollary}
\newtheorem{mylemma}{Lemma}
\newtheorem{myremark}{Remark}
\begin{document}
%

\title{Design-theoretic encoding of deterministic hypotheses as constraints and correlations into U-relational databases}

\numberofauthors{2} 
\author{
%
\alignauthor
Bernardo Gon\c{c}alves\\
       \affaddr{LNCC -- National Laboratory}\\
       \affaddr{for Scientific Computing}\\
       \affaddr{Petr\'opolis, Brazil}\\
       \email{bgonc@lncc.br}
\alignauthor Fabio Porto\\
       \affaddr{LNCC -- National Laboratory}\\
       \affaddr{for Scientific Computing}\\
       \affaddr{Petr\'opolis, Brazil}\\
       \email{fporto@lncc.br}
}

\maketitle
\fontsize{10pt}{10pt}
\selectfont

\begin{abstract}
\fontsize{10pt}{10pt}
\selectfont
In view of the paradigm shift that makes science ever more data-driven, in this paper we consider deterministic scientific hypotheses as uncertain data. In the form of mathematical equations, hypotheses symmetrically relate aspects of the studied phenomena. For computing predictions, however, deterministic hypotheses are used asymmetrically as functions. We refer to Simon's notion of structural equations in order to extract the (so-called) causal ordering embedded in a hypothesis. Then we encode it into a set of functional dependencies (fd's) that is basic input to a design-theoretic method for the synthesis of U-relational databases (DB's). 

The causal ordering captured from a formally-specified system of mathematical equations into fd's determines not only the constraints (structure), but also the correlations (uncertainty chaining) hidden in the hypothesis predictive data. We show how to process it effectively through original algorithms for encoding and reasoning on the given hypotheses as constraints and correlations into U-relational DB's. The method is applicable to both quantitative and qualitative hypotheses and has underwent initial tests in a realistic use case from computational science.
\end{abstract}

\category{H.2.1}{Information Systems}{Logical Design}

\vspace{-3pt}
\keywords{$\!$Deterministic hypotheses, $\!$design by synthesis, $\!$U$\!$-relations}

\section{Introduction}
\noindent
As part of the paradigm shift that makes science ever more data-driven, deterministic scientific hypotheses can be seen as: principles or ideas, which are mathematically expressed and then implemented in a program that is run to give their \emph{decisive} form of data. For a description of the research vision of hypothesis management and its significance, we refer the reader to \cite{goncalves2014}.\footnote{Also, to anticipate \S\ref{subsec:related-work}, it can be understood in comparison as a specific, shorter-term research path along the lines of Haas et al.'s \emph{models-and-data} program \cite{haas2011}.} In this paper we engage in a theoretical exploration on deterministic hypotheses as a kind of uncertain data.

\textbf{Target applications.} 
Our framework is geared at hypothesis management applications. Examples of structured deterministic hypotheses include tentative mathematical models in physics, engineering and economical sciences, or conjectured boolean networks in biology and social sciences. These are important reasoning devices, as they are solved to generate predictive data for decision making in both science and business. But the complexity and scale of modern scientific problems require proper data management tools for the predicted data to be analyzed more effectively.

\textbf{Probabilistic DBs.} 
Probabilistic databases (p-DBs) qualify as such tool, as they have evolved into mature technology in the last decade \cite{suciu2011}. One of the state-of-the-art probabilsitic data models is the U-relational representation system with its probabilistic world-set algebra (p-WSA) implemented in \textsf{MayBMS} $\!$\cite{koch2009}. $\!$That is an elegant extension of the relational model we refer to in this paper for the management of uncertain and probabilistic data. Our goal is to develop means to extract a hypothesis specification and encode it into a U-relational DB seamlessly, ensuring consistency and quality w.r.t.$\!$ the given hypothesis structure. 
In short, we shall flatten deterministic models into U-relations. 

\textbf{Structural equations.} 
Given a system of equations with a set of variables appearing in them, in a seminal article Simon introduced an asymmetrical, functional relation among variables that establishes a (so-called) \emph{causal ordering} \cite{simon1953}. Along these lines, we shall extract the causal ordering of a deterministic hypothesis and encode it into a set of fd's that is basic input to our synthesis of U-relational DBs. As we shall see, the causal ordering we capture in fd's determines not only the constraints (structure), but also the correlations (uncertainty chaining) hidden in predictive data. 
In comparison with research on causality in DBs \cite{meliou2010} (cf. \S\ref{subsec:related-work}), our framework comprises a technique for encoding and processing causality at schema level.

\textbf{Background theory.} 
We rely on the following body of background theoretical work: (i) U-relations and p-WSA \cite{koch2009}, as our design-theoretic method is shaped for U-relational DBs; (ii) classical theory of fd's and normalization 
\cite{abiteboul1995,ullman1988}, and Bernstein's design-by-synthesis approach \cite{bernstein1976}; and finally, (iii) Simon's notion of structural equation models (SEMs) \cite{pearl2000,simon1953}.

$\!\!$\textbf{List of contributions.} $\!$Overall, this paper presents specific techni\-cal developments over the $\!\Upsilon\!$-DB vision $\!$\cite{goncalves2014}. $\!$In short, it shows how to encode deterministic hypotheses as uncertain data, viz., as constraints and correlations into U-relational DBs. Our detailed technical contributions are:
\begin{itemize}\vspace{-3pt}

\item We study the relationship between SEMs and fd's and present an \emph{encoding scheme} that extracts the causal ordering of a given (formally specified) deterministic hypothesis and encodes it into a set $\Sigma$ of fd's; we then uncover some of its main properties. In short, given a hypothesis, we show how to transform it algorithmically into a set $\Sigma$ of fd's in order to design a relational DB over it.\vspace{-1pt}

\item For a ``good'' design, we show that the hypothesis causal ordering mapped into fd set $\Sigma$ needs to be further processed in terms of \emph{acyclic reasoning on its reflexive pseudo-transitive closure}. $\!$We present an original, efficient algo\-rithm for that, which returns an fd set $\Sigma^\prime$ we motivate and define to be the \emph{folding} $\Sigma^\looparrowright\!$ of $\Sigma$. Then we apply a variant of Bernstein's synthesis algorithm (say, `4C') to render, given $\Sigma^\looparrowright\!$, a relational schema $\boldsymbol H_k\!= \bigcup_{i=1}^n H_k^i$
 shown to bear desirable properties for hypothesis management --- yet up to the capabilities of a traditional relational DB at this stage of the design pipeline. 
In short, this is a design-theoretic technique that uses the extracted fd's as constraints.\vspace{-1pt}

\item Once schema $\boldsymbol H_k$ is synthesized, datasets computed from the hypothesis under alternative trials (input settings) can be loaded into it. Finally, then, through a different manipulation on the primitive fd set $\Sigma$, we extract the \emph{uncertainty chaining} (correlations) from it into a $\Sigma^{\prime\prime}$ which is, together with $\boldsymbol H_k$, input to an original synthesis procedure (say, `4U') to render U-relational $\boldsymbol Y_k\!=\! \bigcup_{j=1}^m Y_k^j$. In short, this is a principled technique to \emph{introduce uncertainty} in  p-WSA given a set of fd's.\vspace{-1pt}
\end{itemize}

After introducing notation and basic concepts in $\S\ref{sec:preliminaries}$, we present through \S\ref{sec:encoding}--\S\ref{sec:synthesis4u} the contributions (resp.) listed above. We discuss related work and the applicability of our framework in \S\ref{sec:discussion}, and also point to a use case scenario from which we have extracted and encoded real-world hypotheses and conducted some initial experiments. Finally, \S\ref{sec:conclusions} concludes the paper.

\section{Preliminaries}\label{sec:preliminaries}
\noindent
As notational conventions, we write $X, Y, Z$ to denote sets of relational attributes and $A, B,$ $C$ to denote single attributes. Also, we write $XY$ as shorthand for $X \cup Y$, and $R[XZ]$ to denote relation $R$ has scheme $U\!=\!XZ$ with designated key constraint $X \!\to Z$.

\subsection{U-Relations and Probabilistic WSA}\label{subsec:u-relations}
\noindent
A U-relational DB or U-DB is a finite set of structures,
\begin{center}
$\boldsymbol W \!=\! \{\langle R_1^1, \hdots, R_m^1, p^{[1]}\rangle, \hdots, \langle R_1^n, \hdots, R_m^n, p^{[n]}\rangle\}$,
\end{center}
 of relations $R_1^i, \hdots, R_m^i$ and numbers $0 < p^{[i]} \leq 1$ such that
$\sum_{1\leq i \leq n} p^{[i]} = 1$. An element $R_1^i, \hdots, R_m^i, p^{[i]} \in \boldsymbol W$ is a \emph{possible world}, with $p^{[i]}$ being its probability \cite{koch2009}.

Probabilistic world-set algebra (p-$\!$WSA) consists of the operations of relational algebra, an operation for computing tuple confidence \textsf{conf}, and the \textsf{repair-key} operation for introducing uncertainty --- by giving rise to alternative worlds as maximal-subset repairs of an argument key (cf. Def. \ref{def:repair-key}) \cite{koch2009}.

\begin{mydef}
Let $R_\ell[U]$ be a relation, and $XA \subseteq U$. For each possible world $\langle R_1, \hdots, R_m, p\rangle \in \boldsymbol W$, let $A \in U$ contain only numerical values greater than zero and let $R_\ell$ satisfy the fd $(U \setminus A) \to U$. Then, \textsf{repair-key} is:

\vspace{4pt}
\noindent
$\!\!\llbracket \textsf{repair-key}_{X@A}(\!R_\ell) \rrbracket (\boldsymbol W\!) \!:=\! \left\{ \!\langle R_1,  .., R_\ell, R_m, \hat{R_\ell}[U\!\setminus\!A], \hat{p}\rangle \!\right\}$,
\vspace{2pt}

\noindent
where $\langle R_1, ..., R_\ell, R_m, p \rangle \in \boldsymbol W$, $\hat{R}_\ell$ is a maximal repair of fd $X \to U$ in $R_\ell$, and $\hat{p} = p\, \cdot\!\! \displaystyle\prod_{t \in \hat{R}_\ell} \frac{t.B}{\sum_{s \in R_\ell : s.X=t.X} s.B}$.
\label{def:repair-key}
\end{mydef}

\vspace{-8pt}
\noindent
U-relations (cf. Fig. \ref{fig:maybms}) have in their schema a set of pairs $(V_i, D_i)$ of \emph{condition columns} (cf.$\!$ \cite{koch2009}) to map each discrete random variable $\textsf{x}_i$ to one of its possible values \mbox{(e.g., $\textsf{x}_1 \!\mapsto\! 1$)}. The world table $W\!$ stores their mar\-ginal probabilities (cf. $\!$ the notion of \emph{pc-tables} \cite[Ch. $\!$2]{suciu2011}). 
For an illustration of the data transformation from certain to uncertain relations, consider query (\ref{eq:explanation}) in p-WSA's extension of relational algebra, whose result set is materialized into U-relation \textsf{Y}$_0$ as shown in (Fig. \ref{fig:maybms}).

\vspace{-10pt}
\begin{eqnarray}
\! Y_0 \,:=\, \pi_{\phi,\upsilon}( \textsf{repair-key}_{\phi @\textsf{Conf}} (H_0)\,).
\label{eq:explanation}
\end{eqnarray}
\noindent
Also, let $R[\,\overline{V_i\,D_i} \;|\, sch(R)\,],\, S[\,\overline{V_j\,D_j} \;|\, sch(S)\,]$ be two U-relations, where $R.\,\overline{V_i\,D_i}$ is the union of all pairs of condition columns $V_i\,D_i$ in $R$, then operations $\llbracket\, \sigma_\psi(R)\,\rrbracket$, $\llbracket\, \pi_Z(R) \,\rrbracket$ and $\llbracket R \times S \rrbracket$ issued in relational algebra are rewritten in positive relational algebra on U-relations:\vspace{-4pt}\\

$\llbracket \sigma_\psi(R) \rrbracket := \sigma_\psi(R[\overline{V_i\,D_i}\,\,|\, sch(R)])$;\vspace{3pt}

$\llbracket\, \pi_Z(R) \,\rrbracket := \pi_{\,\overline{V_i\,D_i}\,Z}(R)$;\vspace{3pt}

$\llbracket R \times S \rrbracket := \pi_{(R.\overline{V_i\,D_i} \;\cup\; S.\overline{V_i\,D_i}) \to \overline{V\,D} \;\cup\; sch(R) \,\cup\, sch(S)} ( R$\\ $\bowtie_{R.\overline{V_i\,D_i} \;\textsf{is consistent with}\; S.\overline{V_j\,D_j}} S )$.

\vspace{6pt}

If $R$ and $S$ have $k$ and $\ell$ pairs of condition columns each, then $\llbracket R \times S \rrbracket$ returns a U-relation with $k + \ell$ such pairs. If $k=0$ or $\ell=0$ (or both), then $R$ or $S$ (or both) are classical relations, but the rewrite rules above apply accordingly. All that rewriting is parsimonious translation (sic. \cite{koch2009}): the number of algebraic operations does not increase and each of the operations selection, projection and product/join remains of the same kind. Query plans are hardly more complicated than the input queries. In fact, off-the-shelf relational database query optimizers do well in practice.

For a comprehensive overview of U-relations and p-WSA we refer the reader to \cite{koch2009}. In this paper we look at U-relations from the point of view of p-DB design, for which no methodology has yet been proposed. We are concerned in particular with hypothesis management applications \cite{goncalves2014}.

\begin{figure}[t]
\centering
\footnotesize
\begingroup\setlength{\fboxsep}{2pt}
\colorbox{blue!7}{%
   \begin{tabular}{c|c|c|c}
  \textsf{H}$_0$ & $\phi$ & $\upsilon$ & \textsf{Conf}\\
      \hline    
   & $1$ & $1$ & 2\\
   & $1$ & $2$ & 2\\
   & $1$ & $3$ & 1\\
   \end{tabular}
}\endgroup\vspace{2pt}\\
\begingroup\setlength{\fboxsep}{2pt}
\colorbox{yellow!15}{%
   \begin{tabular}{c|>{\columncolor[gray]{0.92}}c||c|c}
  \textsf{Y}$_0$ & $V \mapsto D$ & $\phi$ & $\upsilon$\\
      \hline    
   & $\textsf{x}_0 \mapsto 1$ & $1$ & $1$\\
   & $\textsf{x}_0 \mapsto 2$ & $1$ & $2$\\
   & $\textsf{x}_0 \mapsto 3$ & $1$ & $3$\\
   \end{tabular}
}\endgroup
\begingroup\setlength{\fboxsep}{2pt}
\colorbox{yellow!15}{%
   \begin{tabular}{c|>{\columncolor[gray]{0.92}}c||c}
  \textsf{W} & $V \mapsto D$ & \textsf{Pr}\\
      \hline    
   & $\textsf{x}_0 \mapsto 1$ & $.4$\\
   & $\textsf{x}_0 \mapsto 2$ & $.4$\\
   & $\textsf{x}_0 \mapsto 3$ & $.2$\\
   \end{tabular}
}\endgroup
\vspace{-3pt}
\caption{U-relation generated by the repair-key operation.}
\label{fig:maybms}
\vspace{-9pt}
\end{figure}

\subsection{Design by Synthesis and Normalization}\label{subsec:synthesis}

\noindent
The problem of design by synthesis has long been introduced by Bernstein in purely symbolic terms as follows \cite{bernstein1976}: 
given a set $U\!$ of attribute symbols and a set $\Sigma$ of mappings of sets of symbols into symbols (the fd's), find a collection $\boldsymbol R \!=\! \{R_1, R_2, \!\hdots, R_n\}$ (the relations) of subsets of $U$ and, for each $R_i$, a subset of $R_i$ (its designated key) satisfying properties: (P1) each $R_i \in \boldsymbol R$ is in 3NF; (P2) $\boldsymbol R$ completely characterizes $\Sigma$; and (P3) the cardinality $|\boldsymbol R|$ is minimal. 

More generally, the problem of schema design given dependencies considers the following criteria \cite[Ch. $\!$11]{abiteboul1995}:
\begin{itemize}
\item[P1$^\prime\!$.] $\!\!$($\simeq\!$ P1). $\!$Desirable properties by normal forms;\vspace{-5pt}
\item[P2$^\prime\!$.] $\!\!$($\simeq\!$ P2). $\!$Preservation of dependencies (``meta-data'');\vspace{-5pt}
\item[P3.] $\!$The cardinality $|\boldsymbol R|$ is minimal (minimize joins);\vspace{-5pt}
\item[P4.] $\!$Preservation of data (the lossless join property).

\end{itemize}

There is a trade-off between P1$^\prime$ and P2$^\prime$, since normal forms that ensure less redundant schemes may lose the property of dependency preservation \cite{abiteboul1995}. In fact, P2$^\prime$ is important to prevent the DB from the so-called update anomalies, as the fd's in $\Sigma$ are viewed as integrity constraints to their associated relations \cite[p. $\!$398]{ullman1988}. Hypothesis management applications \cite{goncalves2014}, however, are OLAP-like and have an ETL-pipeline characterized by batch-, incremental-only updates and large data volumes. 
$\!$Thus we \mbox{shall trade P2$^\prime$ for P1$^\prime$, to favor} succintness (as less redundancy as possible) over dependency preservation (recall BCNF, Def. \ref{def:nf}).  
Also, we shall favor P4 as less joins means faster access to data.

Recall from Ullman \cite{ullman1988} that an attribute $A \!\in U$ is said to be \emph{prime} in relation schema $R[U]$ if it is part of some key for $R[U]$. Def. \ref{def:nf} presents the Boyce-Codd normal form (BCNF) and the third normal form (3NF).

\vspace{-1pt}
\begin{mydef}
$\!$Let $R[U]$ be a relation scheme over set $U\!$ of attributes, and $\Sigma$ a set of fd's on $U$.  We say that:%
\begin{itemize}
\item[(a)] $R$ is in \textbf{BCNF} if, for all $\langle X, A\rangle \in \Sigma^+\!$ with $A \!\nsubseteq X$ and $XA \!\subseteq\! U$, we have $X \!\to U$ (i.e., $\!X$ is a superkey for $R$);\vspace{-2pt}

\item[(b)] $R$ is in \textbf{3NF} if, for all $\langle X, A\rangle \in \Sigma^+$ with $A \nsubseteq X$ and $XA \!\subseteq\! U$, we have $X \!\to U$ or $A$ is prime;\vspace{-2pt}

\item[(c)] A schema $\boldsymbol R$ $\!$is in BCNF (3NF) w.r.t. $\!\Sigma$ if $\!$all of its schemes $R_1, ..., R_n \in \boldsymbol R$ are in BCNF (3NF).
\end{itemize}
\label{def:nf}
\end{mydef}
\noindent
We also recall from \cite{ullman1988} the lossless join property (Def. $\!\!$\ref{def:lossless-join}) and the notion of dependency preservation (Def. \ref{def:fd-preserving}). %

\begin{mydef}
Let $R[U]\!$ be a relational schema synthesized into collection $\boldsymbol R = \bigcup_{i=1}^n R_i$ and let $\Sigma$ be an fd set on attributes $U$. We say that $\boldsymbol R$ has a \textbf{lossless join} w.r.t. $\!\Sigma$ if for every instance $r$ of $R[U]$ satisfying $\Sigma$, we have $r =\; \bowtie_{i=1}^n \pi_{R_i} (r)$.
\label{def:lossless-join}
\end{mydef}
\vspace{-5pt}
\begin{mydef}
Let $\Sigma$ be a set of fd's and $\boldsymbol R \!=\! \{R_1, R_2, ..., R_n\}$ be a relational schema. We say that $\boldsymbol R$ \textbf{preserves} $\Sigma$ if the union of all fd's in $\boldsymbol R\!=\! \{R_1, R_2, \hdots, R_n\}$ implies $\Sigma$.
\label{def:fd-preserving}
\end{mydef}

 \vspace{-9pt}
Design theory and normalization relies on Armstrong's inference rules (or axioms) of (R0) reflexivity, (R1) augmentation and (R2) transitivity, which forms a sound and complete inference system for reasoning over fd's \cite{ullman1988}. From R0-R2 one can derive additional rules, viz., (R3) decomposition, (R4) union and (R5) pseudo-transi\-tivity. 
 \vspace{3pt}\\ 
\indent
\textbf{R0}. If $Y \!\subseteq X$, then $X \!\to Y$;\vspace{1pt}\\
\indent
\textbf{R1}. If $X \!\to Y$, then $XZ \!\to YZ$;\vspace{1pt}\\
\indent
\textbf{R2}. $\!$If $X \!\to Y\!$ and $Y \!\to W\!$, then $X \!\to W$;\vspace{1pt}\\
\indent
\textbf{R3}. If $X \!\to YZ$, then $X \!\to Y$ and $X \!\to Z$;\vspace{1pt}\\
\indent
\textbf{R4}. If $X \!\to Y$ and $X \!\to Z$, then $X \!\to YZ$;\vspace{1pt}\\
\indent
\textbf{R5}. $\!$If $X \!\to Y\!$ and $YZ \!\to W\!$, then $XZ \!\to W$.
\vspace{5pt}

\noindent
Given an fd set $\Sigma$, one can obtain $\Sigma^+$, the closure of $\Sigma$, by a finite application of rules R0-R5. We are concerned with reasoning over an fd set in order to process its `embedded' causal ordering. The latter, as we shall see in \S\ref{sec:synthesis4c}, can be performed in terms of reflexive \mbox{(pseudo-)}transitive reasoning. Note that R2 is a particular case of R5 when $Z\!=\!\varnothing$, then we shall refer to $\{$R0, $\!$R5$\}$ reasoning and understand R2 included. Def. \ref{def:xclosure} opens up a way to compute $\Sigma^+\!$ efficiently.

\begin{mydef}
Let $\Sigma$ be an fd set on attributes $U$, with $X \subseteq U$. Then $X^+$, the \textbf{attribute closure} of $X$ w.r.t. $\Sigma$, is the set of attributes $A$ such that $\langle X, A\rangle \in \Sigma^+$.
\label{def:xclosure}
\end{mydef}

\noindent
Bernstein has long given algorithm \textsf{XClosure} (cf. $\!$Alg. $\!$\ref{alg:xclosure}$\,$ in \S\ref{algs}) to compute $X^+$ in time that is polynomial in $|\Sigma| \!\cdot |U|$ \cite{bernstein1976}. 
Finally, we shall also make use of the concept of `canonical' fd sets (also called `minimal' \cite[p.$\!$ 390]{ullman1988}), see Def. $\!$\ref{def:minimal}.
\begin{mydef}
$\!\!$Let $\Sigma$ be an fd set. $\!\!$We$\!$ say that $\Sigma$ is \textbf{canonical} if:
\begin{itemize}
\vspace{-2pt}
\item[(a)] each fd in $\Sigma$ has the form $X \!\to A$, where $|A|=1$;\vspace{-2pt}
\item[(b)] For no $\langle X, A\rangle \in \Sigma$ we have $(\Sigma - \{\langle X, A\rangle\})^+ = \Sigma^+$;\vspace{-2pt}
\item[(c)] for each fd $X \!\to A$ in $\Sigma$, there is no $Y \subset X$ such that $(\Sigma \setminus \{X \!\to A\} \cup \{Y \!\to A \})^+ = \Sigma^+$.
\end{itemize}
\label{def:minimal}
\end{mydef}
For an fd set satisfying such properties (Def. $\!$\ref{def:minimal}) individually, we say that it is (a) \emph{singleton-rhs}, (b) \emph{non-redundant} and (c) \emph{left-reduced}. It is said to have an attribute $A$ in $X$ that is `extraneous' w.r.t. $\!\Sigma$ if it is \emph{not} left-reduced (Def. \ref{def:minimal}-c) \cite[p. $\!$74]{maier1983}. Finally, an fd $X \!\to Y$ in $\Sigma$ is said \emph{trivial} if $Y \subseteq X$.

\subsection{SEMs and Causal Ordering}\label{subsec:coa}
\noindent
Given a system of mathematical equations involving a set of variables, to build a \emph{structural equation model} (SEM) is, essentially, to establish a one-to-one mapping between equations and variables \cite{simon1953}. That enables further detecting the hidden asymmetry between variables, i.e., their causal ordering. 

\begin{mydef}
A \textbf{structure} is a pair $\mathcal S(\mathcal E, \mathcal V)$, where $\mathcal E$ is a set of equations over set $\mathcal V\!$ of variables, $|\mathcal E| \leq |\mathcal V|$, such that:
\begin{itemize}
\item[(a)] In any subset of $k$ equations of the structure, at least $k$ different variables appear;

\item[(b)] In any subset of $k$ equations in which $r$ variables appear, $k \leq r$, if the values of any $(r - k)$ variables are chosen arbitrarily, then the values of the remaining $k$ variables can be determined uniquely --- finding these unique values is a matter of solving the equations.
\end{itemize}
\end{mydef}

\begin{mydef}
Let $\mathcal S(\mathcal E, \mathcal V)$ be a structure. We say that $\mathcal S$ is self-contained or \textbf{complete} if $|\mathcal E|=|\mathcal V|$.
\label{def:complete}
\end{mydef}

Complete structures can be solved for unique sets of values of their variables. In this work, however, we are not concerned with solving sets of mathematical equations at all, but with extracting their causal ordering in view of U-relational DB design. Simon's concept of causal ordering has its roots in econometrics studies (cf. \cite{simon1953}) and to some extent has been taken further in AI with a flavor of Graphical Models (GMs) \cite{druzdzel1993,pearl2000,druzdzel2008}. In this paper we translate the problem of causal ordering into the language of data dependencies, viz., into fd's.

\begin{mydef}
Let $\mathcal S$ be a structure. We say that $\mathcal S$ is \textbf{minimal} if it is complete and there is no complete structure $\mathcal S^\prime \!\subset \mathcal S$. 
\end{mydef}

\begin{mydef}
The \textbf{structure matrix} $A_S$ of a structure $\mathcal S(\mathcal E, \mathcal V)$, with $f_1,\, f_2, \hdots, f_n \in \mathcal E$ and $x_1,\, x_2, \hdots, x_m \in \mathcal V$, is a $n \times m$ matrix of 1's and 0's in which entry $a_{ij}$ is non-zero if variable $x_j$ appears in equation $f_i$, and zero otherwise.
\end{mydef}

Elementary row operations (e.g., row multiplication by a constant) on the structure matrix may hinder the structure's causal ordering and then are not valid in general \cite{simon1953}. This also emphasizes that the problem of causal ordering is not about solving the system of mathematical equations of a structure, but identifying its hidden asymmetries.

\begin{mydef}
$\!$Let $\mathcal S(\mathcal E, \mathcal V)\!$ be a complete structure. $\!\!$Then a \textbf{total causal mapping} over $\mathcal S$ is a
bijection $\varphi_t\!: \mathcal E \to \mathcal V$.
\label{def:total-causal-mapping}
\end{mydef}

\vspace{-3pt}
\noindent
Simon has informally described an algorithm (cf. \cite{simon1953}) that, given a complete structure $\mathcal S(\mathcal E, \mathcal V)$, can be used to compute a \emph{partial} causal mapping $\varphi_p\!: \mathcal E \to \mathcal V$ from the set of equations to the set of variables. As shown by Dash and Druzdzel \cite{druzdzel2008}, the causal mapping returned by Simon's (so-called) Causal Ordering Algorithm (\textsf{COA}) is not \emph{total} when $\mathcal S$ has variables that are \emph{strongly coupled}, i.e., can only be determined simultaneously. They also have shown that any total mapping $\varphi_t$ over $\mathcal S$ must be consistent with \textsf{COA}'s partial mapping $\varphi_p$ \cite{druzdzel2008}. The latter is made partial by design (merge strongly coupled variables) to force its induced causal graph $G_{\varphi_p}$ to be acyclic. 

Dash and Druzdzel's work (cf. $\!$\cite{druzdzel2008}) is focused on the correcteness of \textsf{COA}, from a GM point of view. Instead, we shall elaborate on \textsf{COA} in purely symbolic terms, towards encoding structures into fd sets and reasoning over them using Armstrong's rewrite rules R0, R5.
For extracting a structure's causal ordering into an fd set, we are only concerned with total causal mappings and then shall have to deal with the issue of cyclic fd's in the causal ordering. We shall map (injectively) variables to relational attributes and (bijectively) equations to fd's.

\subsection{Problem Statement}\label{subsec:problem}
\vspace{-3pt}
\noindent
Now we can formulate more precisely the problems in our design pipeline (Fig. $\!$\ref{fig:pipeline}). In the `local' view for a hypothesis $k$, it synthesizes U-relations $\bigcup_{j=1}^{m}Y_k^j$ given its complete structure $\mathcal S_k$ and its alternative trial datasets $\bigcup_{\ell=1}^{p}\mathcal{D}_k^\ell$. In fact, the \textsf{U-intro} procedure is operated by the pipeline in the `global' view of all available hypotheses $k=1..z$. Their conditioning in the presence of evidence (cf. \cite{goncalves2014}) is not covered in this paper.

\begin{myprob} (\textbf{Hypothesis encoding}).
Given the (complete) structure $\mathcal S_k$ of deterministic hypothesis, extract a total causal mapping $\varphi_t$ over $\mathcal S_k$ and encode $\varphi_t$ into an fd set $\Sigma_k$.
\label{prob:h-encode}
\end{myprob}

Following the encoding of a hypothesis structure $S$ into a set $\Sigma$ of fd's, we target at rendering its relational schema for certainty (`4C,' for short). In short, it is meant to 
be the minimal-cardinality schema in BCNF that may have a lossless join.

\begin{myprob} (\textbf{Synthesis `4C'}).
Given fd set $\Sigma$, derive an fd set $\Sigma^\prime$ (causal ordering processing) to synthesize a relational schema $\bigcup_{i=1}^{n}H_k^i\!$ over it satisfying P1$^\prime$ (BCNF), P3 and striving for P4 while giving up P2$^\prime$.
\label{prob:synthesis4c}
\end{myprob}

Note in Fig. \ref{fig:pipeline} that coping with these two problems enables the loading of datasets $\bigcup_{\ell=1}^p \mathcal D_k^\ell$ into schemes \vspace{1pt}$\bigcup_{i=1}^{n}H_k^i$ to  accomplish the ETL phase of the design pipeline. The user can then benefit from hypothesis management up to the capabilities of a traditional relational DB.  
For a full-fledged tool, we shall leverage (globally) the \emph{certain} relations $\bigcup_{k=1}^{z}\bigcup_{i=1}^{n}H_k^i$ to \emph{uncertain} relations $\bigcup_{k=1}^{z}\bigcup_{j=1}^{m}Y_k^j$.

\begin{myprob} (\textbf{Synthesis `4U'}).
Given a collection \vspace{1pt}of relations $\bigcup_{i=1}^{n}H_k^i$ loaded with trial datasets $\bigcup_{\ell=1}^p \mathcal D_k^\ell$ for each hypothesis $k$, introduce properly all the uncertainty pre\-sent in $\bigcup_{k=1}^{z}\bigcup_{i=1}^{n}H_k^i$ w.r.t. encoded fd sets $\Sigma_k$ for $k\!=\!1..z$ into U-relations $\bigcup_{k=1}^{z}\bigcup_{j=1}^{m}Y_k^j$.
\label{prob:synthesis4u}
\end{myprob}

\noindent
We address problems $\!$P1-P3 in the sequel through \S\ref{sec:encoding}--\S\ref{sec:synthesis4u}.

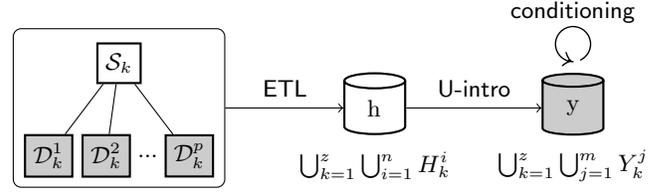
\begin{figure}[t]
\tikzstyle{rect1}=[rectangle,
                                    thick,
                                    minimum size=15pt,
                                    draw=black]
\tikzstyle{rect2}=[rectangle,
                                    thick,
                                    minimum size=15pt,
                                    fill=black!20,
                                    draw=black]
\tikzstyle{rect3}=[rectangle,
                                    rounded corners=3pt,
                                    minimum size=60pt,
                                    minimum width=80pt,
                                    draw=black]
\tikzstyle{box}=[rectangle,
                                    fill=none,
                                    draw=none]
\tikzstyle{cyl1}=[cylinder,
                                    thick,
                                    minimum size=23pt,
                                    inner sep=0pt,
                                    fill=none,
                                    draw=black]
\tikzstyle{cyl2}=[cylinder,
                                    thick,
                                    fill=black!20,
                                    minimum size=23pt,
                                    inner sep=0pt,
                                    draw=black]
\tikzstyle{edge} = [draw,thick,->,bend left]
\begin{tikzpicture}[scale=0.85]
    \node[rect3] (back) at (0,3) {};
    \node[rect1] (s) at (0,3.7) {$\mathcal{S}_k$};
    \node[rect2] (d1) at (-1.1,2.25) {$\mathcal{D}_k^1$};
    \node[rect2] (d2) at (-0.2,2.25) {$\mathcal{D}_k^2$};
    \node[box] (dot) at (0.45,2.25) {...};
    \node[rect2] (dn) at (1.1,2.25) {$\mathcal{D}_k^p$};
    \node[cyl1,rotate=90] (h) at (4,3) {\rotatebox[origin=c]{-90}{h}};
    \node[box] (hlabel) at (4,2.1) {$\bigcup_{k=1}^{z}\bigcup_{i=1}^{n}H_k^i$};
    \node[cyl2,rotate=90] (y) at (7.1,3) {\rotatebox[origin=c]{-90}{y}};
    \node[box] (ylabel) at (7.1,2.1) {$\bigcup_{k=1}^{z}\bigcup_{j=1}^{m}Y_k^j$};
    \node[box] (cond) at (7.1,4) {\huge\rotatebox[origin=c]{180}{$\circlearrowleft$}};
    \draw[-] (s) to (d1);
    \draw[-] (s) to (d2);
    \draw[-] (s) to (dn);
    \draw[->] (back) to (h);
    \node[box] (etl) at (2.6,3.3) {\textsf{ETL}};    
    \draw[->] (h) to (y);
    \node[box] (etl) at (5.55,3.3) {\textsf{U-intro}};
    \node[box] (etl) at (7.1,4.5) {\textsf{conditioning}};
\end{tikzpicture}
\vspace{-12pt}
\caption{Design-theoretic pipeline for hypothesis encoding.}
\label{fig:pipeline}
\vspace{-7pt}
\end{figure}

\begin{figure*}[t]\scriptsize
\hspace{-8pt}
\tikzset{node style ge/.style={circle,inner sep=0pt,minimum size=15pt}}
\begin{subfigure}{0.3\textwidth}
\hspace{5pt}
\begin{tikzpicture}[baseline=(A.center)]
\matrix (A) [matrix of math nodes, nodes = {node style ge},column sep=0.35 mm] {
 & \node (x1) {x_1}; & \node (x2) {x_2}; & \node (x3) {x_3}; & \node (x4) {x_4}; & \node (x5) {x_5}; & \node (x6) {x_6}; & \node (x7) {x_7};\\
\node (f1) {f_1}; & \node (a11) {1}; & \node (a12) {0}; & \node (a13) {0}; & \node (a14) {0}; & \node (a15) {0}; & \node (a16) {0}; & \node (a17) {0};\\
\node (f2) {f_2}; & \node (a21) {0}; & \node (a22) {1}; & \node (a23) {0}; & \node (a24) {0}; & \node (a25) {0}; & \node (a26) {0}; & \node (a27) {0};\\
\node (f3) {f_3}; & \node (a31) {0}; & \node (a32) {0}; & \node (a33) {1}; & \node (a34) {0}; & \node (a35) {0}; & \node (a36) {0}; & \node (a37) {0};\\
\node (f4) {f_4}; & \node (a41) {1}; & \node (a42) {1}; & \node (a43) {1}; & \node (a44) {1}; & \node (a45) {1}; & \node (a46) {0}; & \node (a47) {0};\\
\node (f5) {f_5}; & \node (a51) {1}; & \node (a52) {0}; & \node (a53) {1}; & \node (a54) {1}; & \node (a55) {1}; & \node (a56) {0}; & \node (a57) {0};\\
\node (f6) {f_6}; & \node (a61) {0}; & \node (a62) {0}; & \node (a63) {0}; & \node (a64) {1}; & \node (a65) {0}; & \node (a66) {1}; & \node (a67) {0};\\
\node (f7) {f_7}; & \node (a71) {0}; & \node (a72) {0}; & \node (a73) {0}; & \node (a74) {0}; & \node (a75) {1}; & \node (a76) {0}; & \node (a77) {1};\\
};
\end{tikzpicture}
\caption{Structure matrix as given.}\label{fig:coa-a}
\end{subfigure}
$\to$
\hspace{-2pt}
\tikzstyle{background0}=[rectangle,
                                                fill=gray!05,
                                                inner sep=0.025cm,
                                                rounded corners=1mm]
\tikzstyle{background1}=[rectangle,
                                                fill=gray!20,
                                                inner sep=0.025cm,
                                                rounded corners=1mm]
\tikzstyle{background2}=[rectangle,
                                                fill=gray!40,
                                                inner sep=0.025cm,
                                                rounded corners=1mm]
\begin{subfigure}{0.3\textwidth}
\begin{tikzpicture}[baseline=(A.center)]
  \tikzset{BarreStyle/.style =   {opacity=.4,line width=2.65 mm,line cap=round,color=#1}}
\matrix (A) [matrix of math nodes, nodes = {node style ge},column sep=0.35 mm] {
 & \node (x1) {x_1}; & \node (x2) {x_2}; & \node (x3) {x_3}; & \node (x4) {x_4}; & \node (x5) {x_5}; & \node (x6) {x_6}; & \node (x7) {x_7};\\
\node (f1) {f_1}; & \node (a11) {1}; & \node (a12) {0}; & \node (a13) {0}; & \node (a14) {0}; & \node (a15) {0}; & \node (a16) {0}; & \node (a17) {0};\\
\node (f2) {f_2}; & \node (a21) {0}; & \node (a22) {1}; & \node (a23) {0}; & \node (a24) {0}; & \node (a25) {0}; & \node (a26) {0}; & \node (a27) {0};\\
\node (f3) {f_3}; & \node (a31) {0}; & \node (a32) {0}; & \node (a33) {1}; & \node (a34) {0}; & \node (a35) {0}; & \node (a36) {0}; & \node (a37) {0};\\
\node (f4) {f_4}; & \node (a41) {1}; & \node (a42) {1}; & \node (a43) {1}; & \node (a44) {1}; & \node (a45) {1}; & \node (a46) {0}; & \node (a47) {0};\\
\node (f5) {f_5}; & \node (a51) {1}; & \node (a52) {0}; & \node (a53) {1}; & \node (a54) {1}; & \node (a55) {1}; & \node (a56) {0}; & \node (a57) {0};\\
\node (f6) {f_6}; & \node (a61) {0}; & \node (a62) {0}; & \node (a63) {0}; & \node (a64) {1}; & \node (a65) {0}; & \node (a66) {1}; & \node (a67) {0};\\
\node (f7) {f_7}; & \node (a71) {0}; & \node (a72) {0}; & \node (a73) {0}; & \node (a74) {0}; & \node (a75) {1}; & \node (a76) {0}; & \node (a77) {1};\\
};
 \draw [BarreStyle=blue] (a11.north west) to (a11.south east); 
 \draw [BarreStyle=red] (a22.north west) to (a22.south east);
 \draw [BarreStyle=green] (a33.north west) to (a33.south east);
 \draw [BarreStyle=blue] (a44.north west) to (a55.south east); 
 \draw [BarreStyle=red] (a66.north west) to (a66.south east); 
 \draw [BarreStyle=green] (a77.north west) to (a77.south east); 
     \begin{pgfonlayer}{background}
        \node [background0,
                    fit=(a11) (a12) (a13) (a14) (a15) (a16) (a17) (a21) (a22) (a23) (a24) (a25) (a26) (a27) (a31) (a32) (a33) (a34) (a35) (a36) (a37) (a41) (a42) (a43) (a44) (a45) (a46) (a47) (a51) (a52) (a53) (a54) (a55) (a56) (a57) (a61) (a62) (a63) (a64) (a65) (a66) (a67) (a71) (a72) (a73) (a74) (a75) (a76) (a77) ]
                    {};
        \node [background1,
                    fit=(a44) (a45) (a46) (a47) (a54) (a55) (a56) (a57) (a64) (a65) (a66) (a67) 
                    (a74) (a75) (a76) (a77) ]
                    {};
        \node [background2,
                    fit=(a66) (a67) (a76) (a77) ]
                    {};
    \end{pgfonlayer}
\end{tikzpicture}
\caption{COA$_t$ execution in 3 recursive steps.}\label{fig:coa-b}
\end{subfigure}
\hspace{-16pt}
\tikzstyle{rect}=[rectangle,
                                    thick,
                                    minimum size=0.3cm,
                                    draw=black]
\tikzstyle{circ}=[circle,
                                    thick,
                                    minimum size=0.3cm,
                                    draw=black]
\tikzstyle{vertex}=[circle,fill=black!10,minimum size=20pt,inner sep=0pt]
\tikzstyle{selected vertex} = [vertex, fill=red!24]
\tikzstyle{edge} = [draw,thick,->,bend left]
\tikzstyle{weight} = [font=\small]
\tikzstyle{selected edge} = [draw,line width=5pt,-,red!50]
\tikzstyle{ignored edge} = [draw,line width=5pt,-,black!20]
\begin{subfigure}{0.3\textwidth}
\vspace{12pt}
\hspace{8pt}
\begin{tikzpicture}[scale=0.85]
    \foreach \pos/\name in {{(-0.5,2)/x_1}, {(2,2)/x_2}, {(4.5,2)/x_3}, 
                            		 {(0.7,0)/x_4}, {(3.3,0)/x_5},
		 			 {(1,-2)/x_6}, {(3,-2)/x_7}}
        \node[vertex] (\name) at \pos {$\name$};
    \draw[->] (x_1) to (x_4);
    \draw[->] (x_1) to (x_5);
    \draw[->] (x_2) to (x_4);
    \draw[->] (x_3) to (x_4);
    \draw[->] (x_3) to (x_5);
    \draw[->] (x_4) to[out=10,in=200] (x_5);
    \draw[->] (x_5) to[out=170,in=-20] (x_4);
    \draw[->] (x_4) to (x_6);
    \draw[->] (x_5) to (x_7);
\end{tikzpicture}
\vspace{-5pt}
\caption{Directed causal graph $G_{\varphi_t}$.}\label{fig:coa-c}
\end{subfigure}
\caption{Running Simon's Causal Ordering Algorithm (\textsf{COA}) on a given structure matrix (Fig. \ref{fig:coa-a}). Minimal subsets detected in each recursive step (highlighted in different shades of gray) have their diagonal elements colored (Fig. \ref{fig:coa-b}).}
\label{fig:coa}
\vspace{-6pt}
\end{figure*}

\section{Hypothesis Encoding}\label{sec:encoding}
\vspace{-3pt}
\noindent
In this section we present a technique to address Problem \ref{prob:h-encode}. 
For the encoding we shall consider a set $Z$ of attribute symbols such that $Z \!\simeq\! \mathcal V$, where $\mathcal S(\mathcal E, \mathcal V)$ is a complete structure; and two special attribute symbols, $\phi,\, \upsilon \notin Z$, which are kept to identify (resp.) phenomena and hypotheses. We are explicitly distinguishing symbols in $Z$, assigned by the user into structure $\mathcal S$, from epistemological symbols $\phi$ and $\upsilon$. 
Now, we consider a sense of Simon's into the nature of scientific modeling and interventions \cite{simon1953}, summarized in Def. \ref{def:exogenous}.

\begin{mydef}
Let $\mathcal S(\mathcal E, \mathcal V)$ be a structure and $x_\ell \in \mathcal V$ be a variable. We say that $x_\ell$ is \textbf{exogenous} if there exists an equation $f_k \!\in \mathcal E$ that can be written $f_k(x_\ell) \!= 0$, i.e., $A_S(k, j) \!=\! 1$ iff $j = \ell$. 
We say that $x_\ell$ is \textbf{endogenous} otherwise.
\label{def:exogenous}
\end{mydef}

\vspace{-3pt}
Remark \ref{rmk:modeling} introduces an interpretation of Def. $\!$\ref{def:exogenous} with a data dependency flavor.

\begin{myremark}
The value of exogenous variables (attri\-butes) is determined empirically, outside of the system (proposed structure $\mathcal S$). Such values are, therefore, dependent on the phenomenon id $\phi$ only. The value of endogenous variables (attributes) is in turn determined theoretically, within the system. They are dependent on the hypothesis id $\upsilon$ and shall be dependent on the phenomenon id $\phi$ as well. 
$\Box$
\label{rmk:modeling}
\end{myremark}

\noindent
We give (Alg. \ref{alg:coa}) \textsf{COA}$_t$, which is a (more detailed) variant of Simon $\!$(and$\!$ Dash-Druzdzel)'s $\!$\textsf{COA}. It returns a total causal mapping $\varphi_t$, instead of a partial causal mapping. We illustrate it through Example \ref{ex:struct-matrix} and Fig. \ref{fig:coa}.

\begin{algorithm}[h]\footnotesize
\caption{COA$_t$ as a variant of Simon's COA.}
\label{alg:coa}
\begin{algorithmic}[1]
\Procedure{\textsf{COA}$_t$}{${\mathcal S\!:\, \text{structure over}\; \mathcal E \;\text{and}\; \mathcal V}$}
\Require $\mathcal S$ given is complete, i.e., $|\mathcal E|=|\mathcal V|$
\Ensure Returns total causal mapping $\varphi_t: \mathcal E \to \mathcal V$
\State $\varphi_t \gets \varnothing$, $\mathcal{S}_c \gets \varnothing$ 
\ForAll{minimal $\mathcal S^\prime \subset \mathcal S$}\vspace{1pt}
\State $\mathcal{S}_c \gets \mathcal{S}_c \cup \mathcal S^\prime$ \Comment{store minimal structures in $\mathcal S$}\vspace{1pt}
\State $\mathcal V^\prime \gets \mathcal S^\prime(\mathcal V)$
\ForAll{$f \in \mathcal S^\prime(\mathcal E)$}
\State $x \gets \text{any} \;x_a \in \mathcal V^\prime$\vspace{1pt}
\State $\varphi_t \gets \varphi_t \cup \langle f,\, x \rangle$\vspace{1pt}
\State $\mathcal V^\prime \gets \mathcal V^\prime \setminus \{x\}$
\EndFor
\EndFor
\State $\mathcal T \gets \mathcal S \setminus \bigcup_{\mathcal S^\prime \in \mathcal{S}_c} \mathcal S^\prime$\vspace{1pt}
\If{$\mathcal T \neq \varnothing$}
\State \Return $\varphi_t \;\cup\;$\textsf{COA}$_t(\mathcal T)$
\EndIf
\State \Return $\varphi_t$
\EndProcedure
\end{algorithmic}
\end{algorithm}

\vspace{-5pt}
\begin{myex}
Consider structure $\mathcal S(\mathcal E, \mathcal V)$ whose matrix is shown in Fig. $\!$\ref{fig:coa-a}. $\!$Note that $\mathcal S$ is complete, since $|\mathcal E|\!=\!|\mathcal V|\!=\!7$, but not minimal. The set of all minimal subsets $\mathcal S^\prime\! \subset \mathcal S$ is $\mathcal{S}_c \!=\! \{\, \{f_1\},\, \{f_2\},\, \{f_3\} \,\}$. $\!$By eliminating the variables identified at recursive step $k$, a smaller structure $\mathcal T \subset \mathcal S$ is derived. Compare the partial causal mapping eventually returned by $\!$\textsf{COA}, $\varphi_p \supset\! \{\, \langle\{f_4, f_5\},\, \{x_4, x_5\}\rangle \,\}$, to the total causal mapping returned by \textsf{COA}$_t$, $\,\varphi_t \supset \{\, \langle f_4, x_4 \rangle,\, \langle f_5, x_5\rangle \,\}$. $\!$Since $x_4$ and $x_5$ are strongly coupled (see Fig.\ref{fig:coa-b}), \textsf{COA}$_t$ maps them arbitrarily (i.e., it could be $f_4 \mapsto x_5,\, f_5 \mapsto x_4$ instead). Such total mapping $\varphi_t$ renders a cycle in the directed causal graph $G_{\varphi_t}$ (see Fig.\ref{fig:coa-c}). $\Box$
\label{ex:struct-matrix}
\end{myex}

\noindent
We encode complete structures into fd sets by means of (Alg. $\!$\ref{alg:h-encode}) \textsf{h-encode}. $\!$Fig. \ref{fig:h-fdschema} \mbox{(left) presents an fd set defined} $\Sigma \triangleq$ \textsf{h-encode}($\mathcal S$), where $\mathcal S$ is shown in Fig. $\!$\ref{fig:coa}. Next we study the main properties of the encoded fd sets.

\begin{algorithm}[H]\footnotesize
\caption{Hypothesis encoding.}
\label{alg:h-encode}
\begin{algorithmic}[1]
\Procedure{h-encode}{$\mathcal S\!:\, \text{structure over}\; \mathcal E \;\text{and}\; \mathcal V$}
\Require $\mathcal S$ given is a complete structure, i.e., $|\mathcal E|=|\mathcal V|$
\Ensure Returns a non-redundant fd set $\Sigma$
\State $\Sigma \gets \varnothing$
\State $\varphi_t \gets \textsf{COA}_t(\mathcal S)$\vspace{1pt}
\ForAll{$\langle f_k,\, x_\ell \rangle \in \varphi_t$}\vspace{1pt}
\State $Z \gets x_j \,\;\text{for all}\;\, j \,\;\text{such that}\; A_S(k, j)=1$\vspace{1pt}
\If{$|Z|=1$} \Comment{$x_\ell$ is exogenous}\vspace{1pt}
\State $\Sigma \gets \Sigma \cup \langle \{\phi\}, \{x_\ell\} \rangle$
\Else \Comment{$x_\ell$ is endogenous}\vspace{1pt}
\State $\Sigma \gets \Sigma \cup \langle Z\!\setminus\! \{x_\ell\} \cup \{\upsilon\},\; \{x_\ell\} \rangle$
\EndIf
\EndFor
\State \Return $\Sigma$
\EndProcedure
\end{algorithmic}
\end{algorithm}

\begin{mylemma}
$\!$Let $\,\Sigma$ be a singleton-rhs fd set on attributes $U$. Then $\langle X, A\rangle \in \Sigma^+$ with $XA \subseteq U$ only if $A \subseteq X$ or there is non-trivial $\langle Y, A\rangle \in \Sigma$ for some $Y \subset U$.
\label{lemma:singleton-rhs}
\end{mylemma}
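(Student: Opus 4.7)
The plan is to reason through the attribute closure $X^+$ of Def. \ref{def:xclosure}, since $\langle X, A\rangle \in \Sigma^+$ iff $A \in X^+$, and then to exploit the singleton-rhs hypothesis to pin down a specific witness fd in $\Sigma$.

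First I would recall the iterative description of $X^+$ underlying \textsf{XClosure} (Alg. \ref{alg:xclosure}): set $X^{(0)} = X$ and, for $i \geq 0$, $X^{(i+1)} = X^{(i)} \cup \{B : \langle W, B\rangle \in \Sigma,\; W \subseteq X^{(i)}\}$, terminating when no new attribute can be added. Because $\Sigma$ is singleton-rhs, every attribute entering the closure does so \emph{via} some specific fd of $\Sigma$ whose right-hand side is exactly that attribute; there is no need to first decompose a larger right-hand side (by R3) as would be required for arbitrary $\Sigma$. This is the one point at which the singleton-rhs hypothesis is really used.

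Now suppose $\langle X, A\rangle \in \Sigma^+$ and $A \not\subseteq X$, i.e. $A \notin X = X^{(0)}$. Then there is a minimal $i \geq 0$ with $A \in X^{(i+1)} \setminus X^{(i)}$, and by the iterative rule above some $\langle Y, A\rangle \in \Sigma$ satisfies $Y \subseteq X^{(i)} \subseteq U$. Because $A \notin X^{(i)}$ by minimality of $i$, in particular $A \notin Y$, so the fd $\langle Y, A\rangle$ is non-trivial; and since $A \in U \setminus Y$, we get $Y \subsetneq U$, as required.

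The main obstacle is mostly bookkeeping: keeping the singleton/element conventions in sync (the paper writes $A$ both as an attribute and as a singleton set, as in $A \subseteq X$ and $XA \subseteq U$), and explicitly justifying that Def. \ref{def:xclosure} together with the soundness and completeness of Armstrong's axioms yields the iterative characterisation of $X^+$ employed above. Beyond this there is no hard combinatorial content: the decisive observation is that the first step at which $A$ enters $X^+$ must come from an fd of $\Sigma$ whose right-hand side is precisely $A$, and this is exactly what the singleton-rhs hypothesis buys us.
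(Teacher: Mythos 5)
Your proposal is correct and follows essentially the same route as the paper's proof: both reduce membership $\langle X, A\rangle \in \Sigma^+$ to $A \in X^+$ and then observe that \textsf{XClosure} can only add $A$ via an fd of $\Sigma$ whose right-hand side is exactly $A$, which is where the singleton-rhs hypothesis enters. Your version is in fact slightly more careful than the paper's, since you explicitly use the minimality of the insertion step to justify that the witness $\langle Y, A\rangle$ is non-trivial and that $Y \subset U$, points the paper leaves implicit.
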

\begin{proof}
See Appendix, \S\ref{a:singleton-rhs}.
\end{proof}

\begin{mythm}
$\!\!$Let $\Sigma$ be an fd set defined $\Sigma \!\triangleq\!$ \textsf{h-encode}($\mathcal S$) for some complete structure $\mathcal S$. Then it is non-redundant but may not be canonical. 
\label{thm:non-redundant}
\end{mythm}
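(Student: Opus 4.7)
My plan splits naturally along the two assertions. First I would record two elementary observations about any $\Sigma$ produced by \textsf{h-encode}: (i) by inspection of lines 6 and 9, every emitted fd has a singleton rhs, so $\Sigma$ satisfies Def.~\ref{def:minimal}(a); (ii) because $\varphi_t$ is a bijection $\mathcal E \to \mathcal V$ (Def.~\ref{def:total-causal-mapping}) and the \textbf{for all} loop emits one fd per pair $\langle f_k, x_\ell\rangle$, each variable symbol $x_\ell$ appears on the rhs of \emph{exactly one} fd of $\Sigma$.

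For non-redundancy I would argue by contradiction. Suppose some $X_k \to x_\ell \in \Sigma$ is redundant, so $X_k \to x_\ell \in (\Sigma')^+$ with $\Sigma' := \Sigma \setminus \{X_k \to x_\ell\}$. A quick case split on lines 6 and 9 shows $x_\ell \notin X_k$: in the exogenous branch $X_k = \{\phi\}$ and $\phi \notin Z \ni x_\ell$; in the endogenous branch $X_k = (Z \setminus \{x_\ell\}) \cup \{\upsilon\}$, and $\upsilon \neq x_\ell$ since $\upsilon \notin Z$. Since $\Sigma'$ is still singleton-rhs, Lemma~\ref{lemma:singleton-rhs} applied to $\Sigma'$ forces the existence of some non-trivial $\langle Y, x_\ell\rangle \in \Sigma'$; but observation (ii) says the unique such fd in $\Sigma$ is the one we just removed, a contradiction. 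Hence every fd of $\Sigma$ is essential and Def.~\ref{def:minimal}(b) holds.

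For the second half, I would exhibit a small counterexample showing that left-reducedness (Def.~\ref{def:minimal}(c)) can fail. Take the complete structure $\mathcal S$ on $\mathcal V = \{x_1, x_2, x_3\}$ and $\mathcal E = \{f_1, f_2, f_3\}$ whose structure matrix is lower-triangular (row $i$ has $1$'s exactly in columns $1, \ldots, i$); then \textsf{COA}$_t$ returns $f_i \mapsto x_i$, yielding $\Sigma = \{\,\phi \to x_1,\; \{x_1, \upsilon\} \to x_2,\; \{x_1, x_2, \upsilon\} \to x_3\,\}$. Computing $\{x_1, \upsilon\}^+$ under $\Sigma$ (apply the second fd, then the third) gives $\{x_1, x_2, x_3, \upsilon\}$, so $\Sigma \models \{x_1, \upsilon\} \to x_3$ and thus $x_2$ is extraneous in $\{x_1, x_2, \upsilon\} \to x_3$. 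Therefore $\Sigma$ is not left-reduced and hence not canonical. The one delicate point I foresee is ensuring the witness really is a complete structure (both of Simon's axioms), but a lower-triangular structure matrix trivially satisfies them; the technical lever throughout is Lemma~\ref{lemma:singleton-rhs} coupled with the bijectivity of $\varphi_t$.
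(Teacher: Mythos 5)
Your proposal is correct and follows essentially the same route as the paper's own proof: singleton rhs by inspection of the algorithm, non-redundancy via Lemma~\ref{lemma:singleton-rhs} combined with the bijectivity of $\varphi_t$, and failure of left-reducedness witnessed by the very same $3\times 3$ lower-triangular structure matrix yielding $\Sigma=\{\phi\to x_1,\ x_1\upsilon\to x_2,\ x_1x_2\upsilon\to x_3\}$. The only cosmetic differences are that you phrase non-redundancy as an explicit contradiction with a case split on the two branches of the algorithm, and derive the extraneous attribute via the attribute closure rather than by citing rules R5 and R0.
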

\begin{proof}
We show that properties (a-b) of Def. \ref{def:minimal} must hold for $\Sigma$ produced by (Alg. \ref{alg:h-encode}) \textsf{h-encode}, but property (c) may not hold (i.e., encoded fd set $\Sigma$ may not be left-reduced). See Appendix, \S\ref{a:non-redundant}.
\end{proof}

Finally, it shall be convenient to come with a notion of \emph{parsimonious} fd sets (see Def. $\!$\ref{def:parsimonious}). This is bit stronger an assumption than canonical fd sets, yet provably just fit to our use case (cf. Corollary $\!$\ref{cor:parsimonious} and its proof in \S\ref{a:parsimonious}).

\begin{mydef}
Let $\Sigma$ be set of fd's on attributes $U$. Then, we say that $\Sigma$ is \textbf{parsimonious} if it is canonical and, for all fd's $\langle X, A\rangle \in \Sigma$ with $XA \subseteq U$, there is no $Y \subset U$ such that $Y \neq X$ and $\langle Y, A\rangle \in \Sigma$.
\label{def:parsimonious}
\end{mydef}

\begin{mycor}
Let $\Sigma$ be an fd set defined $\Sigma \!\triangleq\!$ \textsf{h-encode}($\mathcal S$) for some complete structure $\mathcal S$. Then $\Sigma$ can be assumed parsimonious with no loss of generality at expense of time that is polynomial in $|\Sigma|\cdot|U|$.
\label{cor:parsimonious}
\end{mycor}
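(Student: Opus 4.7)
The plan is to produce, in time polynomial in $|\Sigma|\cdot|U|$, an equivalent fd set $\Sigma^*$ that is parsimonious; since every subsequent procedure in the pipeline depends on $\Sigma$ only through its closure $\Sigma^+$, swapping $\Sigma$ for $\Sigma^*$ incurs no loss of generality. First I would observe, by inspection of Alg.~\ref{alg:h-encode}, that the ``at most one fd per rhs'' clause of Def.~\ref{def:parsimonious} is already free of charge: since $\varphi_t$ is a bijection between $\mathcal{E}$ and $\mathcal{V}$ (Def.~\ref{def:total-causal-mapping}), \textsf{h-encode} emits exactly one fd per pair $\langle f_k, x_\ell\rangle$, and that fd carries $\{x_\ell\}$ on its right-hand side; the distinguished symbols $\phi$ and $\upsilon$ never appear as a rhs at all. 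So what remains is to promote $\Sigma$ from non-redundant-singleton-rhs (Thm.~\ref{thm:non-redundant}) to canonical, i.e., to enforce left-reducedness (Def.~\ref{def:minimal}-c).

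To that end I would run the standard cover-minimization loop: for each $X \!\to\! A \in \Sigma$ and each $B \in X$, invoke \textsf{XClosure} (Alg.~\ref{alg:xclosure}) to compute $(X \setminus \{B\})^+$ with respect to the current $\Sigma$; if $A$ lies in that closure then $B$ is extraneous, and I replace $X \!\to\! A$ by $(X \setminus \{B\}) \!\to\! A$. I would iterate until no extraneous attribute remains, then run one final non-redundancy sweep: for each surviving fd, test whether dropping it preserves the global closure and discard it if so. Each individual modification preserves $\Sigma^+$ by the standard closure-theoretic arguments, and the resulting set is canonical by construction.

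The key step that makes parsimony (as opposed to mere canonicity) fall out is an invariant maintained throughout: \emph{no} operation of the loop ever introduces a new fd or alters any rhs -- every modification is either a shrinkage of some lhs or the outright deletion of an fd. Hence the ``unique-rhs'' property established at the outset is preserved, and the resulting canonical cover is therefore parsimonious. For the complexity bound, each extraneity or redundancy test costs one \textsf{XClosure} call, which Bernstein showed runs in time polynomial in $|\Sigma|\cdot|U|$; the number of such tests is at most $\sum_{X \to A \in \Sigma}|X| + |\Sigma| \le |\Sigma|\cdot|U| + |\Sigma|$, so the overall cost remains polynomial in $|\Sigma|\cdot|U|$.

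The main obstacle I anticipate is the interaction between left-reduction and non-redundancy: Thm.~\ref{thm:non-redundant} guarantees the input $\Sigma$ is non-redundant, but shrinking LHSs can conceivably turn a previously essential fd into a consequence of the rest (while each individual shrinkage is by itself closure-preserving). This is precisely why a final non-redundancy sweep must follow the left-reduction phase, and why the order of operations matters. Verifying this in detail, and checking that no corner case of \textsf{h-encode}'s output -- for instance, an exogenous fd $\{\phi\}\!\to\!\{x_\ell\}$ that might become implied after other lhs reductions -- escapes the invariant, is the subtle work of the proof.
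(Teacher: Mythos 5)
Your proposal is correct and follows essentially the same route as the paper: invoke Theorem~\ref{thm:non-redundant} for the singleton-rhs and non-redundancy properties, left-reduce via polynomial-time \textsf{XClosure} calls (the paper packages this as Alg.~\ref{alg:left-reduce}), and derive the unique-rhs clause of parsimony from the bijectivity of $\varphi_t$. The one worry you flag --- that shrinking left-hand sides might create redundancy, necessitating a final sweep --- is in fact moot here: your own unique-rhs invariant together with Lemma~\ref{lemma:singleton-rhs} shows that a reduced fd $X'\!\to\! A$ could only become derivable from the remaining fd's if some \emph{other} fd had $A$ on its right-hand side, which never happens.
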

\vspace{-9pt}
\begin{proof}
See Appendix, \S\ref{a:parsimonious}.
\end{proof}
\vspace{-3pt}
We draw attention to the significance of Theorem \ref{thm:non-redundant}, as it sheds light on a connection between Simon's complete structures \cite{simon1953} and fd sets \cite{ullman1988}. In fact, we continue to elaborate on that connection in next section to ensure the synthesis of relational schemas with properties of our interest. We may assume given fd sets to be canonical or parsimonious if they are defined by \textsf{h-encode} (Corollary \ref{cor:parsimonious}).

\begin{figure}[t]
\begin{framed}
\vspace{-11pt}
\begin{subfigure}{0.48\columnwidth}
\begin{eqnarray*}
\Sigma = \{\quad 
\phi &\to& x_1,\\ 
\phi &\to& x_2,\\ 
\phi &\to& x_3,\\ 
x_1\,x_2\,x_3\,x_5\,\upsilon &\to& x_4,\\
x_1\,x_3\,x_4\,\upsilon &\to& x_5,\\
x_4\,\upsilon &\to& x_6,\\
x_5\,\upsilon &\to& x_7 \quad\}.
\end{eqnarray*}
\end{subfigure}
\hspace{3pt}
\begin{subfigure}{0.48\columnwidth}
\begin{eqnarray*}
\Sigma^\looparrowright = \{\quad 
\phi &\to& x_1,\\ 
\phi &\to& x_2,\\ 
\phi &\to& x_3,\\ 
\phi\,\upsilon\;x_5 &\to& x_4,\\
\phi\,\upsilon\;x_4 &\to& x_5,\\
\phi\,\upsilon\;x_5 &\to& x_6,\\
\phi\,\upsilon\;x_4 &\to& x_7 \quad\}.
\end{eqnarray*}
\end{subfigure}
\end{framed}
\vspace{-11pt}
\caption{Primitive fd set $\Sigma$ encoding (cf. $\!$Alg. $\!$\ref{alg:h-encode}) the structure of Fig. \ref{fig:coa-a} and its folding $\Sigma^\looparrowright$ derived by Alg. $\!$\ref{alg:folding}.}
\label{fig:h-fdschema}
\vspace{-9pt}
\end{figure}

\vspace{-2pt}
\section{Synthesis `4C'}\label{sec:synthesis4c}
\vspace{-2pt}
\noindent
In this section we present a technique to address Problem \ref{prob:synthesis4c}. 
Recall that we aim at a synthesis method to ensure the produced schema $\boldsymbol R$ bears some desirable properties, viz., P1$^\prime$ (BCNF), P3 and P4; give up P2$^\prime$.  
Let us then consider procedure \textsf{synthesize} (Alg. \ref{alg:synthesize}). 
This algorithm is essentially Bernstein's \cite{bernstein1976}. 
In our use\vspace{-2pt} case, input fd sets are defined $\Sigma\! \triangleq$ \textsf{h-encode}($\mathcal S$) and then may safely assumed to be parsimonious. Rather than modifying that classical algorithm, we shall achieve the properties we want for a synthesized schema by a very specific manipulation on input fd set $\Sigma$.

\begin{algorithm}[H]\footnotesize
\caption{Schema synthesis.}
\label{alg:synthesize}
\begin{algorithmic}[1]
\Procedure{synthesize}{$\Sigma: \text{fd set}$}
\Require $\Sigma$ given is parsimonious, and let $U\!:=\!Attrs(\Sigma)$
\Ensure Returns schema $\boldsymbol R[U]$ in 3NF that preserves $\Sigma$
\State $\Sigma^\prime \gets\; \text{apply (R4) union to} \;\Sigma$ 
\State $\boldsymbol R \gets \varnothing$ 
\ForAll{$\langle X,\, Z \rangle \in \Sigma^\prime$}
\If{there is $R_k[YW] \in \boldsymbol R$ such that $X \leftrightarrow Y\!$}
\State $R_k \gets R_k \cup XZ$ 
\Else
\State $R_{i+1} \gets XZ$, with designated key $X$
\State $\boldsymbol R \gets \boldsymbol R \cup R_{i+1}$
\EndIf
\EndFor
\State \Return $\boldsymbol R$
\EndProcedure
\end{algorithmic}
\end{algorithm}

\begin{myprop}
Let $\boldsymbol R[U]$ be a relational schema,\vspace{-2pt} defined $\boldsymbol R \triangleq$ \textsf{synthesize}($\Sigma$) for some  canonical fd set $\Sigma$ on attributes $U$. Then $\boldsymbol R$ preserves $\Sigma$, and is in 3NF but may not be in BCNF. 
\label{prop:3nf}
\end{myprop}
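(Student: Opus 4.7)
The plan is to establish the three claims separately: (1) dependency preservation, (2) 3NF, and (3) failure of BCNF via counterexample. I assume, as the proposition states, that $\Sigma$ is canonical, so by Def.~\ref{def:minimal} each fd has singleton right-hand side, none is redundant, and none has extraneous left-hand side attributes.

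For preservation, I would trace what \textsf{synthesize} does to $\Sigma$. After applying rule R4 (union), the resulting $\Sigma^\prime$ has one fd $X \to Z$ per distinct left-hand side (grouping singletons into compound rhs), and each such fd produces or augments exactly one relation $R_k[XZ]$ whose designated key is $X$ (or a set $Y$ with $X \leftrightarrow Y$ under $\Sigma^+$). Since R4 is reversible by R3 (decomposition), every original $\langle X, A \rangle \in \Sigma$ can be recovered from the fd's holding on some $R_k$. When the merging branch $X \leftrightarrow Y$ fires, the fact that $X$ and $Y$ are equivalent keys of $R_k$ lets us still derive $X \to A$ via R2 and R3. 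Putting the local derivations together via R2, the union of fd's over $\boldsymbol R$ implies $\Sigma$ in the sense of Def.~\ref{def:fd-preserving}.

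For 3NF, I would fix an arbitrary $R_k[YW] \in \boldsymbol R$ with designated key (or equivalent-key class) represented by $Y$ and take a non-trivial $\langle X, A \rangle$ with $XA \subseteq YW$ and $A \nsubseteq X$ that is implied by $\Sigma$. Two cases exhaust the possibilities. Case (i): $X$ is a superkey of $R_k$, and we are done. Case (ii): $X$ is not a superkey; then by how $R_k$ was assembled, $\langle X, A \rangle$ must have entered $R_k$ because there is an fd $\langle X^\prime, A\rangle \in \Sigma^\prime$ (or in $\Sigma$ before unioning) whose left-hand side $X^\prime$ is equivalent to $Y$ under $\Sigma^+$. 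Since canonicity forbids extraneous attributes on the left, $A$ cannot be obtained as a proper subset-derivation; instead $A$ itself lies on the left-hand side of the fd that \emph{defined} one of the equivalent keys merged into $R_k$, so $A$ is prime in $R_k$. Hence the 3NF condition of Def.~\ref{def:nf}(b) is satisfied for every $R_k$. This case analysis is essentially Bernstein's original argument, and it is the technical step I expect to be the main obstacle, because the implicit fd's (those in $\Sigma^+$ but not in $\Sigma$) must be shown to fit one of the two cases; the leverage comes entirely from the left-reducedness and non-redundancy assumptions built into canonicity.

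Finally, for the failure of BCNF, I would exhibit a minimal counterexample. Take $U = \{A, B, C\}$ and the canonical set $\Sigma = \{\, AB \to C,\; C \to B \,\}$. Running \textsf{synthesize} produces a single relation $R[ABC]$ because the two left-hand sides become equivalent under $\Sigma^+$ (indeed $AB \to C$ and $C \to B$ yield $AC \to AB$, so $AB \leftrightarrow AC$). The relation has candidate keys $AB$ and $AC$, both $B$ and $C$ are prime, so $R$ is in 3NF; but the fd $C \to B$ has $C$ as a proper, non-superkey left-hand side, violating Def.~\ref{def:nf}(a). This shows the ``may not be in BCNF'' clause and completes the proposition.
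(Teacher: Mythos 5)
Your overall route matches the paper's: preservation via the reversibility of R4 by R3, 3NF via Bernstein's case analysis leveraging canonicity, and a counterexample for BCNF. But two steps do not go through as written. In the 3NF argument, your case (ii) is where the entire burden lies and you do not discharge it: a violating fd $\langle X, A\rangle$ ranges over $\Sigma^+$, not over $\Sigma$ or $\Sigma^\prime$, so it need not ``have entered $R_k$'' via any fd of $\Sigma^\prime$ with right-hand side $A$; and your stated reason for primality (that $A$ lies on the left-hand side of an fd defining one of the merged equivalent keys) does not cover the case where $A$ equals the right-hand-side attribute $B$ of the fd $\langle Y, B\rangle$ that generated $R_k[YB]$. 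The paper's proof splits on exactly this point: if $A = B$, then $X \subset Y$ and $X \to B$ contradict left-reducedness of $\Sigma$; if $A \neq B$, then $A \in Y$, and any key $Z \subseteq Y$ avoiding the nonprime $A$ gives $Z \subset Y$ with $Z \to B$, again contradicting canonicity. You correctly name left-reducedness and non-redundancy as the leverage, but the step is asserted rather than proved.

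Second, your BCNF counterexample $\Sigma = \{AB \to C,\; C \to B\}$ does refute BCNF, but not for the reason you give. The merge test in \textsf{synthesize} compares the left-hand sides actually occurring in $\Sigma^\prime$, namely $AB$ and $C$, and $C \not\leftrightarrow AB$ because $A \notin C^+$; the set $AC$ you invoke is not the left-hand side of any fd, so no merge fires. The algorithm therefore outputs two schemes, $R_1[ABC]$ with key $AB$ and $R_2[BC]$ with key $C$, and the example survives only because $C \to B$ still violates BCNF inside $R_1[ABC]$ (while $B$ is prime there, so 3NF holds). The paper instead exhibits the violation on the running example's $\Sigma$ of Fig.~\ref{fig:h-fdschema}, where $x_1\,x_3\,x_4\,\upsilon \to x_5$ fails BCNF in the synthesized $R_2$ while $x_5$ remains prime; either example works once the algorithm's actual output is traced correctly.
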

\vspace{-10pt}
\begin{proof}
See Appendix, \S\ref{a:3nf}.
\end{proof}

\begin{myremark}
\label{rmk:3nf}
$\!$The connection of Proposition \ref{prop:3nf} with Theorem \ref{thm:non-redundant} and Corollary \ref{cor:parsimonious} reveals an interesting result, viz., the encoding of complete structures (i.e., derived from determinate systems of mathematical equations) followed by straightforward synthesis always leads to 3NF relational schemas. It is suggestive of the precision and (verifiable) consistency of mathematical systems, in comparison to arbitrary information systems. $\Box$
\vspace{-3pt}
\end{myremark}
In Proposition \ref{prop:3nf} we were not concerned with the recoverability of data. Classical versions of (Alg. \ref{alg:synthesize}) \textsf{synthesize} include an (artificial) additional step to ensure the lossless join property (cf. $\!$\cite[p. $\!$257-8]{abiteboul1995}). We post\-pone the study of Alg. \ref{alg:synthesize} w.r.t. that property to \S\ref{subsec:synthesis4c}.

So far, we know that any relational schema synthesized straightforwardly from its primitive fd set $\Sigma\! :=$ \textsf{h-encode}($\mathcal S$) is in 3NF and is dependency-preserving. Yet, it does not give us a ``good'' design in the sense of Problem \ref{prob:synthesis4c} (cf. \S\ref{subsec:problem}). It fails w.r.t.$\!$ P1$^\prime$ (BCNF) and P3 (minimal-cardinality schema). For example, \textsf{synthesize} over $\Sigma$ given in Fig. \ref{fig:h-fdschema} (left) produces $|\boldsymbol R|\!=\!5$, while we target at a more succinct, less decomposed schema. We shall give up strict dependency preservation to go beyond 3NF towards BCNF for a more compact representation of the causal ordering `embedded' in $\Sigma$ (cf. $\!$\S\ref{subsec:ptc}). 
For hypothesis management, a less redundant schema (BCNF over 3NF) matters not because of update anomalies but succintness. Interestingly, Arenas and Libkin have shown in information-theoretic terms how `non-redundant' schemes in BCNF are \cite{arenas2005}.

\vspace{-2pt}
\subsection{Reflexive Pseudo-Transitive Reasoning}\label{subsec:ptc}
\vspace{-3pt}
\noindent
We seek the most succinct schema that somewhat preserves the causal ordering of the fd set given. That is achievable (see e.g., $\Sigma^\looparrowright\!$ in Fig. \ref{fig:h-fdschema}, right) by reflexive pseudo-transitive reasoning over $\Sigma$.

\vspace{-2pt}
\begin{mydef}
Let $\Sigma$ be a set of fd's on attributes $U\!$. Then $\Sigma^{\triangleright}$, the \textbf{reflexive pseudo-transitive closure} of $\Sigma$, is the set $\Sigma^{\triangleright} \supseteq \Sigma$ such that $X \!\to Y$ is in $\Sigma^{\vartriangleright}\!$, with $XY\! \subseteq U$, iff it can be derived from a finite (possibly empty) application of rules R0, R5 over fd's in $\Sigma$. In\vspace{-2pt} that case, we may write $X \!\xrightarrow{\triangleright} Y$ and omit `w.r.t. $\!\Sigma$' if it can be understood from the context.
\label{def:ptc}
\end{mydef}

We are in fact interested in a very specific proper subset of $\Sigma^\vartriangleright\!$, say, a kernel of fd's in $\Sigma^\vartriangleright$ that gives a ``compact'' representation of the causal ordering `embedded' in $\Sigma$. Note that, to characterize such special subset we shall need to be careful w.r.t. the presence of cycles in the causal ordering.

\begin{framed}
\vspace{-13pt}
\begin{mydef}
$\!$Let $\Sigma$ be a set of fd's on attributes $U\!$, and $\langle X, \!A\rangle \in \Sigma^\vartriangleright\!$ with $X\!A \subseteq U$. $\!$We say that $X \!\to\! A$ is \textbf{folded} (w.r.t.$\!$ $\Sigma$), and write $X \xrightarrow{\looparrowright} A$, if it is non-trivial and for no $Y \subset U$ with $Y \!\nsupseteq X$, we have $Y \!\to X$ and $X \!\not\to Y$ in $\Sigma^+$.
\label{def:folded}
\end{mydef}
\vspace{-13pt}
\end{framed}

The intuition of Def. \ref{def:folded} is that an fd is folded when there is no sense in going on with pseudo-transitive reasoning over it anymore. Given an fd $X \!\to A$ in fd set $\Sigma$, we shall be able to find some folded fd $Z \!\to A$ by applying (R5) pseudo-transitivity as much as possible while ruling out cyclic or trivial fd's in some clever way.

\begin{mydef}
$\!$Let $\Sigma$ be an fd set on attributes $U$, and $\langle X, A\rangle \in \Sigma$ be an fd with $XA \subseteq U$. Then,
\begin{itemize}
\item[(a)] $A^\looparrowright\!$, the \textbf{(attribute) folding} of $A$ (w.r.t. $\!\!\Sigma$) is an\vspace{-2pt} attribute set $Z \!\subset U$ such that $Z \xrightarrow{\looparrowright} A$;

\item[(b)] Accordingly, $\Sigma^\looparrowright\!$, the \textbf{folding} of $\Sigma$, is a proper subset $\Sigma^\looparrowright\! \subset \Sigma^\vartriangleright$ such that an fd $\langle Z, A\rangle \in \Sigma^\vartriangleright\!$ is in $\Sigma^\looparrowright$\vspace{-2pt} iff $X \xrightarrow{\looparrowright} A$ for some $Z \subset U$. 
\end{itemize}
\label{def:folding}
\end{mydef}

\vspace{-8pt}
\setcounter{myex}{0}
\begin{myex}$\!$(continued).$\!$ 
Fig. $\!\!$\ref{fig:h-fdschema} shows an fd set $\Sigma$ (left) and its folding $\Sigma^\looparrowright\!\!$ (right). $\!$Note that the folding can be obtai\-ned by computing the attribute folding for $A$ in each fd $X \!\to A$ in $\Sigma$. We illustrate below some reasoning steps to partially compute an attribute folding. 
\begin{eqnarray*}
1.\,\;\;\;\;\;\;\;\;\;\;\;\phi\,\upsilon\,x_4 &\to& x_5 \quad \text{\emph{[consider given]}}\\
2.\,\;\;\;\;\;\;\;\;\;\;\;\phi\,\upsilon\,x_5 &\to& x_4 \quad \text{\emph{[consider given]}}\\
3.\,\;\;\;\;\;\;\;\;\;\;\;\;\;\;x_4\,\upsilon &\to& x_6 \quad \text{\emph{[given]}}\\
\line(1,0){70}&\!\!\!\!\!\! \line(1,0){30}\!\!\!\!\!\!&\line(1,0){93}\\\vspace{-2pt}
4.\;\;\therefore\,\;\;\;\;\;\phi\,\upsilon\, x_5 &\to& x_6 \quad \text{\emph{[R5 over (2), (3)]}}.
\vspace{4pt}
\end{eqnarray*}
Note that (4) is still amenable to further application of R5, say, over (1), (4), to derive (5) $\phi\,\upsilon\,x_4 \!\to x_6$. However, even though (4) and (5) have (resp.) the form $X \!\to A$ and $Y \!\to A$ with $Y \!\to X$, we have $X \!\to Y$ as well which characterizes a cycle. In fact, (4) itself satisfies Def. \ref{def:folded} and then is folded (w.r.t. $\!\Sigma$ from Fig. \ref{fig:h-fdschema}). The same holds for (1) and (2). 
$\Box$ 
\end{myex}

\begin{mylemma}
Let $\Sigma$ be a parsimonious fd set on attributes $U$, and $\langle X, A\rangle \!\in\! \Sigma$ be an fd with $XA \subseteq U$. Then $A^\looparrowright\!$, the attribute folding of $A$ (w.r.t. $\!\Sigma$) exists. Moreover, if $\Sigma$ is parsimonious then $A^\looparrowright\!$ is unique.
\label{lemma:folding-unique}
\end{mylemma}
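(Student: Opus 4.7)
The plan splits the claim into \emph{existence} of a folded fd with RHS $A$ and, under parsimony, \emph{uniqueness} of its LHS.

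\emph{Existence.} I would construct $A^\looparrowright$ iteratively, seeded by the unique $\langle X, A\rangle \in \Sigma$ that exists by hypothesis (Corollary~\ref{cor:parsimonious} also ensures $A \notin X$). Set $X_0 := X$; at step $i$, if $X_i \to A$ already meets Definition~\ref{def:folded} we stop. Otherwise the negation of that condition supplies a witness $Y \subset U$ with $Y \nsupseteq X_i$, $Y \to X_i$ and $X_i \not\to Y$ in $\Sigma^+$; I put $X_{i+1} := Y$ and observe that $X_{i+1} \to A$ stays in $\Sigma^\vartriangleright$ via (R5) applied to $X_{i+1} \to X_i$ and $X_i \to A$ (take the R5 side set empty). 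Termination rests on a finite-preorder argument: equip $2^U$ with $Y_1 \preceq Y_2 \Leftrightarrow Y_2 \to Y_1 \in \Sigma^+$; the iteration strictly ascends in the induced quotient partial order because $X_{i+1} \to X_i$ together with $X_i \not\to X_{i+1}$ places $X_{i+1}$ in a strictly higher equivalence class. Since $2^U$ has only finitely many such classes, the process halts, and the terminal $X_k \to A$ is folded by construction.

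\emph{Uniqueness.} Suppose two distinct folded fds $Z_1 \to A$, $Z_2 \to A$ lie in $\Sigma^\vartriangleright$. Parsimony of $\Sigma$ gives a unique $\langle X, A\rangle \in \Sigma$, so both $Z_i$ arise from $X$ by successive (R5) substitutions. I would first show $Z_1 \leftrightarrow Z_2$ in $\Sigma^+$: were, say, $Z_1 \nsupseteq Z_2$ with $Z_1 \to Z_2$ and $Z_2 \not\to Z_1$, then $Z_1$ would witness against $Z_2$ being folded. So $Z_1$ and $Z_2$ reside in the same mutual-implication class. The remaining step exploits the singleton-rhs and left-reduced clauses of Def.~\ref{def:parsimonious} together with the unique-LHS-per-RHS clause: each R5 step replaces a subset $W$ of the current LHS by the unique LHS in $\Sigma$ whose RHS covers $W$, so the substitution choices are forced, and a confluence-style argument identifies $Z_1$ and $Z_2$ attribute-wise.

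The main obstacle is precisely this last confluence step. Showing $Z_1 \leftrightarrow Z_2$ falls out of Definition~\ref{def:folded} essentially for free, but upgrading mutual determination to set equality requires extracting a canonical representative from an equivalence class of attribute sets. I expect to need an induction on the length of the R5 derivation reaching $Z_i$, paired with an invariant that parsimony propagates along R5 substitutions (so each intermediate LHS is itself ``parsimonious'' as a fragment, forbidding distinct but equivalent reductions). Existence, by contrast, is a clean finiteness argument and I anticipate no real difficulty there.
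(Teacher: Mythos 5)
Your existence argument is sound and is essentially the paper's: seed with the unique $\langle X, A\rangle \in \Sigma$, and while the current fd is not folded take the witness $Y$ supplied by the negation of Def.~\ref{def:folded} as the next left-hand side, closing under R5. Your termination argument (a strictly ascending chain in the finite quotient of $2^U$ by mutual implication) is in fact a cleaner rendering of the paper's ``an infinite regress would force $\Sigma^+$ to be infinite'' contradiction; both hinge on the finiteness of $\Sigma^+$ and on Def.~\ref{def:folded} ruling out cycles along the chain. (Both you and the paper silently assume that the witness fd $Y \to X_i$, which Def.~\ref{def:folded} only places in $\Sigma^+$, is derivable by R0/R5 alone, so that R5 keeps $X_{i+1} \to A$ inside $\Sigma^\vartriangleright$; since the paper makes the same leap I do not press it.)

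The genuine gap is in uniqueness, exactly where you flagged it --- and the repair you sketch will not close it. First, Def.~\ref{def:folded} does not force two folded left-hand sides $Z_1, Z_2$ for $A$ into the same mutual-implication class: the definition only excludes witnesses $Y$ with $Y \nsupseteq Z_i$, so incomparable candidates (neither determining the other) and candidates related by inclusion survive the argument you give. Second, and more importantly, no confluence principle can upgrade mutual determination to set equality: distinct, mutually determining attribute sets are precisely what strongly coupled variables produce (cf.\ Example~\ref{ex:degenerate}; in the three-cycle $\{C \to A,\; D \to C,\; A \to D\}$, which is parsimonious, both $C \to A$ and $D \to A$ satisfy Def.~\ref{def:folded}). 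The paper's proof does not compare two candidates at all; it argues that, because $\Sigma$ is parsimonious, each attribute is the right-hand side of at most one fd in $\Sigma$, so the backward R5 substitution starting from the unique $\langle X, A\rangle \in \Sigma$ never branches --- there is a single chain $Y^n \xrightarrow{\triangleright} \cdots \xrightarrow{\triangleright} Y^0 \xrightarrow{\triangleright} X \xrightarrow{\triangleright} A$ and hence a single terminal folded left-hand side. To complete your proof you should establish this no-branching property of the substitution process (the ``invariant'' you anticipate needing), rather than attempt confluence from mutual implication, which fails in the presence of cycles.
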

\begin{proof}
See Appendix, \S\ref{a:folding-unique}.
\end{proof}

We give an original algorithm (Alg. $\!$\ref{alg:folding}) to com\-pute the folding of an fd set. At its core there lies (Alg. $\!$\ref{alg:afolding}) \textsf{AFolding}, which can be understood as a non-obvious variant of \textsf{XClosure} (cf. $\!$Alg. $\!$\ref{alg:xclosure}) designed for acyclic reflexive pseudo-transitivity reasoning. 
In order to compute the folding of attribute $A$ in fd $\langle X, A\rangle \in \Sigma$, algorithm \textsf{AFolding} backtraces the causal ordering `embedded' in $\Sigma$ towards $A$. Analogously, in terms of the directed graph $G_{\varphi_t}$ induced by the causal ordering (see Fig. \ref{fig:coa-c}), that would comprise graph traversal to identify the nodes $x_p$ that have $x \mapsto A$ in their reachability, $x_p \rightsquigarrow x$. Rather, \textsf{AFolding}'s processing of the causal ordering is fully symbolic based on Armstrong's rewrite rules R0, R5.

\begin{myex}
Cyclicity in an fd set $\Sigma$ may have the effect of making its folding $\Sigma^\looparrowright\!$ to degenerate to $\Sigma$ itself. For instance, consider $\Sigma\!=\!\{A \!\to B,\, B \!\to A\}$. Note that $\Sigma$ is parsimonious, and \textsf{AFolding} (w.r.t. $\Sigma$) is $B$ given $A$, and $A$ given $B$. That is, $\Sigma^\looparrowright \!= \Sigma$. 
$\Box$
\label{ex:degenerate}
\end{myex}

\begin{algorithm}[t]\footnotesize
\caption{Folding of an fd set.}
\label{alg:folding}
\begin{algorithmic}[1]
\Procedure{folding}{$\Sigma\!: \text{fd set}$} 
\Require $\Sigma$ given is parsimonious
\Ensure Returns fd set $\Gamma=\Sigma^\looparrowright$, the folding of $\Sigma$
\State $\Gamma \gets \varnothing$

\ForAll{$\langle X,\, A \rangle \in \Sigma$}
\State $Z \gets\, \textsf{AFolding}(\Sigma,\, A)$
\State $\Gamma \gets \Gamma \cup \langle Z,\, A \rangle$
\EndFor
\State \Return $\Gamma$
\EndProcedure
\end{algorithmic}
\end{algorithm}
%
\begin{algorithm}[h]\footnotesize
\caption{Folding of an attribute w.r.t. an fd set.}
\label{alg:afolding}
\begin{algorithmic}[1]
\Procedure{AFolding}{$\Sigma\!: \text{fd set},\; A\!: \text{attribute}$} 
\Require $\Sigma$ is parsimonious
\Ensure Returns $A^\looparrowright\!$, the attribute folding of $A$ (w.r.t. $\Sigma$)
\State $\Lambda \gets \varnothing$  \Comment{consumed attrs.}
\State $\Delta \gets \varnothing$ \Comment{consumed fd's}
\State $A^\star \gets A$ \Comment{store ``causal parent'' attrs. of $A$}
\State \emph{size} $\gets 0$
\While{\emph{size} $< |A^\star|$ } \Comment{halt when $A^{(i+1)}\!=\!A^{(i)}$}
\State \emph{size} $\gets |A^\star|$
\State $\Sigma \gets \Sigma \setminus \Delta$
\ForAll{$\langle Y,\, B \rangle \in \Sigma$}
\If{$B \in A^\star$}
\State $\Delta \gets \Delta \cup \{\langle Y,\, B \rangle\}$ \Comment{consume fd}
\State $A^\star \gets\, A^\star\! \cup Y$
\If{$Y \!\cap \Lambda=\varnothing$} \Comment{non-cyclic fd}
\State $\Lambda \gets\, \Lambda \cup B$ \Comment{consume attr.}
\EndIf
\EndIf
\EndFor
\EndWhile
\State \Return $A^\star \setminus \Lambda$
\EndProcedure
\end{algorithmic}
\end{algorithm}

\begin{mythm}
Let $\Sigma$ be a parsimonious fd set on attributes $U$, and $A$ be an attribute with $\langle X, A\rangle \in \Sigma$ with $XA \subseteq U$. Then \textsf{AFolding}($\Sigma, A$) correctly computes $A^\looparrowright\!$, the attribute folding of $A$ (w.r.t. $\!\Sigma$) in time $O(n^2)$ in $|\Sigma| \cdot |U|$.
\label{thm:afolding}
\end{mythm}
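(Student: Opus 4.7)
The plan is to establish correctness by a loop-invariant argument on the while loop of \textsf{AFolding} and then bound the runtime by counting iterations together with the per-iteration cost. Let $A^{\star(i)}$, $\Lambda^{(i)}$, $\Delta^{(i)}$ denote the values of the corresponding sets at the end of the $i$-th iteration, with $A^{\star(0)} = \{A\}$ and $\Lambda^{(0)} = \Delta^{(0)} = \varnothing$. The key idea is that the algorithm performs a symbolic backward traversal of the causal ordering embedded in $\Sigma$: each inner-loop step applies (R5) pseudo-transitivity to substitute a currently-traced attribute $B$ by the LHS $Y$ of the (unique, by parsimony) fd $\langle Y,B\rangle \in \Sigma$, while the check $Y \cap \Lambda = \varnothing$ blocks substitutions that would reintroduce an already-eliminated attribute.

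For correctness, I will maintain the invariant: (i) $\Lambda^{(i)} \subseteq A^{\star(i)}$ and $(A^{\star(i)} \setminus \Lambda^{(i)}) \xrightarrow{\triangleright} B$ is derivable from $\Sigma$ by R0, R5 for every $B \in A^{\star(i)}$; (ii) an attribute $B$ enters $\Lambda$ in iteration $i$ precisely when some $\langle Y, B\rangle \in \Sigma \setminus \Delta^{(i-1)}$ witnesses an acyclic substitution $B \mapsto Y$; (iii) $\Delta^{(i)}$ collects exactly the fd's used so far. Parsimony of $\Sigma$ (Cor.~\ref{cor:parsimonious}) guarantees that each attribute is the RHS of at most one fd, so the substitution chosen for each $B$ is canonical and the invariant is well-defined. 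When $|A^\star|$ fails to grow, no remaining fd has its RHS in $A^\star$, so any further R5 step would be either trivial or would reintroduce an attribute in $\Lambda$, i.e.\ close a cycle. Matching this against Def.~\ref{def:folded}, the returned set $Z := A^\star \setminus \Lambda$ is non-trivial and admits no $Y \subset U$ with $Y \nsupseteq Z$ satisfying $Y \to Z$ with $Z \not\to Y$ in $\Sigma^+$; hence $Z \xrightarrow{\looparrowright} A$, and Lemma~\ref{lemma:folding-unique} identifies $Z$ uniquely as $A^\looparrowright$.

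For complexity, the outer while loop runs at most $|U|$ times, since $|A^\star|$ strictly increases in every non-terminal iteration and $|A^\star| \leq |U|$. Each iteration scans $\Sigma \setminus \Delta$, which has at most $|\Sigma|$ fd's, and per fd the membership test $B \in A^\star$, the union $A^\star \cup Y$, and the intersection $Y \cap \Lambda$ all cost $O(|U|)$. This yields $O(|\Sigma|\cdot|U|)$ per iteration and $O(|\Sigma|\cdot|U|^2)$ overall, which is $O(n^2)$ in $n = |\Sigma|\cdot|U|$ as claimed. The main obstacle I anticipate is justifying that the purely local check $Y \cap \Lambda = \varnothing$ faithfully captures the global acyclicity condition hidden in Def.~\ref{def:folded}; to discharge it, I plan a supporting sub-claim showing that, along the backward trace from $A$, $\Lambda^{(i)}$ coincides with the set of ``causal ancestors'' of $A$ already absorbed into the trace, so $Y \cap \Lambda \neq \varnothing$ holds iff the substitution $B \mapsto Y$ would close a cycle in the directed causal graph $G_{\varphi_t}$ induced by the total causal mapping returned by \textsf{COA}$_t$.
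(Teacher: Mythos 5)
Your proposal is correct and follows essentially the same route as the paper's proof: an induction (loop invariant) on the outer while loop showing that $A^\star \setminus \Lambda$ always determines $A$ via a finite application of R0, R5, with termination when $A^{(i+1)}=A^{(i)}$, uniqueness supplied by Lemma~\ref{lemma:folding-unique}, and an iteration-counting argument for the $O(n^2)$ bound. The sub-claim you flag about the local check $Y \cap \Lambda = \varnothing$ versus the global condition of Def.~\ref{def:folded} is discharged in the paper exactly along the lines you anticipate, by relativizing foldedness to the set $\Delta^{(k)}$ of consumed fd's at each step and invoking Lemma~\ref{lemma:folding-unique} to replace the previous folded fd with the new one.
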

\begin{proof}
For the proof roadmap, note that \textsf{AFolding} 
is monotone and terminates precisely when $A^{(i+1)}\!=\!A^{(i)}$, where $A^{(i)}$ denotes the attributes in $A^\star$ at step $i$ of the outer loop. The folding $A^\looparrowright\!$ of $A$ at step $i$ is $A^{(i)} \setminus \Lambda^{(i)}$. We shall prove by induction, given attribute $A$ in fd $X \!\to A$ in parsimonious $\Sigma$, that $A^\star \!\setminus \Lambda$ returned by \textsf{AFolding}($\Sigma, A$) is the unique attribute folding $A^\looparrowright\!$ of $A$. See Appendix, \S\ref{a:afolding}.
\end{proof}

\begin{myremark}
\label{rmk:linear}
Beeri and Bernstein gave a straightforward optimization to (Alg. \ref{alg:xclosure}) \textsf{XClosure} to make it linear in $|\Sigma| \!\cdot\! |U|$ (cf. $\!$\cite[p. $\!\!$43-5]{beeri1979}). It applies likewise to (Alg. \ref{alg:afolding}) \textsf{AFolding}, but we omit its tedious exposure here and simply consider that \textsf{AFolding} can be implemented to be $O(n)$ in $|\Sigma| \cdot |U|$.\footnote{In short, it shall require one more auxiliary data structure to keep track, for each fd not yet consumed, of how many attributes not yet consumed appear in its rhs.} $\Box$
\end{myremark}

\begin{mycor}
Let $\Sigma$ be a canonical fd set on attributes $U$. Then algorithm \textsf{folding}($\Sigma$) correctly computes $\Sigma^\looparrowright\!$, the folding of $\Sigma$ in time that is $f(n)\,\Theta(n)$ in the size $|\Sigma| \cdot |U|$, where $f(n)$ is the time complexity of (Alg. \ref{alg:afolding}) \textsf{AFolding}.
\label{cor:folding}
\end{mycor}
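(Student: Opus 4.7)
The plan is to decouple correctness from running time and rest both parts on Theorem~\ref{thm:afolding} for the core routine \textsf{AFolding}.

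For correctness, I would observe that Alg.~\ref{alg:folding} is essentially a driver: it iterates once over $\Sigma$ and, for each fd $\langle X, A\rangle$, records a single entry $\langle \textsf{AFolding}(\Sigma, A), A\rangle$ into the output $\Gamma$. By Theorem~\ref{thm:afolding}, each such call returns the attribute folding $A^\looparrowright$. It then remains to verify that the resulting collection $\Gamma$ coincides with $\Sigma^\looparrowright$ in the sense of Def.~\ref{def:folding}(b). The nontrivial direction is to argue that every fd $\langle Z, B\rangle \in \Sigma^\looparrowright$ is produced by some iteration of the loop, which reduces to showing that the rhs attributes of fd's in $\Sigma^\looparrowright$ are exactly the rhs attributes of fd's in $\Sigma$. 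This I would justify by noting that $\Sigma^\looparrowright \subset \Sigma^\vartriangleright$ and that neither R0 nor R5 can introduce on the rhs an attribute not already occurring on the rhs of some fd in $\Sigma$; uniqueness of the matching lhs $Z = A^\looparrowright$ then follows from Lemma~\ref{lemma:folding-unique}.

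For the running time, the outer loop executes exactly $|\Sigma|$ times; each iteration issues one \textsf{AFolding} call at cost $f(n)$ in $n = |\Sigma|\cdot|U|$, plus $O(|U|)$ bookkeeping to insert the new fd into $\Gamma$. Since any faithful implementation of \textsf{AFolding} must at minimum scan its input, $f(n) = \Omega(|U|)$, so the bookkeeping is absorbed into the $f(n)$ term. Bounding $|\Sigma| = \Theta(n)$ (viewing $|\Sigma|$ as a factor of the input size $n$), the total cost is $|\Sigma|\cdot f(n) = f(n)\,\Theta(n)$, as claimed.

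The main obstacle I anticipate is a hypothesis mismatch: Theorem~\ref{thm:afolding} and Lemma~\ref{lemma:folding-unique} both assume $\Sigma$ is \emph{parsimonious}, while the corollary only assumes $\Sigma$ is \emph{canonical}. I would bridge this either (i) by tightening the hypothesis to parsimonious and invoking Corollary~\ref{cor:parsimonious} whenever the input is produced by \textsf{h-encode}, or (ii) by reinspecting the proof of Theorem~\ref{thm:afolding} to isolate exactly where parsimoniousness is used and arguing that under mere canonicity the returned $\Gamma$ still agrees with $\Sigma^\looparrowright$ modulo the nonuniqueness tolerated by Def.~\ref{def:folding}. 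Route (i) is the cleaner one and is the path I would take in the write-up.
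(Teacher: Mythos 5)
Your proposal is correct and follows essentially the same route as the paper: correctness is delegated to Theorem~\ref{thm:afolding} (one folded fd $\langle A^\looparrowright, A\rangle$ emitted per fd $\langle X,A\rangle\in\Sigma$, with uniqueness from Lemma~\ref{lemma:folding-unique}), and the time bound is $|\Sigma|$ iterations times $f(n)$ each. Your flagged hypothesis mismatch is real --- the corollary's statement says \emph{canonical} while Theorem~\ref{thm:afolding} and Lemma~\ref{lemma:folding-unique} require \emph{parsimonious} --- and the paper silently resolves it exactly via your route (i): its appendix restatement of the corollary quietly strengthens the hypothesis to parsimonious.
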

\begin{proof}
See Appendix, \S\ref{a:folding}.
\end{proof}

Finally, another property of the folding of an fd set which shall be useful to know is given by Proposition \ref{prop:folding-parsimonious}.

\begin{myprop}
Let $\Sigma$ be an fd set, and $\Sigma^\looparrowright\!$ its folding. If $\Sigma$ is parsimonious then so is $\Sigma^\looparrowright\!$.
\label{prop:folding-parsimonious}
\end{myprop}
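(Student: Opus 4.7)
Parsimoniousness of $\Sigma^\looparrowright$ amounts to four conditions: singleton rhs, non-redundancy, and left-reducedness (Def.~\ref{def:minimal}), plus uniqueness of the rhs attribute (Def.~\ref{def:parsimonious}). My plan is to verify these in order, combining the construction of Alg.~\ref{alg:folding}, Lemma~\ref{lemma:singleton-rhs}, and Lemma~\ref{lemma:folding-unique}.

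Singleton rhs is immediate from Alg.~\ref{alg:folding}, which only emits fd's of the form $\langle A^\looparrowright, A\rangle$. Uniqueness of rhs then follows because Alg.~\ref{alg:folding} emits exactly one such fd per fd of $\Sigma$, Lemma~\ref{lemma:folding-unique} pins down $A^\looparrowright$ when $\Sigma$ is parsimonious, and parsimoniousness of $\Sigma$ itself guarantees that each rhs attribute already appears at most once in $\Sigma$.

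Non-redundancy then drops out of Lemma~\ref{lemma:singleton-rhs}. Suppose, for contradiction, that some $\langle Z, A\rangle \in \Sigma^\looparrowright$ were implied by $\Sigma^\looparrowright \setminus \{\langle Z, A\rangle\}$. By Def.~\ref{def:folded} the fd is non-trivial, so $A \notin Z$; since $\Sigma^\looparrowright$ is singleton-rhs, Lemma~\ref{lemma:singleton-rhs} then demands another non-trivial fd of $\Sigma^\looparrowright$ with $A$ on the rhs, contradicting the uniqueness of rhs already established.

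The main obstacle is left-reducedness. Assume, for contradiction, $\langle Z, A\rangle \in \Sigma^\looparrowright$ admits $Y \subsetneq Z$ with $Y \to A \in (\Sigma^\looparrowright)^+$. Since $A \notin Z \supsetneq Y$, the fd $Y \to A$ is non-trivial; by Lemma~\ref{lemma:singleton-rhs} and the uniqueness of rhs, any attribute-closure derivation of $Y \to A$ from $\Sigma^\looparrowright$ must route through $\langle Z, A\rangle$, whence $Y \to Z$ and, in particular, $Y \to B$ for every $B \in Z \setminus Y$ hold in $(\Sigma^\looparrowright)^+ \subseteq \Sigma^+$. The plan is to convert this witness into a contradiction with Lemma~\ref{lemma:folding-unique} by exhibiting an alternative folded lhs for $A$ strictly inside $Z$. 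By construction of Alg.~\ref{alg:afolding}, each $B \in Z = A^\star \setminus \Lambda$ is either exogenous (so no fd of $\Sigma^\looparrowright$ carries $B$ on the rhs, ruling out $Y \to B$ outright) or was retained in the output only because pursuing its fd would complete a cycle with an already consumed attribute. In the latter, cycle-bounded case, the witness $Y \to B$ shows a way to re-enter the folding chain while avoiding the cycle guard; I would then re-run the acyclic-backtrack logic of Alg.~\ref{alg:afolding} against this witness to extract a folded lhs $Z' \subseteq Y \subsetneq Z$ for $A$, contradicting the uniqueness asserted by Lemma~\ref{lemma:folding-unique}. I expect this cycle sub-case to be the delicate step, as it requires arguing that the alternative backward-trace from $Y$ terminates at a genuinely folded set rather than looping.
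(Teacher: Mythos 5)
Your decomposition matches the paper's: singleton-rhs from the construction of \textsf{folding}, uniqueness of the left-hand side per right-hand attribute from the bijection together with Lemma~\ref{lemma:folding-unique}, and non-redundancy via Lemma~\ref{lemma:singleton-rhs} plus that uniqueness. Those three parts are correct and essentially identical to the paper's argument.

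The left-reducedness step, however, is where you stop short, and it is a genuine gap. You correctly reduce the problem to: given $\langle Z, A\rangle \in \Sigma^\looparrowright$ and $Y \subsetneq Z$ with $Y \to A \in (\Sigma^\looparrowright)^+$, derive $Y \to Z$ and then exhibit a second folded left-hand side for $A$, contradicting Lemma~\ref{lemma:folding-unique}. But the exhibition is exactly what you do not carry out --- you propose to ``re-run the acyclic-backtrack logic of Alg.~\ref{alg:afolding}'' and yourself flag that showing this alternative trace terminates at a genuinely folded set is the delicate point. That is the entire content of the step, so the proof is not complete as written; worse, descending into the algorithm's cycle guard is the wrong level of abstraction, since foldedness (Def.~\ref{def:folded}) is a property of $\Sigma^+$, not of a particular run of \textsf{AFolding}. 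The paper closes this step purely definitionally: because $\Sigma$ is parsimonious there is exactly one $\Sigma$-fd $X \to A$ with $A$ on the right, and $W \nsubseteq X$ by left-reducedness of $\Sigma$, so the offending $W \subset Z$ must satisfy $W \xrightarrow{\triangleright} X$ and hence $W \xrightarrow{\triangleright} A$ (note this also supplies the $\{$R0, R5$\}$-derivation needed for membership in $\Sigma^\vartriangleright$, which your $(\Sigma^\looparrowright)^+$-level reasoning does not automatically give); the foldedness of $W \to A$ is then inherited from that of $Z \to A$, since any witness $V \nsupseteq W$ with $V \to W$ and $W \not\to V$ would, via $W \leftrightarrow Z$, produce a witness against the foldedness of $Z \to A$. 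That yields two distinct foldings of $A$ and the desired contradiction. I recommend replacing your algorithmic re-trace with an argument of this closure-level form.
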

\begin{proof}
See Appendix, \S\ref{a:folding-parsimonious}.
\end{proof}

\subsection{Schema Synthesis over the Folding \large{$\,\Sigma^\looparrowright$}}\label{subsec:synthesis4c}
\vspace{-2pt}
\noindent
We motivate our goal of computing the folding to carry out schema synthesis over it by means of Example \ref{ex:folding}. 

\begin{myex}
Let us consider canonical fd set $\Sigma\!=\!\{{A \!\to B,}$ $B \!\to C\,\}$ over attributes $U\!=\!\{A, B, C\}$, and a tentative schema containing a single relation $R[A B C]$. This relation is not in BCNF because, for one, $B \!\to C$ violates it ($C \nsubseteq B$ but $B$ is not a superkey for $R$). A typical approach to provide a BCNF schema is to apply a `decomposition into BCNF' algorithm (cf. $\!$\cite{abiteboul1995}) to get BCNF schema $\boldsymbol R \!=\! R_1[A B] \,\cup\, R_2[B C]$. Instead, it suffices for us to consider the folding $\Sigma^\looparrowright\!=\{A \!\to B,\, A \!\to C\}$ of $\,\Sigma$. By straightforward synthesis, we generate $R[A B C]$ which is BCNF w.r.t. $\!\Sigma^\looparrowright$. 
$\Box$
\label{ex:folding}
\end{myex}

\vspace{-2pt}
Schema synthesis over $\!\Sigma^\looparrowright\!$, the folding of a parsimonious fd set $\Sigma\!$, gives up preservation of $\Sigma$ to target at a BCNF schema that somewhat preserves the causal ordering `embedded' in $\Sigma$, i.e., preserves $\!\Sigma^\looparrowright\!$.
Now we review the properties of relational schema $\boldsymbol R$ as then synthesized over the folding $\!\Sigma^\looparrowright\!$ of $\Sigma$. Recall from Proposition \ref{prop:3nf} that synthesis over $\!\Sigma$ may render a schema $\boldsymbol R$ not in BCNF. The problem of deciding whether a given $\boldsymbol R$ is in BCNF is NP-complete 
\cite[p. $\!$256]{abiteboul1995}.  
However, by Theorem \ref{thm:bcnf} we shall guarantee the BCNF property a priori for every schema synthesized over the folding.

\begin{mythm}
Let $\boldsymbol R[U]$ be a relational schema, defined $\boldsymbol R \triangleq$ \textsf{synthesize}($\Sigma^\looparrowright\!$), where $\Sigma^\looparrowright\!$ is the folding of parsimonious fd set $\Sigma$ on attributes $U$. We claim that $\boldsymbol R$ is in BCNF, is minimal-cardinality and preserves $\Sigma^\looparrowright\!$.
\label{thm:bcnf}
\end{mythm}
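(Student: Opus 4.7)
My plan is to address the three claims separately and save the real work for BCNF. Preservation of $\Sigma^\looparrowright\!$ is immediate: by Proposition \ref{prop:folding-parsimonious} the folding $\Sigma^\looparrowright\!$ is parsimonious and hence canonical, so Proposition \ref{prop:3nf} applied to $\Sigma^\looparrowright\!$ (in the role of $\Sigma$) gives both a 3NF schema (to be strengthened below) and preservation of $\Sigma^\looparrowright\!$. For the cardinality claim I would appeal to the classical correctness of Bernstein's synthesis on canonical input \cite{bernstein1976}: the R4-union step of \textsf{synthesize} brings $|\boldsymbol R|$ down to the number of distinct lhs's in $\Sigma^\looparrowright\!$, and the subsequent $\leftrightarrow$-merge collapses $\leftrightarrow$-equivalent lhs's into a single relation; no coarser grouping can preserve $\Sigma^\looparrowright\!$, so the count is minimum.

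The remaining work is BCNF. Fix $R_k[X_k Z_k] \in \boldsymbol R$ with designated key $X_k$, and suppose for contradiction that some $\langle Y, A\rangle \in (\Sigma^\looparrowright)^+\!$ violates BCNF, i.e., $YA \subseteq X_k Z_k$, $A \notin Y$, and $Y \not\to X_k Z_k$. By construction of $R_k$, each $B \in Z_k$ is determined in $\Sigma^\looparrowright\!$ by some lhs in the $\leftrightarrow$-class of $X_k$, so $X_k \to B$ lies in $(\Sigma^\looparrowright)^+\!$. I would then split on the form of $Y$. If $Y \subsetneq X_k$ and $A \in Z_k$, replacing $X_k \to A$ in $\Sigma^\looparrowright\!$ by $Y \to A$ preserves its closure (since $X_k \supseteq Y$ re-derives the original fd), contradicting left-reducibility of parsimonious $\Sigma^\looparrowright\!$ (Def.~\ref{def:minimal}(c)); if instead $A \in X_k \setminus Y$, the same replacement argument applies to any $X_k \to B$ with $B \in Z_k$, since $X_k \setminus \{A\} \supseteq Y$ together with $Y \to A$ re-derives $X_k \to B$ from $(X_k \setminus \{A\}) \to B$. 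The remaining case is that $Y$ contains attributes of $Z_k$; here I would invoke the folded property: by Lemma \ref{lemma:folding-unique} the unique attribute folding $A^\looparrowright\!$ lies in the $\leftrightarrow$-class of $X_k$, and Def.~\ref{def:folded} forbids any $Y^\prime \not\supseteq X_k$ with $Y^\prime \to X_k$ and $X_k \not\to Y^\prime$ in $(\Sigma^\looparrowright)^+\!$. Hence $Y$ must $\leftrightarrow$-determine $X_k$, making $Y$ a superkey of $R_k$ --- contradiction.

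The step I expect to be the main obstacle is the last one: one must show that any closure-level determinant $Y$ of $A$ sitting inside $R_k$ but not $\leftrightarrow$-equivalent to $X_k$ would witness a non-folded fd in $\Sigma^\looparrowright\!$. I would formalize this by induction on the length of an R0/R5 derivation of $Y \to A$ from $\Sigma^\looparrowright\!$, using Lemma \ref{lemma:singleton-rhs} to reduce the derivation to singleton-rhs intermediate fd's and the structure computed by \textsf{AFolding} (Theorem \ref{thm:afolding}) to identify the unique chain of foldings any such derivation must traverse. Assembling the three conclusions --- preservation, minimum cardinality, and BCNF --- then yields the theorem.
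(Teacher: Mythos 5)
Your route for BCNF is genuinely different from the paper's, and it is where the trouble lies. The paper never reasons about arbitrary fd's in $(\Sigma^\looparrowright)^+$: it invokes Osborn's result (Lemma~\ref{lemma:check-bcnf}) that for a canonical cover it suffices to check BCNF violations against the fd's literally in $\Sigma^\looparrowright$. For such a violating $\langle X, A\rangle \in \Sigma^\looparrowright$ sitting inside $R_k[YW]$ the contradiction is then immediate from Def.~\ref{def:folded}: the designated key $Y$ satisfies $Y \to X$ (because $X \subseteq YW$ and $Y \to YW$), $X \not\to Y$ (because $X$ is not a superkey), and $Y \nsupseteq X$, so $X \to A$ could not have been folded. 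By working instead with closure-level determinants you take on exactly the burden that Osborn's lemma removes, and your third case does not close as written: from the violating $Y \to A$ you get $X_k \to Y$ for free (the key determines everything in the scheme), not $Y \to X_k$, yet your appeal to Def.~\ref{def:folded} needs $Y \to X_k$ (equivalently $Y \to A^\looparrowright$) as a premise before it can force $X_k \to Y$ and hence superkey-ness. The missing link is the observation that, by the correctness of \textsf{XClosure} together with parsimony, $A \in Y^+ \setminus Y$ forces the unique determinant $A^\looparrowright$ of $A$ to lie in $Y^+$, i.e., $Y \to A^\looparrowright$; with that in hand your case analysis does go through, but as submitted the hardest step is an acknowledged sketch (an induction you do not carry out) rather than a proof.

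The minimality justification is also wrong as stated. ``No coarser grouping can preserve $\Sigma^\looparrowright$'' is false: a coarser grouping preserves more dependencies, not fewer (the single universal relation preserves everything). The obstruction to merging is the normal form, not preservation, and that is exactly the paper's argument: if $R_i[YW]$ and $R_j[XZ]$ with $X \not\leftrightarrow Y$ are merged into $R_k[YWXZ]$, then neither $X$ nor $Y$ is a superkey of $R_k$, so $R_k$ violates BCNF; hence no BCNF schema over $\Sigma^\looparrowright$ can have fewer relations. Your treatment of dependency preservation (via Proposition~\ref{prop:folding-parsimonious} and the argument of Proposition~\ref{prop:3nf} applied to $\Sigma^\looparrowright$) matches the paper and is fine.
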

\vspace{-7pt}
\begin{proof}
See Appendix, \S\ref{a:bcnf}.
\end{proof}

\begin{myprop}
\label{prop:lossless}
Let $\boldsymbol R[U] \triangleq$ \textsf{synthesize}($\Sigma^\looparrowright\!$) be a relational schema with $|\boldsymbol R|\geq 2$, where $\Sigma^\looparrowright\!$ is the folding of a parsimonious fd set $\Sigma \triangleq$ \textsf{h-encode}($\mathcal S$) on attributes $U$. Then $\boldsymbol R$ has a lossless join (w.r.t. $\Sigma^\looparrowright\!$) iff, for all $R_i[XZ] \in \boldsymbol R$ with key constraint $X \to Z$, we have $X \to U$ or there is $R_j[YW] \in \boldsymbol R$ such that $X \subset Y$.
\end{myprop}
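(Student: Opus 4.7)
The plan is to prove each direction of the biconditional by reducing to the classical characterization that a synthesized schema $\boldsymbol R$ has a lossless join w.r.t.\ $\Sigma^\looparrowright$ iff some $R_k\in\boldsymbol R$ has its key $X_k$ satisfying $X_k\to U$.

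For the $(\Leftarrow)$ direction, assume the stated condition holds. First I would argue that some $R_k\in\boldsymbol R$ must satisfy $X_k\to U$: pick any $R_k$ whose key $X_k$ has maximum cardinality in $\boldsymbol R$; by maximality no other $X_j$ can strictly contain $X_k$, so the hypothesis forces $X_k\to U$. Given such an $R_k$, the inclusion $r\subseteq\bowtie_j\pi_{R_j}(r)$ is standard for any instance $r$ satisfying $\Sigma^\looparrowright$, and I would prove the reverse by induction on the derivation of $X_k\to A$ for each $A\in U$: given $t\in\bowtie_j\pi_{R_j}(r)$, choose a witness $s\in r$ with $s[R_k]=t[R_k]$; for $A\in R_k$ the equality $t[A]=s[A]$ is immediate, and for $A\notin R_k$ the unique fd $Y_A\to A$ of parsimonious $\Sigma^\looparrowright$ has $Y_A$ equal to the key of some $R_{j'}$ containing $A$, so the join witness $s_{j'}\in r$ with $s_{j'}[R_{j'}]=t[R_{j'}]$ combined with the inductive hypothesis $t[Y_A]=s[Y_A]$ forces $s[A]=s_{j'}[A]=t[A]$ via $Y_A\to A$ in $r$.

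For the $(\Rightarrow)$ direction I would argue the contrapositive: assume $R_i$ witnesses the failure, so $X_i\not\to U$ and $X_i\not\subset X_j$ for every $R_j\in\boldsymbol R$; I claim no $R_k\in\boldsymbol R$ can have $X_k\to U$, which together with the classical characterization establishes lossy join. Suppose for contradiction $X_k\to U$ for some $k$; case analysis on $X_i$ vs.\ $X_k$ rules out equality (yields $X_i\to U$), $X_k\subsetneq X_i$ (augmentation yields $X_i\to U$), and $X_i\subsetneq X_k$ (direct contradiction of the hypothesis), leaving only incomparability. Here I invoke the parsimonious h-encode structure (Corollary~\ref{cor:parsimonious} and Proposition~\ref{prop:folding-parsimonious}): each endogenous attribute has a unique fd LHS in $\Sigma^\looparrowright$, so key-equivalence classes in $\boldsymbol R$ correspond to strongly connected components of the causal graph fused with downstream attributes. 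Since $X_k\to U$ requires deriving every endogenous attribute via its unique fd, $X_k$ must contain at least one variable $y$ from each such cycle, in particular from $R_i$'s class; then $\phi\upsilon y$ is an equivalent representative of $R_i$'s key strictly contained in $X_k$, contradicting $X_i\not\subset X_j$.

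The main obstacle is the structural step of the $(\Rightarrow)$ direction, namely showing that any superkey $X_k$ of $U$ must intersect the strongly connected cycle underlying every distinct key-equivalence class. This rests on parsimoniousness (unique fd LHS per endogenous attribute), the folded form $\phi\upsilon W_\ell\to x_\ell$ from h-encode with $W_\ell$ confined to the cycle(s) determining $x_\ell$, and the impossibility of deriving every vertex of a strongly connected cycle without $X_k$ containing one of them. A subsidiary subtlety is that the designated key returned by \textsf{synthesize} need not literally coincide with the containment witness; key equivalence under $\Sigma^\looparrowright$ lets us substitute representatives when verifying the condition.
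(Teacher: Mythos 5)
Your overall route is genuinely different from the paper's: you reduce to the global characterization ``lossless iff some component key is a superkey of $U$'' and prove sufficiency by a chase-style instance argument, whereas the paper works pairwise with Ullman's intersection condition (Lemma~\ref{lemma:lossless}) and closes both directions using Def.~\ref{def:folded} and parsimony. Your $(\Leftarrow)$ direction is essentially sound: a maximal-cardinality key must satisfy $X_k \to U$, and the induction along the computation of $X_k^+$ goes through because every fd of $\Sigma^\looparrowright$ is embedded in some synthesized scheme (dependency preservation, Theorem~\ref{thm:bcnf}).

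The $(\Rightarrow)$ direction, however, has a genuine gap exactly at the step you flag as a ``subsidiary subtlety.'' Having reduced to the case where $X_k \to U$ and $X_i$, $X_k$ are incomparable, you argue that $X_k$ must meet the cycle underlying $R_i$'s key class, so that ``$\phi\upsilon y$ is an equivalent representative of $R_i$'s key strictly contained in $X_k$.'' But the proposition's condition is a literal set containment between designated keys of schemes actually present in $\boldsymbol R$; exhibiting a key-\emph{equivalent} set inside $X_k$ does not contradict ``$X_i \not\subset X_j$ for all $R_j$'' (in the paper's own Example~\ref{ex:lossless}, $\phi\,\upsilon\,x_4 \leftrightarrow \phi\,\upsilon\,x_5$ yet neither contains the other). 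The supporting structural claims --- that key-equivalence classes coincide with strongly connected components fused with downstream attributes, and that any superkey must contain a vertex of every such cycle --- are asserted rather than derived from \textsf{h-encode}/\textsf{folding}. The incomparable case can instead be killed directly with Def.~\ref{def:folded}: from $X_k \to U$ we get $X_k \to X_i$, while $X_i \to X_k$ is impossible (it would give $X_i \to U$); since $X_i \xrightarrow{\looparrowright} A$ is folded, no $Y \nsupseteq X_i$ may satisfy $Y \to X_i$ and $X_i \not\to Y$, which forces $X_i \subset X_k$ and contradicts your hypothesis --- the same move the paper uses to obtain $X \supset Y$ in its own argument. As written, your necessity direction does not close.
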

\begin{proof}
See Appendix, \S\ref{a:lossless}.
\end{proof}

\begin{myremark}
An alternative approach (cf. $\!$\cite[p. $\!$411]{ullman1988}) to ensure the lossless join property w.r.t. $\!\Sigma^\looparrowright\!$ over attributes $U$ is to render an additional ``artificial'' scheme $R_{i+1}[X]$, where $X$ is any superkey for $U$, in order to get $\boldsymbol R^\prime := \boldsymbol R \cup R_{i+1}[X]$. Such $\boldsymbol R^\prime$ is in BCNF and has a lossless join for sure but is not the minimal-cardinality schema in BCNF and then is not considered here. $\Box$
\end{myremark}

\begin{myex}
Apply $\boldsymbol R \triangleq$ \textsf{synthesize}($\Sigma^\looparrowright\!$), where $\Sigma^\looparrowright\!$ is given in Fig. \!\ref{fig:h-fdschema} (right). \!Then we get $\boldsymbol R \!=\! \{R_1[\phi\, x_1\, x_2\, x_3],$ $R_2[\phi\, \upsilon\, x_5\, x_4\, x_6\, x_7]\,\}$, which is in BCNF, preserves $\Sigma^\looparrowright\!$ and has a lossless join. Now, let us take a slightly different fd set $\Gamma \triangleq \Sigma \cup \{x_1\,x_9\,\upsilon \!\to\! x_8,\; x_2\,x_8\,\upsilon \!\to\! x_9\}$. By applying $\boldsymbol R^\prime \triangleq$ \textsf{synthesize}($\Gamma^\looparrowright\!$), we get $\boldsymbol R^\prime = \boldsymbol R \cup \{R_3[\phi\, \upsilon\, x_8, x_9]\}$, which is in BCNF, preserves $\Sigma^\looparrowright\!$ but does not have a lossless join. It turns out that $\Gamma^\looparrowright\!$ ``embeds'' two subsets of strongly coupled variables (attributes), viz. $\!\{x_4, x_5, x_6, x_7\}$ and $\{x_8, x_9\}$ that are not ``causally connected'' to each other. 
$\Box$
\label{ex:lossless}
\end{myex}

\begin{myconj}
The lossless join property is reducible to the structure $\mathcal S$ given as input to the pipeline. 
\label{conj:lossless}
\end{myconj}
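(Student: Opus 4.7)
The plan is to trace the characterization in Proposition~\ref{prop:lossless} all the way back through \textsf{synthesize}, \textsf{folding}, and \textsf{h-encode} to a purely structural property of $\mathcal S$. First I would observe that each key $X$ of a relation $R_i[XZ] \in \boldsymbol R = \textsf{synthesize}(\Sigma^\looparrowright)$ arises as the common lhs (under (R4) union) of fd's in $\Sigma^\looparrowright$, so $X$ is an attribute folding $A^\looparrowright$ for some $A \in Z$. By Theorem~\ref{thm:afolding}, $A^\looparrowright$ is uniquely determined by the acyclic pseudo-transitive backtrace from $A$ in $\Sigma \triangleq \textsf{h-encode}(\mathcal S)$, and by the construction of \textsf{h-encode} that trace is precisely the set of causal ancestors of $A$ in the directed causal graph $G_{\varphi_t}$ associated with $\textsf{COA}_t(\mathcal S)$, with strongly coupled peers retained as survivors of the folding (cf.\ Example~\ref{ex:degenerate}).

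Next I would define, at the level of $\mathcal S$, the condensation $\widetilde G_\mathcal S$ of $G_{\varphi_t}$ obtained by collapsing each maximal strongly coupled component into a single node; this yields a DAG whose nodes we call \emph{blocks}. Because Dash and Druzdzel have shown that the partition into strongly coupled components is invariant under the arbitrary tie-breaking performed by $\textsf{COA}_t$, so is $\widetilde G_\mathcal S$ --- it is a genuine invariant of $\mathcal S$ and hence a legitimate ``input-side'' object. I would then show that the criterion of Proposition~\ref{prop:lossless} (``$X \to U$ or $X \subset Y$ for some other relation'') translates, at the block level, into the requirement that the blocks form a \emph{chain} under the partial order ``ancestor-set containment modulo $\{\phi, \upsilon\}$,'' with a unique maximal block whose key is a superkey for $U$.

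With this translation in hand, the proof of the conjecture reduces to showing: $\boldsymbol R$ has a lossless join iff $\widetilde G_\mathcal S$ admits a unique sink-block whose block-ancestor set contains the block-ancestor set of every other nontrivial block, together with a boundary condition on exogenous variables handled by the fd's $\phi \!\to\! x_i$. Example~\ref{ex:lossless} is the canonical failure mode: introducing the SCC $\{x_8, x_9\}$ whose ancestors $\{x_1, x_2\}$ are incomparable with those of the other nontrivial SCC $\{x_4, x_5, x_6, x_7\}$ (namely $\{x_1, x_2, x_3\}$) breaks the chain structure and thus lossless join, while the original $\Sigma$ in Fig.~\ref{fig:h-fdschema} has a single nontrivial block and trivially forms a chain.

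The main obstacle, I expect, will be the careful bookkeeping for exogenous attributes and the special roles of $\phi$ and $\upsilon$: since every endogenous folding includes $\upsilon$ and the phenomenon id $\phi$ is shared across all blocks, the ``modulo'' in the chain condition needs to be stated precisely enough that both directions of the equivalence go through. A secondary but technically delicate point is verifying formally that \textsf{synthesize}'s merging of fd's with equivalent lhs does not disturb the block structure, so that relations in $\boldsymbol R$ correspond bijectively to blocks of $\widetilde G_\mathcal S$ (plus possibly a leftover exogenous-only relation rooted at $\phi$). Once these two issues are pinned down, the conjecture follows from a direct inspection of $\widetilde G_\mathcal S$ --- a decidable, polynomial-time property computed entirely from $\mathcal S$.
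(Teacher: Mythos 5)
First, be aware that the paper does not prove this statement: it is posed explicitly as a conjecture, and the appendix material devoted to it (\S\ref{a:conjecture}) offers only informal comments --- the authors say the reduction to $\mathcal S$ is left to future work, note that the lossless join is at least decidable in polynomial time on the synthesized schema via Proposition~\ref{prop:lossless}, and report that all hypotheses in their use-case study happened to satisfy the property. So there is no proof in the paper to compare yours against; you are being asked to supply an argument the authors themselves did not.

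Second, your proposal is likewise a plan rather than a proof, and the gap sits exactly where you flag it. The central claim --- that the criterion of Proposition~\ref{prop:lossless} is equivalent to a chain condition on the condensation $\widetilde G_{\mathcal S}$ of the causal graph --- is asserted, not argued; both directions of that equivalence \emph{are} the content of the conjecture, and ``once these two issues are pinned down'' is doing all the work. Moreover, one auxiliary claim is false as stated: relations of $\boldsymbol R$ do not correspond bijectively to blocks of $\widetilde G_{\mathcal S}$. In Example~\ref{ex:lossless}, $x_6$ and $x_7$ are singleton blocks of the condensation (neither is strongly coupled with anything), yet they are absorbed into $R_2[\phi\,\upsilon\,x_5\,x_4\,x_6\,x_7]$ because their foldings $\phi\,\upsilon\,x_5$ and $\phi\,\upsilon\,x_4$ are key-equivalent to those of the coupled pair $\{x_4,x_5\}$; the correct correspondence is between relations and equivalence classes of foldings under $X \leftrightarrow Y$, which groups a strongly coupled component together with every descendant whose folding collapses back onto it (and, separately, all exogenous attributes onto the single $\phi$-keyed relation). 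Until that correspondence is stated correctly and the two directions of the block-level equivalence are actually proved --- with the $\phi$/$\upsilon$ bookkeeping you mention --- what you have is a plausible reduction target and a correctly diagnosed failure mode, not a resolution of the conjecture.
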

\noindent
We comment on Conjecture \ref{conj:lossless} in some detail in \S\ref{a:conjecture}.
In the converse direction, we bring in Def. \ref{def:exo-endo} SEM's concepts into data dependency language.

\begin{mydef}
Let $H[XZ]$ be a relation with key constraint $X \!\to Z$. We say that $X \!\to Z$ is a \textbf{$\boldsymbol \phi$-fd} over \textbf{exogenous} attributes $Z$ (and exogenous relation $H[XZ]$) if $\upsilon \notin X$. We say that it is an \textbf{$\boldsymbol \upsilon$-fd} over \textbf{endogenous} attributes $Z$ (and endogenous relation $H[XZ]$) otherwise.
\label{def:exo-endo}
\end{mydef}

\section{Synthesis `4U'}\label{sec:synthesis4u}
\vspace{-2pt}
\noindent
In this section we present a technique to address Problem \ref{prob:synthesis4u}. At this stage of the pipeline, relational schema $\boldsymbol H$ has been synthesized and datasets computed from the hypotheses under alternative trials (input settings) are loaded into it. The challenge now is how to render the U-relations $\boldsymbol Y$. 
Before proceeding, we consider Example \ref{ex:population}, which is admittedly small but fairly representative to illustrate how to deal with correlations in the predictive data of deterministic hypotheses.

\begin{figure*}[t]
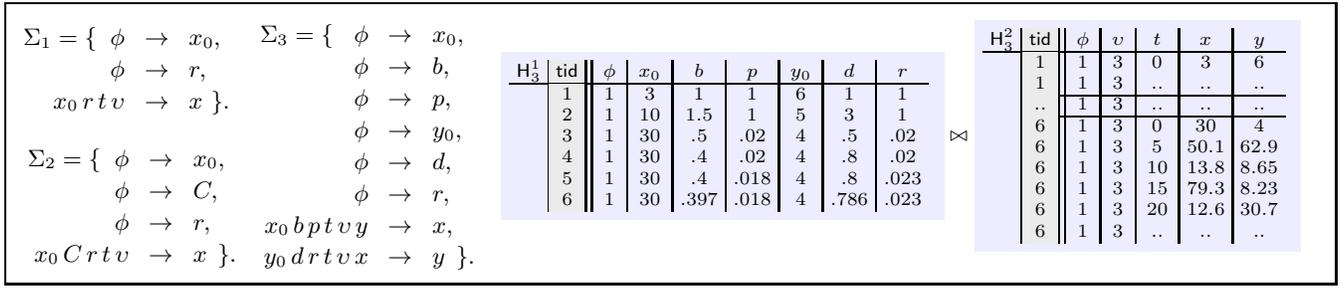

\footnotesize
\begin{framed}
\vspace{-9pt}
\begin{subfigure}{0.125\columnwidth}
\begin{eqnarray*}
\!\Sigma_1 = \{\;\; \phi &\!\to\!& x_0,\\ 
\phi &\!\to\!& r,\\ 
\!\!x_0\,r\,t\,\upsilon &\!\to\!& x \,\,\}.
\end{eqnarray*}
\begin{eqnarray*}
\Sigma_2 = \{\;\; \phi &\!\to\!& x_0,\\ 
\phi &\!\to\!& C,\\ 
\phi &\!\to\!& r,\\ 
x_0\,C\,r\,t\,\upsilon &\!\to\!& x \;\,\}.
\end{eqnarray*}
\end{subfigure}
\hspace{15pt}
\begin{subfigure}{0.2\columnwidth}
\begin{eqnarray*}
\Sigma_3 = \{\;\;\; \phi &\!\to\!& x_0,\\ 
\phi &\!\to\!& b,\\
\phi &\!\to\!& p,\\
\phi &\!\to\!& y_0,\\ 
\phi &\!\to\!& d,\\ 
\phi &\!\to\!& r,\\
x_0\,b\,p\,t\,\upsilon\,y &\!\to\!& x,\\ 
y_0\,d\,r\,t\,\upsilon\,x &\!\to\!& y \;\,\}.
\end{eqnarray*}
\end{subfigure}
\begin{subfigure}{1.38\columnwidth}
\scriptsize
\begingroup\setlength{\fboxsep}{3pt}
\colorbox{blue!7}{%
   \begin{tabular}{c|>{\columncolor[gray]{0.92}}c||c|c|c|c|c|c|c}
  $\!\!$\textsf{H}$_3^1\!\!$ & $\!\!$\textsf{tid}$\!\!$ & $\!\phi\!$ & $\!x_0\!$ & $b$ & $p$ & $\!y_0\!$ & $d$ & $r$\\
      \hline
   & $\!\!1\!\!$ & $\!1\!$ & $\!3\!$ & $\!\!1\!\!$ & $\!\!1\!\!$ & $\!\!6\!\!$ & $\!\!1\!\!$ & $\!\!1\!\!$\\
   & $\!\!2\!\!$ & $\!1\!$ & $\!10\!$ & $\!\!1.5\!\!$ & $\!\!1\!\!$ & $\!\!5\!\!$ & $\!\!3\!\!$ & $\!\!1\!\!$\\
   & $\!\!3\!\!$ & $\!1\!$ & $\!30\!$ & $\!\!.5\!\!$ & $\!\!.02\!\!$ & $\!\!4\!\!$ & $\!\!.5\!\!$ & $\!\!.02\!\!$\\
   & $\!\!4\!\!$ & $\!1\!$ & $\!30\!$ & $\!\!.4\!\!$ & $\!\!.02\!\!$ & $\!\!4\!\!$ & $\!\!.8\!\!$ & $\!\!.02\!\!$\\
   & $\!\!5\!\!$ & $\!1\!$ & $\!30\!$ & $\!\!.4\!\!$ & $\!\!.018\!\!$ & $\!\!4\!\!$ & $\!\!.8\!\!$ & $\!\!.023\!\!$\\
   & $\!\!6\!\!$ & $\!1\!$ & $\!30\!$ & $\!\!.397\!\!$ & $\!\!.018\!\!$ & $\!\!4\!\!$ & $\!\!.786\!\!$ & $\!\!.023\!\!$\\
   \end{tabular}
}\endgroup
$\;\bowtie\!\!$
\begingroup\setlength{\fboxsep}{3pt}
\colorbox{blue!7}{%
   \begin{tabular}{c|>{\columncolor[gray]{0.92}}c||c|c|c|c|c}
  $\!\!$\textsf{H}$_3^2\!\!$ & $\!\!\textsf{tid}\!\!$ & $\!\phi\!$ & $\!\upsilon\!$ & $\! t \!$ & $x$ & $y$\\
      \hline    
   & $\!\!1\!\!$ & $\!1\!$ & $\!3\!$ & $\!0\!$ & $\!\!3\!\!$ & $\!\!6\!\!$\\
   & $\!\!1\!\!$ & $\!1\!$ & $\!3\!$ & $\!..\!$ & .. & ..\\
   \cline{2-7}
   & $\!\!..\!\!$ & $\!1\!$ & $\!3\!$ & $\!..\!$ & .. & ..\\
   \cline{2-7}
   & $\!\!6\!\!$ & $\!1\!$ & $\!3\!$ & $\!0\!$ & $\!\!30\!\!$ & $4$\\
   & $\!\!6\!\!$ & $\!1\!$ & $\!3\!$ & $\!5\!$ & $\!\!50.1\!\!$ & $\!\!62.9\!\!$\\
   & $\!\!6\!\!$ & $\!1\!$ & $\!3\!$ & $\!10\!$ & $\!\!13.8\!\!$ & $\!\!8.65\!\!$\\
   & $\!\!6\!\!$ & $\!1\!$ & $\!3\!$ & $\!15\!$ & $\!\!79.3\!\!$ & $\!\!8.23\!\!$\\
   & $\!\!6\!\!$ & $\!1\!$ & $\!3\!$ & $\!20\!$ & $\!\!12.6\!\!$ & $\!\!30.7\!\!$\\
   & $\!\!6\!\!$ & $\!1\!$ & $\!3\!$ & $\!..\!$ & .. & ..\\
   \end{tabular}
}\endgroup
\end{subfigure}\vspace{5pt}\\
\vspace{-9pt}
\end{framed}
\vspace{-8pt}
\caption{Resources from Example \ref{ex:population}. (Left). Primitive fd sets extracted from the given structures $\mathcal S_k(\mathcal E_k,\, \mathcal V_k)$ for $k=1..3$. (Right). \emph{Certain} relations $\boldsymbol H_3\!=\! \{ H_3^1,\; H_3^2 \}$ of hypothesis $\upsilon\!=\!3$ loaded with trial datasets identified by special attribute \textsf{tid}.}
\label{fig:population}
\vspace{-3pt}
\end{figure*}

\begin{myex}
We explore three slightly different theoretical models in population dynamics with applications in Ecology, Epidemics, Economics, etc: (\ref{eq:malthus}) Malthus' model, (\ref{eq:logistic})$\!$ the logistic equation and (\ref{eq:lotka-volterra}) the Lotka-$\!$Volterra model. In practice, such equations are meant to be extracted from \textsf{MathML}-compliant XML files (cf. \S\ref{subsec:applicability}). For now, consider that the ordinary differential equation notation `$\dot{x}$' is read `variable $x$ is a function of time $t$ given initial condition $x_0$.$\!$' 
\begin{eqnarray}
\dot{x}=rx
\label{eq:malthus}
\end{eqnarray}
\vspace{-18pt}
\begin{eqnarray}
\dot{x}=r(C-x)x
\label{eq:logistic}
\end{eqnarray}
\vspace{-18pt}
\begin{eqnarray}
\left\{ 
  \begin{array}{lll}
\dot{x} &=& x(b - py)\\
\dot{y} &=& y(rx - d)
\end{array} \right.
\label{eq:lotka-volterra}
\end{eqnarray}
The models are completed (by the user) with additional equations to provide the values of exogenous variables (or ``input parameters''),\footnote{Given $\mathcal S(\mathcal E, \mathcal V)$, it is actually a task of the encoding algorithm (viz., \textsf{COA}$_t$'s) to infer whether its variables $x \in \mathcal V$ are exogenous or endogenous by processing its causal ordering.} e.g., $x_0\!=\!200,\, r\!=\!10$, such that we have structures $\mathcal S_k(\mathcal E_k, \mathcal V_k)$, for $k=1..3$,
\begin{itemize}
\item $\mathcal E_1 \!=\! \{\, f_1(t),\; f_2(x_0),\; f_3(r),\; f_4(x, t, x_0, r) \,\}$;
\item $\mathcal E_2 \!=\! \{ f_1(t),\, f_2(x_0),\, f_3(C),\, f_4(r),\, f_5(x, t, x_0, C, r) \}$;
\item $\mathcal E_3 \!=\! \{\, f_1(t),\; f_2(x_0),\; f_3(b),\; f_4(p),\; f_5(y_0),\; f_6(d),$\vspace{2pt}\\ 
			$f_7(r),\; f_8(x, t, x_0, b, p, y),\; f_9(y, t, y_0, d, r, x) \,\}$. 
\end{itemize}

\noindent
Fig. $\!$\ref{fig:population} shows the fd sets encoded from structures $\mathcal S_k$ above.\footnote{Domain variables like time $t$ require a special treatment by \textsf{h-encode} to suppress an fd $\phi \!\to t$. This is coped with by providing it an additional argument $\ell$ informing that the ($\ell\!\times\!\ell$)-first block of matrix $A_S$ is kept for domain variables.}
We shall also consider trial datasets for hypothesis $\upsilon\!=\!3$ (viz., the Lotka-Volterra model), which are loaded into the synthesized (certain) schemes in $\boldsymbol H_3$ as shown in Fig. $\!$\ref{fig:population}. Note that the fd's in $\Sigma_3$ are violated by relations $H_3^1, \,H_3^2$, but we admit a special attribute `trial id' \textsf{tid} into their key constraints for a trivial repair (provisionally, yet at the \textsf{ETL} stage of the pipeline) until uncertainty is introduced in a controlled way by synthesis `4U' (\textsf{U-intro} stage, cf. Fig. $\!$\ref{fig:pipeline}).  $\Box$
\label{ex:population}
\end{myex}

\noindent
Given \emph{certain} relations $\boldsymbol H$, synthesis `4U' has two parts: process the uncertainty of exogenous relations (\emph{u-factor\-}\emph{ization}) and of endogenous relations (\emph{u-propagation}).

\subsection{U-Factorization}\label{subsec:u-factors}
\vspace{-2pt}
\noindent
As we have seen in \S\ref{subsec:u-relations}, the \textsf{repair-key} operation allows one to create a discrete random variable in order to repair an argument key in a given relation. Our goal here is to devise a technique to perform such operation in a principled way for hypothesis management. It is a basic design principle to have exactly one random variable for each distinct uncertainty factor (`u-factor' for short), which requires carefully identifying the actual sources of uncertainty present in relations $\boldsymbol H$. 

The multiplicity of (competing) hypotheses is itself a standard one, viz., the \emph{theoretical} u-factor. Consider an `explanation' table like $H_0$ in Fig. $\!$\ref{fig:maybms}, which stores (as foreign keys) all hypotheses available and their target phenomena. We can take such $H_0$ as explanation table for the three hypotheses of Example \ref{ex:population}. Then a discrete random variable \textsf{x$_0$} (not to be confused with variable a $x_0 \!\in\! \mathcal V$) is defined into $Y_0[\,V_0D_0\,|\,\phi\, \upsilon\,]$ by query formula (\ref{eq:explanation}). U-relation $Y_0$ is considered standard in synthesis `4U,' as the repair of $\phi$ as a key in (standard) $H_0$.

Hypotheses, though, are (abstract) `universal state\-ments' \cite{losee2001}. In order to produce a (concrete) valuation over their endogenous attributes (predictions), one has to inquire into some particular `situated' phenomenon $\phi$ and tentatively assign a valuation over the exogenous attributes, which can be eventually tuned for a target $\phi$. The multiplicity of such (competing) empirical estimations for a hypothesis $k$ leads to Problem \ref{prob:u-learning}, viz., learning \emph{empirical} u-factors for each  $\boldsymbol H_k \subseteq \boldsymbol H$.

\begin{myprob}
Let $H_k^\ell[XZ] \in \boldsymbol H_k$ be an exogenous relation with key constraint $X \!\to Z$. Once $H_k^\ell$ is loaded with trial data, the problem of \textbf{u-factor learning} is:
\begin{enumerate}
\item to \textbf{infer} in $H_k^\ell$ ``casual'' fd's $B_i \!\leftrightarrow\! B_j \notin \Sigma_k\!$ (strong input correlations), 
where $B_i,\, B_j \in Z$; 
\item to form maximal \textbf{groups} $G_1,\,...,\,G_n \subseteq Z$ of attributes such that for all $B_i,\, B_j \in G_a$, the casual fd's $B_i \!\leftrightarrow\! B_j$ hold in $H_k^\ell$;
\item to pick, for each group $G_a$, any $A \in G_a$ as a \textbf{pivot} representative and insert $A \!\to B$ into an fd set $\Gamma_k$ for all $B \in (G_a \setminus A)$.
\end{enumerate}
\label{prob:u-learning}
\vspace{-6pt}
\end{myprob}

\noindent
Problem \ref{prob:u-learning} is dominated by the (problem of) discovery of fd's in a relation, which is not really a new problem (e.g., see \cite{huhtala1999}). We then keep focus on the synthesis `4U' as a whole and omit our detailed \textsf{u-factor-learning} algorithm in particular. Its output, fd set $\Gamma_k$, is then filled in (completed) with the $\upsilon$-fd's from $\Sigma_k$.

For illustration consider hypothesis $\upsilon\!=\!3$ and its trial input data recorded in $H^1_3$ in Fig. $\!$\ref{fig:population}. We show its corresponding fd set $\Gamma_3$ in Fig. $\!$\ref{fig:y-fdschema} (left). Recall that, as a result of synthesis `4C,' relation $H^1_3$ is in BCNF w.r.t. $\!\Sigma_3^\looparrowright\!$. Since its attributes have been inferred exogenous in the given hypothesis (cf. $\!$Proposition $\!$\ref{prop:exo-endo}), they are then officially unrelated. In fact, by ``casual'' fd's we mean correlations that, for a set of experimental trials, may occasionally show up in the trial input data --- e.g., $x_0 \leftrightarrow y_0$ hold in $H^1_3$, but not because $x_0$ and $y_0$ are related in principle (theory).

\begin{figure}[t]
\begin{framed}
\vspace{-11pt}
\begin{subfigure}{0.45\columnwidth}
\begin{eqnarray*}
\Gamma_3 = \{\;\;\; x_0 &\to& y_0,\\ 
b &\to& d,\\ 
p &\to& r,\\ 
x_0\,b\,p\,t\,\upsilon\,y &\!\to\!& x,\\ 
y_0\,d\,r\,t\,\upsilon\,x &\!\to\!& y \;\;\;\}.
\end{eqnarray*}
\end{subfigure}
\hspace{12pt}
\begin{subfigure}{0.45\columnwidth}
\begin{eqnarray*}
\Gamma_3^\looparrowright  = \{\;\;\; x_0 &\to& y_0,\\ 
b &\to& d,\\ 
p &\to& r,\\ 
x_0\,b\,p\,t\,\upsilon\,y &\!\to\!& x,\\ 
x_0\,b\,p\,t\,\upsilon\,x &\!\to\!& y \;\;\;\}.
\end{eqnarray*}
\end{subfigure}
\end{framed}
\vspace{-11pt}
\caption{Fd set $\Gamma_3$ (compare with $\Sigma_3$) and its folding $\Gamma_3^\looparrowright\!$.}
\label{fig:y-fdschema}
\vspace{-6pt}
\end{figure}

Once fd set $\Gamma_k$ is output by u-factor learning, its folding $\Gamma_k^\looparrowright\!$ shall be given with $\boldsymbol H_k$ as input to accomplish u-factorization for hypothesis $k$ algorithmically. We shall employ a notion of u-factor decomposition formulated in Def. $\!$\ref{def:u-factor} into query formula (\ref{eq:u-factor}).

\begin{mydef}
\label{def:u-factor}
$\!$Let $R[A_p W] \!\in\! \boldsymbol R$ be an exogenous scheme\vspace{-2pt} with \mbox{$\boldsymbol R\!\triangleq$ \textsf{synthesize}$(\Phi)$} designed over subset $\Phi\!$ of $\phi$-fd's in\vspace{-1pt} $\Gamma_k^\looparrowright\!$; and let $H_k^\ell[X_\ell L] \in \boldsymbol H_k$ be an exogenous relation with (violated) key constraint $\langle X_\ell , L\rangle \in \Sigma_k^\looparrowright\!$ with $A_pW \!\subseteq L$. $\!$Then the exogenous U-relation $Y_k^p[V_pD_p\,|\,X_\ell A_p]$ for sketched scheme $R[A_p W]$ is defined by query formula (\ref{eq:u-factor}) in p-WSA's extension of relational algebra, 
\begin{eqnarray}
\label{eq:u-factor}
\! Y_k^p := \pi_{X_\ell A_p}( \textsf{repair-key}_{X_\ell @\textsf{\emph{count}}} (\,\gamma_{X_\ell A_p,\,\textsf{\emph{count}}(*)}(H_k^\ell)\,)\,)
\end{eqnarray}
where $\gamma$ is relational algebra's grouping operator. Let\vspace{1pt} $G_a = A_p W$. We say that $Y_k^p$ is a \textbf{u-factor projection} of $H_k^\ell[X_\ell L]$ if $A_p \leftrightarrow B$ hold in $H_k^\ell$ for all $B \in G_a$ and for no $C \in (L \setminus G_a)$ we have $C \to B$ or\vspace{1.5pt} $B \to C$. 
\end{mydef}

\begin{myprop}
\label{prop:u-factor}
$\!$Let $H_k^\ell[X_\ell L] \!\in \boldsymbol H_k$ be an exogenous relation with (violated) key constraint $\langle X_\ell, \!L\rangle \!\in\! \Sigma_k^\looparrowright\!$. $\!$Then,$\!$
\begin{itemize}
\item[(a)] for any pair $Y_k^i[\,V_iD_i\,|\,X_\ell A_i],\; Y_k^j[\,V_jD_j\,|\,X_\ell A_j]$ of u-factor projections of $H_k^\ell$, they are independent.
\item[(b)] the join $\bowtie_{i=1}^m \! Y_k^i[\,V_iD_i\,|\,X_\ell A_i\,]$ of all u-factor projections  of $H_k^\ell$ is lossless w.r.t. $\!\pi_{X_\ell A_1\,A_2\,...\,A_m}(\Sigma_k^\looparrowright\!)$.
\end{itemize}
\end{myprop}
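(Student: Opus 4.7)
\medskip\noindent\textbf{Proof plan.} The two claims require rather different techniques, so I would treat them separately.

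For part (a), the approach is to exploit the semantics of \textsf{repair-key} (Def.~\ref{def:repair-key}) together with the maximality clause in the definition of a u-factor projection (Def.~\ref{def:u-factor}). Each application of \textsf{repair-key} in formula (\ref{eq:u-factor}) introduces a \emph{fresh} discrete random variable, recorded in condition columns $V_iD_i$; thus for $i\neq j$ the variables underlying $Y_k^i$ and $Y_k^j$ are distinct. What remains is to show these variables are \emph{stochastically independent}, not merely syntactically disjoint. I would argue as follows: since $A_i\in G_a$ and $A_j\in G_b$ are pivots of distinct maximal correlation groups, the u-factor-projection clause forbids any casual fd of the form $A_i\leftrightarrow A_j$ in $H_k^\ell$; hence the group-by-and-count pre-steps $\gamma_{X_\ell A_i,\text{\emph{count}}(*)}(H_k^\ell)$ and $\gamma_{X_\ell A_j,\text{\emph{count}}(*)}(H_k^\ell)$ yield, for each fixed value of $X_\ell$, marginal distributions on $A_i$ and $A_j$ that factorize. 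Applying the probability formula in Def.~\ref{def:repair-key} to both repairs independently shows that the probability of any joint assignment $(\textsf{x}_i\mapsto a,\,\textsf{x}_j\mapsto b)$ equals the product of the individual \textsf{repair-key}-induced probabilities, which is the definition of independence of the introduced random variables.

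For part (b), I would use a standard chase-style argument (Heath's theorem and its iteration). By construction $Y_k^i$ has schema $X_\ell A_i$, and since pivots $A_i, A_j$ come from disjoint maximal groups, $Y_k^i\cap Y_k^j=X_\ell$ for all $i\neq j$. The (designated) key constraint $X_\ell\to L$ in $H_k^\ell$ projects onto $\pi_{X_\ell A_1\cdots A_m}(\Sigma_k^\looparrowright)$ to yield $X_\ell\to A_1\cdots A_m$, which implies $X_\ell\to A_i$ for each $i$. Applying Heath's theorem to $Y_k^1\bowtie Y_k^2$ (with common attributes $X_\ell$ functionally determining one side) yields losslessness of the binary join; induction on $m$, using that $X_\ell$ remains a key of the intermediate result, completes the argument for the $m$-ary join.

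The main obstacle I expect is in (a), specifically in making precise the notion of ``independence of u-factor projections'' within the p-WSA/U-relational framework, which the paper does not appear to have defined explicitly up to this point. I would need to fix a semantics (for instance, by saying $Y_k^i$ and $Y_k^j$ are independent when, for every pair of possible worlds they jointly instantiate, the world probabilities multiply) and then verify that the fresh-variable guarantee of \textsf{repair-key} combined with the disjointness and maximality of the groups $G_a,G_b$ discharges this condition. Once independence is formalized properly, part (b) reduces to a textbook exercise on lossless-join decomposition.
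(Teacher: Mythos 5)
Your part (b) is essentially the paper's own argument: the paper invokes Ullman's binary lossless-join criterion (its Lemma~\ref{lemma:lossless}, which is Heath's theorem), observes that $Y_k^i \cap Y_k^j = X_\ell$ and that $X_\ell \to A_i$ lies in the projected fd set, and then extends to the $m$-ary join ``from the associativeness of the join.'' Your explicit induction on $m$, keeping track that $X_\ell$ remains a key of the intermediate result, is in fact a little more careful than the paper's one-line appeal to associativity, so no complaint there.

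Part (a) is where you genuinely diverge, and you correctly sensed why: the paper never defines ``independent,'' and you chose the probabilistic reading while the paper's proof reveals that the authors mean the \emph{dependency-theoretic} one. The paper's proof of (a) opens with ``Suppose not. Then we have either $A_i \to A_j$ or $A_j \to A_i$,'' i.e.\ independence of two u-factor projections is taken to mean that no fd holds between their pivots; the contradiction is then derived purely from Def.~\ref{def:u-factor} (a pivot $A_i$ admits no fd to or from any $C \in L \setminus G_i$) together with the maximality and disjointness of the groups from Problem~\ref{prob:u-learning}. Your stochastic argument is not wrong, but it proves a different statement, and it also conflates two things: in the U-relational/world-set semantics the random variables $\textsf{x}_i, \textsf{x}_j$ introduced by two separate \textsf{repair-key} applications are independent \emph{by construction} (world probabilities are products of marginals in $W$), so no factorization argument on the grouped counts is needed for that; what the group structure actually buys is the \emph{justification} that modeling $A_i$ and $A_j$ by independent variables does not discard correlation present in the data --- which is the dependency-theoretic fact the paper proves. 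If you adopt the paper's reading, your ``main obstacle'' dissolves and (a) becomes the short contradiction argument above; if you insist on the probabilistic reading, you should separate the by-construction independence of the introduced variables from the adequacy claim, and the latter is exactly the paper's fd statement.
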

\begin{proof}
See Appendix, \S\ref{a:u-factor}.
\end{proof}

\vspace{3pt}
Proposition \ref{prop:u-factor} is significant as it ensures all the empirical uncertainty implicit in an exogenous relation can be decomposed into u-factor projections that are (a) in fact independent, to do justice to the term `factors,' and (b) and can be fully recovered by a lossless join. 
The u-factorization procedure given fd set $\Gamma_k^\looparrowright\!$ and relations $\boldsymbol H_k$ is described by (Alg. \ref{alg:synthesize4u}) \textsf{synthesize4u} (Part I).

\vspace{2pt}
\subsection{U-Propagation}\label{subsec:u-propagation}
\vspace{-2pt}
\noindent
For hypothesis $\upsilon\!=\!k$, take some endogenous attribute $B$, and then by Def. $\!$\ref{def:exo-endo} and the parsimony assumption we must have exactly one $\upsilon$-fd $S \!\to B$ in $\Gamma_k^\looparrowright\!\!$. Then note that the u-factors with `incidence' on $B$ are in $S$. In comparison with synthesis `4C,' we have just unfolded $B$'s ``causal chain'' out of its compact form in $\Sigma_k^\looparrowright\!$ and re-folded it\vspace{-1pt} fine-grained (over u-factor pivots) into $\Gamma_k^\looparrowright\!$. 

Note that the $\upsilon$-fd's in $\Gamma_k^\looparrowright\!\!$, of form $S \!\to B$, are meant for u-propagation. Each pivot attribute $A_j \in S$ shall be used as a surrogate to its associated random variable $\textsf{x}_j$ from exogenous U-relation $Y_k^j[V_jD_j \,|\,X_\ell A_j]$ to propagate uncertainty properly into endogenous U-relations\vspace{-1pt} $Y_k^r[\overline{V_iD_i} \,|\,ZT]$, for $B \in T$ and each $A_j \in S \setminus Z$. This intuition is abstracted into a general p-WSA query formula (\ref{eq:u-propagation}) as given in Def. $\!$\ref{def:u-propagation}, and employed in (Alg. \ref{alg:synthesize4u}) \textsf{synthesize4u} to accomplish u-propagation (Part II).

\begin{algorithm}[t]\footnotesize
\caption{Synthesis `4U' applied over folding fd set.}
\label{alg:synthesize4u}
\begin{algorithmic}[1]
\Procedure{synthesize4u}{$\Gamma_k^\looparrowright\!\!: \text{fd set}$, $\boldsymbol H_k\!: \text{DB}$}
\Require $\Gamma_k^\looparrowright\!$ is the folding of parsimonious fd set $\Gamma_k$\vspace{1pt}
\Ensure U-relational DB $\boldsymbol Y_k$ returned is $\boldsymbol H_k$ after U-intro
\State $\Phi \gets \varnothing,\; \Upsilon \gets \varnothing$
\ForAll{$\langle X, B\rangle \in \Gamma_k^\looparrowright\!$}
\If{$\upsilon \notin X$}
\State $\Phi \gets \Phi  \cup \langle X, B\rangle$ \Comment{$\phi$-fd over exogenous $B$}
\Else{\State $\Upsilon \gets \Upsilon  \cup \langle X, B\rangle$} \Comment{$\upsilon$-fd over endogenous $B$}
\EndIf
\EndFor
\vspace{-2pt}
\Statex{}
\hrulefill
\vspace{2pt}
\Statex{Part I: \textbf{U-factorization}}
\vspace{2pt}
\State $\mathsf M \gets \varnothing$ \Comment{store u-factor projection mappings}
\State $\boldsymbol R \gets \;\textsf{synthesize}(\Phi)\!$ \Comment{design BCNF exog. schemes}\vspace{3pt}
\ForAll{$R[A_p W] \in \boldsymbol R$}
\State find exog. $H_k^\ell[X_\ell L] \in \boldsymbol H_k\;\text{such that}\; A_p \in L$
\State $Y_k^i \gets \pi_{X_\ell A_p}( \textsf{repair-key}_{X_\ell @\textsf{\emph{count}}} (\,\gamma_{X_\ell A_p,\,\textsf{\emph{count}}(*)}(H_k^\ell)\,)\,)$
\State $\boldsymbol Y_k \gets \boldsymbol Y_k \cup Y_k^i$
\If{$H_k^\ell \in \mathsf M$} \Comment{save mapping for further ref.}
\State $\mathsf M(H_k^\ell) \gets \mathsf M(H_k^\ell) \cup \{Y_k^i\}$
\Else
\State $\mathsf M \gets \mathsf M \cup \langle H_k^\ell, \{Y_k^i\} \rangle$
\EndIf
\EndFor
\vspace{-2pt}
\Statex{}
\hrulefill
\vspace{2pt}
\Statex{Part II: \textbf{U-propagation}}
\vspace{2pt}
\State $\boldsymbol R \gets \;\textsf{synthesize}(\Upsilon)\!$ \Comment{design BCNF endog. schemes}\vspace{3pt}
\ForAll{$R[ST] \in \boldsymbol R$}
\State $X \gets \varnothing$, $J \gets \varnothing$ \Comment{prepare for join sub-query}
\ForAll{$H_k^\ell[X_\ell\, L] \in \mathsf M$}\vspace{2pt}
\If{$L \cap S \neq \varnothing$} 
\State $J \gets J \bowtie H_k^\ell$
\State $X \gets X \cup X_\ell$
\ForAll{$Y_k^i[X_\ell A_p] \in \mathsf M(H_k^\ell)$}
\If{$A_p \in S$} \Comment{$Y_k^i$ is a u-factor for $R$}
\State $J \gets J \bowtie Y_k^i$
\EndIf
\EndFor
\EndIf
\EndFor
\State find $H_k^q[Z_q V] \in \boldsymbol H_k$ such that $V \supseteq T$\vspace{1pt}
\State $Y_k^r \gets \pi_{Z_q T}(\, \sigma_{\upsilon=k}(Y_0) \bowtie \pi_{\textsf{tid},X}(J) \bowtie \pi_{\textsf{tid},Z_q T}(H_k^q)\,) \!\!\!$\vspace{2pt}
\State $\boldsymbol Y_k \gets \boldsymbol Y_k \cup Y_k^r$
\EndFor
\State \Return $\boldsymbol Y_k$
\EndProcedure
\end{algorithmic}
\end{algorithm}

\begin{mydef}
\label{def:u-propagation}
$\!$Let $R[ST] \!\in\! \boldsymbol R$ be an endogenous scheme,\vspace{-2pt} $\!$designed \mbox{$\boldsymbol R \triangleq$ \textsf{synthesize}$(\Upsilon)$} over subset $\Upsilon$ of $\upsilon$-fd's in $\Gamma_k^\looparrowright\!$; and let $H_k^q[Z_q V] \in \boldsymbol H_k$ be an endogenous relation\vspace{-.75pt} with (violated) key constraint $\langle Z_q , V\rangle \in \Sigma_k^\looparrowright\!$ such that $V \supseteq T$.
 $\!$Then the endogenous U-relation $Y_k^{q}[\overline{V_iD_i} \,|\,Z_q T]$ for sketched scheme $R[ST]$ is\vspace{1pt} defined by formula (\ref{eq:u-propagation}),
\begin{eqnarray}
\label{eq:u-propagation}
Y_k^{r} :=\, \pi_{Z_q T}(\, \sigma_{\upsilon=k}(Y_0) \bowtie \pi_{\textsf{tid},X}(J) \bowtie \pi_{\textsf{tid},Z_q T}(H_k^q)\,)
\end{eqnarray}
\noindent
where $J$ is a join sub-query defined over mapping $\mathsf M$\vspace{1pt} from exogenous relations to sets of exogenous U-relations:
\begin{itemize}
\item[(a)] $\!$for $H_k^\ell \in \boldsymbol H_k$, we have $H_k^\ell[X_\ell L]  \in\mathsf M$ iff $L \cap S \neq \varnothing$;\vspace{-3pt}
\item[(b)] $\!$we take $X = \bigcup_{H_k^\ell \in \mathsf M}X_\ell$;\vspace{-5pt}
\item[(c)] $\!$we have $Y_k^j[V_jD_j | X_\ell A_j] \in \mathsf M(H_k^\ell)$ iff $Y_k^j$ is a u-factor projection of $H_k^\ell$ with $A_j \in (L \cap S)$.
\end{itemize}
We say that $Y_k^{r}$ is a \textbf{predictive projection} of $H_k^q$.
\end{mydef}

\begin{mythm}\label{thm:u-propagation}
Let $H_k^q[Z_q V] \!\in\! \boldsymbol H_k$ be an endogenous re\-la\-tion with (violated) key constraint $\langle Z_q, V\rangle\! \in \Sigma_k^\looparrowright\!$, and $Y_k^r[\overline{V_iD_i}\,|\,Z_qT]$ be a predictive projection of $H_k^q$ w.r.t. $\Gamma_k^\looparrowright\!$ defined by formula (\ref{eq:u-propagation}) with $V \!\supseteq T$. $\!$We claim that $Y_k^r$ correctly captures all the uncertainty present in $H_k^q$ w.r.t. $\Gamma_k^\looparrowright\!$.$\!$
\end{mythm}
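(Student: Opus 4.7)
The plan is to split the claim into \textbf{soundness} (every u-factor that should reach $T$ via $\Gamma_k^\looparrowright$ is exposed as a condition column in $Y_k^r$) and \textbf{completeness} (no random variable is injected for a pivot outside the $\upsilon$-fd determining $T$). First I would characterise the intended uncertainty: by construction of synthesis `4U' (Part II), the endogenous scheme $R[ST]$ originates from a $\upsilon$-fd $\langle S, T\rangle$ (up to union) in $\Gamma_k^\looparrowright$, so the pivots of the u-factors that must chain into $T$ are exactly $\{A_j \in S\}$. Together with the theoretical u-factor $\textsf{x}_0$ carried by $Y_0$ (for $\upsilon = k$), these are the sources the formula should reveal, and by Proposition \ref{prop:folding-parsimonious} this list is minimal because $\Gamma_k^\looparrowright$ is parsimonious.

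Next I would unfold formula (\ref{eq:u-propagation}) and track condition columns. The factor $\sigma_{\upsilon=k}(Y_0)$ injects the pair $(V_0, D_0)$ pinned to the value encoding hypothesis $k$. In the join sub-query $J$, every exogenous relation $H_k^\ell$ whose schema shares a pivot with $S$ is joined in, and for each such $H_k^\ell$ all u-factor projections $Y_k^j$ with $A_j \in (L \cap S)$ are included, so $J$ carries exactly the condition columns $(V_j, D_j)$ for pivots $A_j \in S$. Proposition \ref{prop:u-factor}(a) guarantees those u-factor projections are mutually independent, so p-WSA join semantics on condition columns (union under the world-consistency predicate) yields a faithful encoding of the joint valuation on $\{\textsf{x}_0\} \cup \{\textsf{x}_j : A_j \in S\}$ --- no more, no less. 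This would settle soundness and completeness at the level of $\overline{V_iD_i}$.

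The hard part will be aligning the \emph{data} tuples with those worlds: the outer join with $\pi_{\textsf{tid},\,Z_q T}(H_k^q)$ must pair each possible world (each valuation of the contributing u-factors) with the unique endogenous tuple that was computed under precisely those exogenous settings. The key lemma I would need is that \textsf{tid} witnesses a coherent row across all exogenous tables and $H_k^q$ simultaneously: since each u-factor projection was derived via $\textsf{repair-key}$ applied to $\gamma_{X_\ell A_p,\,\textsf{count}(*)}$ over $H_k^\ell$ (Def.~\ref{def:u-factor}), the random-variable value chosen in a world corresponds one-to-one with an exogenous $(X_\ell, A_p)$-row, and joining on \textsf{tid} lifts that correspondence to $H_k^q$. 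The outer $\pi_{Z_q T}$ then strips \textsf{tid} and $X$ but cannot merge distinct worlds because the condition columns already distinguish them, so combining both phases closes the argument that $Y_k^r$ captures precisely the uncertainty of $H_k^q$ w.r.t.\ $\Gamma_k^\looparrowright$.
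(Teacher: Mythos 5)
Your overall skeleton matches the paper's: both arguments unfold formula (\ref{eq:u-propagation}), track the condition columns contributed by $\sigma_{\upsilon=k}(Y_0)$ and by the u-factor projections collected in $J$, invoke Proposition \ref{prop:u-factor} for the exogenous part, and treat \textsf{tid} as the surrogate linking each endogenous tuple to the exogenous settings in its causal chain.

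The gap is in the last step. The paper does not stop at ``the condition columns distinguish the worlds''; it must show that the final join $Q_1[V_0D_0\,|\,Z_q^\prime T] \bowtie Q_2[\overline{V_jD_j}\,|\,X^\prime]$ is \emph{lossless}, i.e., produces no spurious pairings of worlds with data rows, and it reduces this via Proposition \ref{prop:lossless} to the containment $X \subset Z_q$. That containment is exactly where $\Gamma_k^\looparrowright$ and the folding machinery enter the proof: since $Z_q \xrightarrow{\looparrowright} B$ is folded for $B \in T$ and each key constraint $X_\ell \to A_j$ is a folded $\phi$-fd with $A_j \in S$, the attribute set $X_\ell$ must have replaced $A_j$ inside $Z_q$ during folding, whence $X = \bigcup_{\ell} X_\ell \subset Z_q$. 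You never derive this (or any equivalent condition), and your substitute --- the ``key lemma'' that \textsf{tid} witnesses a coherent row across all exogenous tables and $H_k^q$ simultaneously --- is stated as needed but left unproved. Note also that your closing observation (the outer $\pi_{Z_q T}$ cannot \emph{merge} distinct worlds) addresses only one failure mode; the one the losslessness argument rules out is the join \emph{manufacturing} world--tuple combinations that were never computed. To close the proof you would either have to prove your \textsf{tid}-coherence lemma in a form strong enough to imply losslessness of the composite join, or follow the paper and derive $X \subset Z_q$ from Def.~\ref{def:folded} so that Proposition \ref{prop:lossless} applies.
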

\begin{proof}
See Appendix, \S\ref{a:u-propagation}.
\end{proof}

Fig. \ref{fig:u-relations} shows the rendered U-relations for hypothesis $\upsilon\!=\!3$ whose relations are shown in Fig. \ref{fig:population}. Note that $\textsf{tid}\!=\!6$ in $H_3^2$ (Fig. $\!$\ref{fig:population}) corresponds now to $\theta = \{\,\textsf{x}_0 \!\mapsto\! 3,\, \textsf{x}_1 \!\mapsto\! 3,$ $\textsf{x}_2 \!\mapsto\! 5,\, \textsf{x}_3 \!\mapsto\! 3 \,\}$, where $\theta$ defines a particular world in $\boldsymbol W$ whose probability is $\textsf{Pr}(\theta) \!\approx\! .012$. This value is derived from the marginal probabilities stored in world table $W$ (e.g., see Fig. $\!$\ref{fig:maybms}) as a result of the application of formulas (\ref{eq:explanation}) and (\ref{eq:u-factor}).

\begin{myremark}
Observe that, although (Alg. \ref{alg:synthesize4u}) \textsf{synthesize4u} operates locally for each hypothesis $k$, the effects of synthesis `4U' (\textsf{U-intro}) in the pipeline are global on account of the (global) `explanation' relation $H_0$ (then U-relation $Y_0$); e.g., see Fig. \ref{fig:u-relations}. In fact, the probability of each tuple, say, in endogenous U-relation $Y_k^q$ with $\phi=p$ for hypothesis $\upsilon=k$, is distributed among all the hypotheses $\ell\neq k$ that are keyed in $Y_0$ under $\phi=p$. $\Box$
\label{rmk:u-intro}
\end{myremark}

\begin{figure}[t]
\centering
\scriptsize
\begingroup\setlength{\fboxsep}{2pt}
\colorbox{yellow!10}{%
   \begin{tabular}{c|>{\columncolor[gray]{0.92}}c||c|c}
  $\!\!$\textsf{Y}$_3^1\!\!$ & $\!\!\!V \!\mapsto\! D\!\!\!$ & $\!\!\!\phi\!\!\!$ & $\!\!\!A_1\!\!\!$\\
      \hline    
   & $\!\!\textsf{x}_1 \!\mapsto\! 1\!\!$ & $\!\!1\!\!$ & $\!\!3\!\!$\\
   & $\!\!\textsf{x}_1 \!\mapsto\! 2\!\!$ & $\!\!1\!\!$ & $\!\!10\!\!$\\
   & $\!\!\textsf{x}_1 \!\mapsto\! 3\!\!$ & $\!\!1\!\!$ & $\!\!30\!\!$\\
   \end{tabular}
}\endgroup
$\!\!\!$
\begingroup\setlength{\fboxsep}{2pt}
\colorbox{yellow!10}{%
   \begin{tabular}{c|>{\columncolor[gray]{0.92}}c||c|c}
  $\!\!$\textsf{Y}$_3^2\!\!$ & $\!\!\!V \!\mapsto\! D\!\!\!$ & $\!\!\phi\!\!$ & $\!\!\!A_2\!\!\!$\\
      \hline    
   & $\!\!\textsf{x}_2 \!\mapsto\! 1\!\!$ & $\!1\!$ & $\!\!1\!\!$\\
   & $\!\!\textsf{x}_2 \!\mapsto\! 2\!\!$ & $\!1\!$ & $\!\!1.5\!\!$\\
   & $\!\!\textsf{x}_2 \!\mapsto\! 3\!\!$ & $\!1\!$ & $\!\!.5\!\!$\\
   & $\!\!\textsf{x}_2 \!\mapsto\! 4\!\!$ & $\!1\!$ & $\!\!.4\!\!$\\
   & $\!\!\textsf{x}_2 \!\mapsto\! 5\!\!$ & $\!1\!$ & $\!\!\!.397\!\!\!$\\
   \end{tabular}
}\endgroup
\begingroup\setlength{\fboxsep}{2pt}
\colorbox{yellow!10}{%
   \begin{tabular}{c|>{\columncolor[gray]{0.92}}c||c|c}
  $\!\!$\textsf{Y}$_3^3\!\!$ & $\!\!\!V \!\mapsto\! D\!\!\!$ & $\!\!\phi\!\!$ & $\!\!\!A_3\!\!\!$\\
      \hline    
   & $\!\!\textsf{x}_3 \!\mapsto\! 1\!\!$ & $\!1\!$ & $\!\!1\!\!$\\
   & $\!\!\textsf{x}_3 \!\mapsto\! 2\!\!$ & $\!1\!$ & $\!\!.02\!\!$\\
   & $\!\!\textsf{x}_3 \!\mapsto\! 3\!\!$ & $\!1\!$ & $\!\!\!.018\!\!\!$\\
   \end{tabular}
}\endgroup
\vspace{1pt}\\
\begingroup\setlength{\fboxsep}{3pt}
\colorbox{yellow!10}{%
   \begin{tabular}{c|>{\columncolor[gray]{0.92}}c|>{\columncolor[gray]{0.92}}c|>{\columncolor[gray]{0.92}}c|>{\columncolor[gray]{0.92}}c||c|c|c|c|c}
  $\!\!$\textsf{Y}$_3^4\!\!$ & $\!\!\!V_0 \!\mapsto\! D_0\!\!\!$ & $\!\!\!V_1 \!\mapsto\! D_1\!\!\!$ & $\!\!\!V_2 \!\mapsto\! D_2\!\!\!$ & $\!\!\!V_3 \!\mapsto\! D_3\!\!\!$ & $\!\phi\!$ & $\!\upsilon\!$ & $\! t \!$ & $x$ & $y$\\
      \hline    
   & $\!\!\textsf{x}_0 \mapsto 3\!\!$ & $\!\!\textsf{x}_1 \mapsto 1\!\!$ & $\!\!\textsf{x}_2 \mapsto 1\!\!$ & $\!\!\textsf{x}_3 \mapsto 1\!\!$ & $\!1\!$ & $\!3\!$ & $\!0\!$ & $\!\!3\!\!$ & $\!\!6\!\!$\\
   & $\!\!\textsf{x}_0 \mapsto 3\!\!$ & $\!\!\textsf{x}_1 \mapsto 1\!\!$ & $\!\!\textsf{x}_2 \mapsto 1\!\!$ & $\!\!\textsf{x}_3 \mapsto 1\!\!$ & $\!1\!$ & $\!3\!$ & $\!...\!$ & ... & ...\\
   \cline{2-10}
   & $\!\!...\!\!$ & $\!\!...\!\!$ & $\!\!...\!\!$ & $\!\!...\!\!$ & $\!1\!$ & $\!3\!$ & $\!...\!$ & ... & ...\\
   \cline{2-10}
   & $\!\!\textsf{x}_0 \mapsto 3\!\!$ & $\!\!\textsf{x}_1 \mapsto 3\!\!$ & $\!\!\textsf{x}_2 \mapsto 5\!\!$ & $\!\!\textsf{x}_3 \mapsto 3\!\!$ & $\!1\!$ & $\!3\!$ & $\!0\!$ & $\!\!30\!\!$ & $4$\\
   & $\!\!\textsf{x}_0 \mapsto 3\!\!$ & $\!\!\textsf{x}_1 \mapsto 3\!\!$ & $\!\!\textsf{x}_2 \mapsto 5\!\!$ & $\!\!\textsf{x}_3 \mapsto 3\!\!$ & $\!1\!$ & $\!3\!$ & $\!5\!$ & $\!\!50.1\!\!$ & $\!\!62.9\!\!$\\
   & $\!\!\textsf{x}_0 \mapsto 3\!\!$ & $\!\!\textsf{x}_1 \mapsto 3\!\!$ & $\!\!\textsf{x}_2 \mapsto 5\!\!$ & $\!\!\textsf{x}_3 \mapsto 3\!\!$ & $\!1\!$ & $\!3\!$ & $\!10\!$ & $\!\!13.8\!\!$ & $\!\!8.65\!\!$\\
   & $\!\!\textsf{x}_0 \mapsto 3\!\!$ & $\!\!\textsf{x}_1 \mapsto 3\!\!$ & $\!\!\textsf{x}_2 \mapsto 5\!\!$ & $\!\!\textsf{x}_3 \mapsto 3\!\!$ & $\!1\!$ & $\!3\!$ & $\!15\!$ & $\!\!79.3\!\!$ & $\!\!8.23\!\!$\\
   & $\!\!\textsf{x}_0 \mapsto 3\!\!$ & $\!\!\textsf{x}_1 \mapsto 3\!\!$ & $\!\!\textsf{x}_2 \mapsto 5\!\!$ & $\!\!\textsf{x}_3 \mapsto 3\!\!$ & $\!1\!$ & $\!3\!$ & $\!20\!$ & $\!\!12.6\!\!$ & $\!\!30.7\!\!$\\
   & $\!\!\textsf{x}_0 \mapsto 3\!\!$ & $\!\!\textsf{x}_1 \mapsto 3\!\!$ & $\!\!\textsf{x}_2 \mapsto 5\!\!$ & $\!\!\textsf{x}_3 \mapsto 3\!\!$ & $\!1\!$ & $\!3\!$ & $\!..\!$ & .. & ..\\
   \end{tabular}
}\endgroup
\vspace{-4pt}
\caption{U-relations rendered by synthesis `4U' for hypothesis $\upsilon=3$ from Example \ref{ex:population}.}
\label{fig:u-relations}
\vspace{-8pt}
\end{figure}

In sum, the synthesis `4U' technique completes the pipeline (Fig. $\!$\ref{fig:pipeline}), except for the problem of conditioning (cf. \cite{goncalves2014}) which is not covered in this paper.

\vspace{-2pt}
\section{Discussion}\label{sec:discussion}
\vspace{-3pt}
\noindent
In this section we discuss related work and the applicability of our framework.
\vspace{-2pt}
\subsection{Related Work}\label{subsec:related-work}
\vspace{-2pt}
\textbf{Models and data}. 
Haas et al. \cite{haas2011} propose a long-term \emph{models-and-data} research program to address data management for \emph{deep} predictive analytics. They discusss strategies to extend query engines for model execution within a \mbox{(p-)DB}. Along these lines, query optimization is understood as a more general problem with connections to algebraic solvers.
Our framework in turn essentially comprises an abstraction of \emph{hypotheses as data} \cite{goncalves2014}. It can be understood in comparison as putting models strictly into a (flattened) data perspective. For this reason it is directly applicable by building upon recent work on p-DBs \cite{suciu2011}.

\textbf{Design by synthesis}. 
Classical design by synthesis \cite{bernstein1976} was once criticized due to its too strong `uniqueness' of fd's assumption \cite[p. $\!$443]{fagin1977}, as it reduces the problem of design to symbolic reasoning on fd's arguably neglecting semantic issues. For hypothesis management, however, design by synthesis is clearly feasible (see also \S\ref{subsec:applicability}), as it translates seamlessly to data dependencies the reduction made by the user herself into a tentative formal model for the studied phenomenon. In fact, synthesis methods may be as fruitful for p-DB design as they are for GM design (cf. \cite{druzdzel1993,darwiche2010}).

\textbf{Reasoning over fd's}. The concept of fd folding and design of (Alg. $\!$\ref{alg:afolding}) \textsf{AFolding} as a variant of \textsf{XClosure}, is (arguably) a non-obvious approach to the problem of processing the causal ordering of a hypothesis via acyclic reflexive pseudo-transitive reasoning. To the best of our knowledge, such a specific form of fd reasoning over fd's was an yet unexplored problem in the database research literature.

\textbf{Causality in AI}. The work of Dash and Druzdzel (cf. $\!$\cite{druzdzel2008}) is peculiar in that it explores SEMs to model deterministic (quantitative) hypotheses into GMs (like in Simon's early work), while a majority of work in their field is devoted to model statistical (qualitative) hypotheses \cite{pearl2000,darwiche2010}. In comparison, we are applying SEMs to model deterministic hypotheses into U-relational DBs.

\textbf{Causality in DBs}. 
Our encoding of equations into fd's (constraints/correlations) captures the causal chain from exogenous (input) to endogenous (output) tuples. Fd's are rich, stronger information that can be exploited in reasoning about causality in a DB for the sake of explanation and sensitivity analysis. 
$\!$In comparison with \mbox{Meliou et. al \cite{meliou2010}} and Kanagal et. al \cite{kanagal2011}, we are processing causality at schema level. To our knowledge this is the first work to address causal reasoning in the presence of constraints (viz., fd's), called for as worth of future work by Meliou et al.$\!$ \cite[p. $\!$3]{meliou2010}.

\textbf{Hypothesis encoding}. 
Finally, our framework is comparable with Bioinformatics' initiatives that address hypothesis encoding into the RDF data model \cite{soldatova2011}.  
We point out the Robot Scientist, HyBrow and SWAN (cf. \cite{soldatova2011}), all of which consist in some ad-hoc RDF encoding of sequence and genome analysis hypotheses under varying levels of structure (viz., from `gene G has function A' statements to free text). Our framework in turn consists in the U-relational encoding of hypotheses from mathematical equations, which is (to our knowledge) the first work on hypothesis relational encoding.

\subsection{Applicability}\label{subsec:applicability}
\vspace{-2pt}
\noindent
Essentially, the formal framework proposed in this paper is meant for (large-scale) hypothesis management and predictive analytics directly in the data under support of the querying capabilities of a p-DB. 

\textbf{Realistic assumptions}. The core assumption of our framework is that the hypotheses are given in a formal specification which is encodable into a SEM that is complete (satisfies Def. $\!$\ref{def:complete}). Also, as a semantic assumption which is standard in scientific modeling, we consider a one-to-one correspondence between real-world entities and variable/at\-tribute symbols within a structure, and that all of them must appear in some of its equations/fd's. For most science use cases involving deterministic models (if not all), such assumptions are quite reasonable. It can be a topic of future work to explore business use cases as well. 

\textbf{Hypothesis learning}. $\!$The (user) method for hypothesis formation is irrelevant to our framework, as long as the resulting hypothesis is encodable into a SEM. So, a promising use case is to incorporate machine learning methods into our framework to scale up the formation/extraction of hypotheses and evaluate them under the querying capabilities of a p-DB. Con\-sider, e.g., learning the equations, say, from \textsf{Eureqa} \cite{schmidt2009}.\footnote{\url{http://creativemachines.cornell.edu/Eureqa}.}

\textbf{Qualitative hypotheses}. Although the method is primarely motivated by computational science (usually involving differential equations), in fact it is applicable to qualitative deterministic models as well. Boolean Networks, e.g., consist in sets of functions $f(x_1, x_2, \!.., x_n)$, where $f$ is a Boolean expression. $\!$Several kinds of dynamical system can be modeled in this formalism. $\!$Appli\-cations have grown out of gene regulatory network to social network and stock market predictive analytics. Even if richer semantics is considered (e.g., fuzzy logics), our encoding method is applicable likewise, as long as the equations are still deterministic.

$\!$\textbf{Model repositories}. $\!$Recent initiatives have been fostering large-scale model integration, sharing and reproducibility in the computational sciences (e.g., \cite{hines2004,chelliah2013,hunter2003}). They are growing reasonably fast on the web, (i) promoting some \textsf{MathML}-based schema as a standard for model specification, but (ii) with limited integrity and lack of support for rating/ranking competing models. For those two reasons, they provide a strong use case for our method of hypothesis management. The Physiome project \cite{hunter2003}
, e.g., is planned to integrate very large deterministic models of human physiology. A fairly simple model of the human cardiovascular system has about 630+ variables (or equations, as $\mathcal |\mathcal E| \!=\! \mathcal |\mathcal V|$).

\textbf{Initial experiments}.
We have run an applicability study comprising the whole pipeline of Fig. $\!$\ref{fig:pipeline} in a realistic use case scenario extracted from the Physiome project \cite{goncalves2014b}.
Our initial experiments have shown that the pipeline can in fact be processed efficiently. All hypotheses extracted and analyzed happened to satisfy the lossless join property (cf. $\!$Appendix, \ref{a:conjecture}).

\vspace{-6pt}
\section{Conclusions}\label{sec:conclusions}
\vspace{-4pt}
\noindent
In this paper we have presented specific technical developments over the $\Upsilon$-DB vision \cite{goncalves2014}. In short, we have shown how to encode deterministic hypotheses as uncertain data, viz., as constraints and correlations into U-relational DBs. 

Although the pipeline (Fig. \ref{fig:pipeline}) is motivated by a very concrete class of applications, it raised some non-trivial theoretical issues and required a new design-theoretic framework in view of a principled DB research solution --- not only to the specific technical problems P1-P3 (cf. $\!$\S\ref{subsec:problem}) but to the pipeline as a whole in view of enabling hypothesis management and for predictive analytics. 

This work can be understood as revisiting and making effective use of classical design theory in a modern context to address new problems.

\vspace{-4pt}
\section{Acknowledgments}
\begin{footnotesize}
\vspace{-3pt}
\noindent
This research has been supported by the Brazilian funding agencies CNPq (grants $\!$n$^o\!$ 141838/2011-6, 309494/2012-5) and FAPERJ (grants INCT-MACC E-26/170.030/2008, `Nota $\!$10' $\!$E-26/100.286/2013). $\!$We thank IBM for a Ph.D. Fellowship 2013-2014.
\end{footnotesize}
\vspace{-6pt}

\bibliographystyle{abbrv}
\bibliography{pods15}

\newpage
\appendix

\section{Auxiliary Algorithms}\label{algs}
\noindent
We list some reference algorithms from the literature.

\begin{algorithm}[H]\footnotesize
\caption{Attribute closure $X^+$ (cf. \cite[p. $\!$388]{ullman1988}).}
\label{alg:xclosure}
\begin{algorithmic}[1]
\Procedure{XClosure}{$\Sigma\!: \text{fd set},\; X\!: \text{attribute set}$}
\Require $\Sigma$ is an fd set, $X$ is a non-empty attribute set
\Ensure $X^+$ is the attribute closure of $X$ w.r.t. $\Sigma$
\State size $\gets 0$
\State $X^+ \gets X$
\State $\Gamma \gets \Sigma$
\While{size $< |X^+|$ }
\State size $\gets |X^+|$
\ForAll{$\langle Y,\, Z \rangle \in \Gamma$}
\If{$Y \subseteq X^+$}
\State $X^+ \gets\, X^+ \cup Z$
\State $\Gamma \gets \Gamma \setminus \langle Y, Z\rangle$
\EndIf
\EndFor
\EndWhile
\State \Return $X^+$
\EndProcedure
\end{algorithmic}
\end{algorithm}

\vspace{-8pt}
\begin{algorithm}[H]\footnotesize
\caption{Left-reduced cover for a given fd set \cite{maier1983}.}
\label{alg:left-reduce}
\begin{algorithmic}[1]
\Procedure{left-reduce}{$\Sigma\!: \text{fd set}$}
\Require $\Sigma$ is an fd set
\Ensure $\Gamma$ is a left-reduced cover for $\Sigma$
\State $\Gamma \gets \Sigma$
\ForAll{$\langle X,\, Y \rangle \in \Gamma$}
\ForAll{$A \in X$}
\If{\textsf{member}($\Sigma,\; \langle X \!\setminus\! A, Y\rangle$)}
\State $\Sigma \gets \Sigma \setminus\! \langle X, Y\rangle \cup \langle X \!\setminus\! A, Y\rangle$
\EndIf
\EndFor
\EndFor
\State \Return $\Sigma$
\EndProcedure
\end{algorithmic}
\end{algorithm}

\vspace{-8pt}
\begin{algorithm}[H]\footnotesize
\caption{Membership in the closure of an fd set \cite{ullman1988}.}
\label{alg:member}
\begin{algorithmic}[1]
\Procedure{member}{$\Sigma\!: \text{fd set},\; \langle X, Y\rangle\!: \text{fd}$}
\Require $\Sigma$ is an fd set, $\langle X, Y\rangle$ is an fd
\Ensure A decision is returned
\If{$Y \subseteq \textsf{XClosure}(\Sigma,\; X$)}
\State \Return yes
\Else
\State \Return no
\EndIf
\EndProcedure
\end{algorithmic}
\end{algorithm}

\section{Detailed Proofs}

\subsection{Proof of Lemma \ref{lemma:singleton-rhs}}\label{a:singleton-rhs}
\noindent
\emph{``Let $\,\Sigma$ be a (Def. $\!$\ref{def:minimal}-a) singleton-rhs fd set on attributes $U$. Then $\langle X, A\rangle \in \Sigma^+$ with $XA \subseteq U$ only if $A \subseteq X$ or there is non-trivial $\langle Y, A\rangle \in \Sigma$ for some $Y \subset U$.''}

\begin{proof}
By Lemma \ref{lemma:ullman} (below), we know that $X \!\to A \in \Sigma^+$ iff $A \!\subseteq X^+$. We need to prove that if $A \!\nsubseteq X$ and there is no $Y \!\to A$ in singleton-rhs $\Sigma$, then $A \!\nsubseteq X^+$. But this is equivalent to show that (Alg. \ref{alg:xclosure}) \textsf{XClosure} gives only correct answers for $X^+$ w.r.t. $\Sigma$, which is known (cf. theorem from Ullman \cite[p. 389]{ullman1988}). Note that \textsf{XClosure}($\Sigma$, $X$) inserts $A$ in $X^+$ only if $A \subseteq X$ or there is some fd  $\langle Y, A\rangle \in \Sigma$.
\end{proof}

\begin{mylemma}
Let $\Sigma$ be an fd set. An fd $X \!\to Y$ is in $\Sigma^+$ iff $Y \subseteq X^+$, where $X^+$ is the attribute closure of $X$ w.r.t. $\Sigma$.
\label{lemma:ullman}
\end{mylemma}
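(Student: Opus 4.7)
The plan is to prove both directions of the biconditional by direct appeal to the Armstrong-derived rules R3 (decomposition) and R4 (union), which are already listed in \S\ref{subsec:synthesis}, together with the definition of $X^+$ given in Def. \ref{def:xclosure}. Recall that $X^+$ is defined as the set of all attributes $A$ such that $\langle X, A\rangle \in \Sigma^+$, so the lemma is essentially an ``attribute-wise'' reformulation of the derivability of $X \to Y$ from $\Sigma$.

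For the forward direction ($\Rightarrow$), I would assume $X \to Y \in \Sigma^+$ and argue that for every single attribute $A \in Y$, applying R3 (decomposition) to $X \to Y$ gives $X \to A \in \Sigma^+$. By Def. \ref{def:xclosure}, each such $A$ is in $X^+$, hence $Y \subseteq X^+$. The only subtlety is the degenerate case $Y = \varnothing$, where the inclusion holds trivially.

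For the backward direction ($\Leftarrow$), I would assume $Y \subseteq X^+$ and unfold the definition: for every $A \in Y$ we have $\langle X, A\rangle \in \Sigma^+$. A finite application of R4 (union) over these single-attribute fd's yields $X \to Y \in \Sigma^+$. Again the case $Y = \varnothing$ is handled separately, since $X \to \varnothing$ is trivially in $\Sigma^+$ by R0.

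I don't expect any real obstacle here; the result is essentially bookkeeping on Armstrong's axioms, and a similar statement appears as a standard theorem in Ullman \cite{ullman1988}. If one wanted to avoid invoking R3 and R4 as derived rules, the alternative plan would be to re-derive both from R0--R2 in-line, but that would only lengthen the proof without changing its substance.
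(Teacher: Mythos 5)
Your proof is correct and follows essentially the same route as the paper's: both directions are handled by (R3) decomposition and (R4) union together with Def.~\ref{def:xclosure}, exactly as in the paper's appeal to Ullman. Your explicit treatment of the degenerate case $Y=\varnothing$ is a harmless addition the paper omits.
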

\begin{proof}
This is from Ullman \cite[p. $\!$386]{ullman1988}. Let $Y\!=\!A_1\,...\,A_n$ and suppose $Y \subseteq X^+$. Then for each $A_i$, we have $A_i \in X^+$ and, by Def. \ref{def:xclosure}, we must have $\langle X, A_i\rangle \in \Sigma^+$. Then it follows by (R4) union that $X \to Y$ is in $\Sigma^+$ as well. Conversely, suppose $\langle X, Y\rangle \in \Sigma^+$. Then, by (R3) decomposition we have $\langle X, A_i\rangle \in \Sigma^+$ for each $A_i \in Y$.
\end{proof}

\subsection{Proof of Theorem \ref{thm:non-redundant}.}\label{a:non-redundant}
\noindent
\emph{``Let $\Sigma$ be an fd set defined $\Sigma \!\triangleq\!$ \textsf{h-encode}($\mathcal S$) for some complete structure $\mathcal S$. Then $\Sigma$ is non-redundant but may not be canonical.''}

\begin{proof}
We will show that properties (a-b) of Def. \ref{def:minimal} hold for $\Sigma$ produced by (Alg. \ref{alg:h-encode}) \textsf{h-encode}, but property (c) may not hold. 

At initialization, the algorithm sets $\Sigma\!=\!\varnothing$ and then inserts an fd $\langle X, A\rangle \in \Sigma$ for each $\langle f,\, x \rangle \!\in \varphi_t$ scanned, where $x \mapsto A$ and $X\cap A=\varnothing$. At termination, for all fd's in $\Sigma$ we obviously have $|A|=1$ then property (a) holds. Also, note that $\varphi_t\!: S \!\to Vars(S)$ is, by Def. \ref{def:total-causal-mapping}, a bijection. 

Now, for property (b) not to hold there must be some fd $\langle X, A\rangle \in \Sigma$ that is redundant and then can be found in the closure of $\Gamma \!=\! \Sigma \setminus \langle X, A\rangle$. 
By Lemma \ref{lemma:singleton-rhs}, that can be the case only if $A \subseteq X$ or there is $\langle Y, A\rangle \in\Gamma$ for some $Y$. 
But from $X\cap A=\varnothing$, we have $A \nsubseteq X$; and from $\varphi_t$ being a bijection it follows that there can be no such fd in $\Gamma$. Thus it must be the case that $\Sigma$ is non-redundant, i.e., property (b) holds.

Finally, property (c) does not hold if there can be some fd $\langle X, A\rangle \in \Sigma$ with $Y \!\subset X$ such that $\Gamma=\Sigma \setminus \langle X, A\rangle \cup \langle Y, A\rangle$ has the same closure as $\Sigma$. $\!$That is, 
if we may find $\langle Y, A\rangle \in \Sigma^+$. 
$\!$Now, pick structure $S$ whose ($3 \times 3$) matrix is $A_s \!=\! \{ 1, 0, 0;\, 1, 1, 0;\, 1, 1, 1; \}$ as an instance. Alg. \ref{alg:h-encode} encodes it into $\Sigma\!=\!\{\phi \!\to x_1,\; x_1 \,\upsilon \!\to x_2,\; x_1 \,x_2 \,\upsilon \!\to x_3 \}$. Let $Y\!=\!\{x_1, \upsilon\}$, $Z\!=\!\{x_2\}$ and $X\!=\!YZ$ such that $Y \!\subset\! X$. Note that $x_1 \upsilon \!\to\! x_2 \in \!\Sigma$ can be written as $\langle Y, Z\rangle \in \Sigma$, and $x_1 \,x_2 \,\upsilon \!\to\! x_3 \in \Sigma$ as $\langle YZ, A\rangle \in \Sigma$.  
Apply R5, R0 to derive $\langle Y, A\rangle \in \Sigma^+$, which is sufficient to show that property (c) may not hold.
\end{proof}

\subsection{Proof of Corollary \ref{cor:parsimonious}.}\label{a:parsimonious}
\noindent
``\emph{Let $\Sigma$ be an fd set defined $\Sigma \!\triangleq\!$ \textsf{h-encode}($\mathcal S$) for some complete structure $\mathcal S$. Then $\Sigma$ can be assumed parsimonious with no loss of generality at expense of time that is polynomial in $|\Sigma|\cdot|U|$.}''

\begin{proof}
By Theorem \ref{thm:non-redundant}, we know that any fd set $\Sigma$ produced by \textsf{h-encode} is (Def. $\!$\ref{def:minimal}-a) singleton-rhs and (Def. $\!$\ref{def:minimal}-b) non-redundant. Then let $\Sigma^\prime \triangleq$ \textsf{left-reduce}($\Sigma$) (cf. $\!$Alg. $\!$\ref{alg:left-reduce} in \S\ref{algs}), which is essentially testing, for all fd's $\langle X, Y\rangle \in \Sigma$ and all attributes $A \in X$, whether $\langle X\!\setminus\! A,\; Y\rangle \in \Sigma^+$ Such test is dominated by polynomial-time \textsf{XClosure} (cf. Remark \ref{rmk:linear}). Since it ensures $\Sigma^\prime$ to be (Def. $\!\!$\ref{def:minimal}-c) left-reduced, then $\Sigma^\prime$ must be canonical. 

Moreover, in order to verify it is parsimonious, recall that \textsf{h-encode} requires its input structure $\mathcal S$ to be complete and then (\textsf{COA$_t$}) processes it into a bijection $\varphi_t\!: \mathcal E \to \mathcal V\,$ from equations to variables. That is, each variable $x \in \mathcal V$ is mapped to a relational attribute $x \mapsto B$ and has a unique set of variables $\mathcal V^\prime \subset \mathcal V$ that is encoded to determine it, i.e., $\Sigma \triangleq$ \textsf{h-encode}($\mathcal S$) is such that $\langle Y, B\rangle \in \Sigma$ for exactly one attribute set $Y$ where $\mathcal V^\prime \mapsto Y$. Therefore, $\Sigma^\prime$ must also be parsimonious.
\end{proof}

\subsection{Proof of Proposition \ref{prop:3nf}}\label{a:3nf}
\noindent
\emph{``Let $\boldsymbol R[U]$ be a relational schema,\vspace{-1pt} defined $\boldsymbol R \triangleq$ \textsf{synthe\-size}($\Sigma$) for some  canonical fd set $\Sigma$ on attributes $U$. Then $\boldsymbol R$ preserves $\Sigma$, and is in 3NF but may not be in BCNF.''} 
\begin{proof}
Note that a relation scheme $R_i[S]$ is rendered into $\boldsymbol R$ only if we have an fd $\langle X, Z\rangle \in \Sigma^\prime$ such that $XZ \subseteq S \subseteq U$; and also, for all fd's $\langle Y, W\rangle \in \Sigma^\prime$, there must be some scheme $R_j[T] \in \boldsymbol R$ with with $YW \subseteq T \subseteq U$. As fd set $\Sigma$ is recoverable from $\Sigma^\prime$ by a finite application of (R3) decomposition (cf. line 2 of Alg. \ref{alg:synthesize}), it includes all the fd's projected onto $\boldsymbol R$. Since $\Sigma$ is canonical, by Def. $\!$\ref{def:minimal} it is a cover for every fd set $\Gamma$ such that $\Gamma^+ \!=\! \Sigma^+$. Then 
schema $\boldsymbol R$ synthesized over $\Sigma^\prime$ clearly preserves $\Sigma$. Now we show that schema $\boldsymbol R$ may not be in BCNF. Let Alg. $\!$\ref{alg:synthesize} be applied to canonical $\Sigma$ given in Fig. \ref{fig:h-fdschema} (left). For synthesized relation $R_2[YW]$, where $\,Y\!=\!\{x_1, x_2, x_3, x_5, \upsilon\}$ and $\,W\!=\!\{x_4\}$, note that both fd's $x_1\,x_2\,x_3\,x_5\,\upsilon\!\to x_4$ and $x_1\,x_3\,x_4\,\upsilon \!\to x_5$ hold. But the latter has the form $X \!\to A$, where $A \!\nsubseteq\! X$ but $X \!\not\to \{x_2\}$ then $X \!\not\to YW$. That is, by Def. $\!$\ref{def:nf}, it violates BCNF in $\!R_2$ (then in $\boldsymbol R$). Note that $A\!=\!x_5$ is prime in $\!R_2$, thus 3NF is not violated in $R_2$ by that fd. In fact any $\boldsymbol R$ synthesized by Alg. \ref{alg:synthesize} is in 3NF.

To show that $\boldsymbol R$ is in 3NF, we first disconsider line 2 of (Alg. $\!$\ref{alg:synthesize}) \textsf{synthesize} to study the schema as if synthesized over canonical $\Sigma$ and then extend the proof to $\Sigma^\prime$. $\!$Let $\langle Y, B\rangle \in \Sigma$ be the fd over which scheme $R_k[YB]$ has been synthesized. We have to prove that $R_k$ is in 3NF. By contradiction, suppose there is some $\langle X, A\rangle \in \Sigma^+$, with $XA \!\subseteq YB$, that violates 3NF in $R_k[YB]$. That is, by Def. $\!$\ref{def:nf}, we have $A \nsubseteq X$ but $X \!\not\to YB$ and $A$ is nonprime. Following Ullman \cite[p. $\!$410]{ullman1988}, we have two cases for analysis. If $A=B$, then $X \!\to B$ and, since $A \not\subseteq X$ and $XA \!\subseteq YB$, we have $X \!\subseteq Y$. But as $X \!\not\to Y$, it turns out that $X \!\subset Y$ and $X\!\to B$, while $Y\!\to B$ is in sup\-pose\-dly canonical $\Sigma$. $\!\lightning$. Else ($A\neq B$), we have $A \in Y$. Let $Z \subseteq Y$ be a key for $YB$ then, since $A$ is nonprime, $A \notin Z$. That is, we have $Z \subset Y$ and $Z \!\to B$, while $Y\!\to B$ is in supposedly canonical $\Sigma$. $\!\lightning$. Therefore $R_k[YB]$ (and $\boldsymbol R$ in general) must be in 3NF. 

The extension of this proof to $\Sigma^\prime$ (which is $\Sigma$ after application of R4) is straightforward. For relation $R_k[YW]$, we may have $W\!=\!B_1 B_2 \hdots B_n$, with $Y\!\to W$ in left-reduced, non-redundant $\Sigma^\prime$. Suppose there is some $\langle X, A\rangle \in (\Sigma^\prime)^+$, with $XA \!\subseteq YW$, that violates 3NF for $R_k[YW]$. If $A \!\subset W$, let $Y^\prime=Y\cup (W\!\setminus\! A)$. Then we have $R_k^\prime[Y^\prime\! A]$ under the same properties analyzed above. Else ($A \notin W$), thus we have $A \in Y$ idem. Either way establishes a contradiction, therefore $R_k[YZ]$ (and $\boldsymbol R$ in general) must be in 3NF.
\end{proof}

\subsection{Proof of Lemma \ref{lemma:folding-unique}}\label{a:folding-unique}
\noindent
\emph{``Let $\Sigma$ be a canonical fd set on attributes $U$, and $\langle X, A\rangle \!\in\! \Sigma$ be an fd with $XA \subseteq U$. Then $A^\looparrowright\!$, the attribute folding of $A$ (w.r.t. $\!\Sigma$) exists. Moreover, if $\Sigma$ is parsimonious then $A^\looparrowright\!$ is unique.''}
\begin{proof}
The existance of $A^\looparrowright$ is ensured by the degenerate case where $X = A^\looparrowright\!$, as $X \!\to A$ is itself in $\Sigma^\vartriangleright$ by an empty application of $\{$R0, R5$\}$. If $X \!\to A$ is in fact folded w.r.t. $\!\Sigma$, then the folding of $A$ exists. Else, it is not folded yet $X \!\to A$ is non-trivial because $\Sigma$ is canonical by assumption. Then, by Def. \ref{def:folded} there must be some 
$Y \subseteq U$ with $Y \nsupseteq X$ such that $Y \to X$ and $X \not\to Y$. By Def. \ref{def:ptc}, there is a finite application of rules $\{$R0, R5$\}$ over fd's in $\Sigma$ to derive $Y \!\to X$. 
Then by R2$\,\sim\,$R5 over $X \!\to A$, we have $Y \!\to A$. Although there may be many such (intermediate) attribute sets $Y \subset U$ along the causal ordering satisfying the conditions above, we claim there is at least one that is a folding of $A$. Suppose not. Then, for all such $Y \subset U$, there is some $Y^\prime \subset U$ with $Y^\prime \nsupseteq Y$ such that $Y^\prime \to Y$ and $Y \not\to Y^\prime$, leading to an infinite regress. Nonetheless, in so far as cycles are ruled out by force of Def. \ref{def:folded}, then $\Sigma^+$ must have an infinite number of fd's. But $\Sigma^+$ is finite, viz., bounded by $2^{2|U|}$ (cf. \cite[p. $\!165$]{abiteboul1995}). $\!\lightning$. Therefore the folding of $A$ must exist.

If $\Sigma$ is assumed parsimonious, then by Def. \ref{def:parsimonious} then we have $\langle X, A\rangle \in \Sigma$ for exactly one attribute set $X$. Then, as a straightfoward follow-up of the above reasoning that let us infer the folding existance, note that there must be a single chaining $Y^n \!\xrightarrow{\triangleright} ... \!\xrightarrow{\triangleright} Y^1 \!\xrightarrow{\triangleright} Y^0 \!\xrightarrow{\triangleright} X \!\xrightarrow{\triangleright} A$. Again, as cycles are ruled out by force of Def. \ref{def:folded} and $\Sigma^+$ is finite, then the folding of $A$ is unique.
\end{proof}

\subsection{Proof of Theorem \ref{thm:afolding}}\label{a:afolding}
\noindent
\emph{``Let $\Sigma$ be a parsimonious fd set on attributes $U$, and $A$ be an attribute with $\langle X, A\rangle \in \Sigma$ with $XA \subseteq U$. Then \textsf{AFolding}($\Sigma, A$) correctly computes $A^\looparrowright\!$, the attribute folding of $A$ (w.r.t. $\!\Sigma$) in time $O(n^2)$ in $|\Sigma| \cdot |U|$.''}

\begin{proof}
For the proof roadmap, note that \textsf{AFolding} 
is monotone and terminates precisely when $A^{(i+1)}\!=\!A^{(i)}$, where $A^{(i)}$ denotes the attributes in $A^\star$ at step $i$ of the outer loop. The folding $A^\looparrowright\!$ of $A$ at step $i$ is $A^{(i)} \setminus \Lambda^{(i)}$. We shall prove by induction, given attribute $A$ from fd $X \!\to A$ in parsimonious $\Sigma$, that $A^\star \!\setminus \Lambda$ returned by \textsf{AFolding}($\Sigma, A$) is the unique attribute folding $A^\looparrowright\!$ of $A$. 

$\!\!$(\textbf{Base case}).  
Since $\Sigma$ is assumed canonical with (then) non-trivial $\langle X, A\rangle \!\in \Sigma$ for exactly one attribute set $X$, the algorithm always reaches step $i\!=\!1$, which is our base case. Then $X$ is placed in $A^{(1)}$ and $A$ in $\Lambda^{(1)}$, and we have $A^{(1)}\!=\!XA$ and $\Lambda^{(1)}\!=\!A$. Therefore, $A^{(1)} \setminus \Lambda^{(1)} \!=\! X$, and in fact we have $\langle X, A\rangle \in \Sigma^\vartriangleright$. $\!$For it to be specifically in $\Sigma^\looparrowright\!\! \subset \Sigma^\vartriangleright\!$, it must be folded w.r.t. set $\Gamma$ of consumed fd's at this step, viz., $\Delta^{(1)} \!=\! \{X\!\to A\}$. Since $\langle X, A \rangle \!\in \Sigma^\vartriangleright\!$ and by Def. \ref{def:folded} is in fact folded w.r.t. $\Delta^{(1)}$, we have $A^\looparrowright\!=X$ at step $i\!=\!1$. 

$\!\!$(\textbf{Induction}). Let $i\!=\!k$, for $k\!>\!1$, and assume that $\langle A^{(k)} \!\setminus\Lambda^{(k)}\!,\, A\rangle \in \Sigma^\vartriangleright\!$ is in $\Sigma^\looparrowright\!$, with $A^{(k)} \!\neq\! \Lambda^{(k)}$. $\!$That is, by Lemma \ref{lemma:folding-unique} we know that $A^\looparrowright\!=A^{(k)}\!\setminus\! \Lambda^{(k)}$ is the unique folding of $A$ at step $i\!=\!k$. 
Now, for the inductive step, suppose $Y$ is placed in $A^{(k+1)}$ and $B$ in $\Lambda^{(k+1)}$ because $\langle Y, B \rangle \!\in \Sigma$ and $B \in A^{(k)}$, with $Y \!\cap \Lambda^{(k)}\!=\varnothing$. That is, $A^{(k+1)}=A^{(k)}Y$ and $\Lambda^{(k+1)}\!\!=\!\Lambda^{(k)}B$.
$\!$We must show that $\langle A^{(k+1)} \setminus \Lambda^{(k+1)}\!,\, A\rangle \!\in \Sigma^\looparrowright\!\!$.

$\!$By the inductive hypothesis, we have $(A^{(k)}\setminus \Lambda^{(k)})\!\to A$ in $\Sigma^\looparrowright\! \subset \Sigma^\vartriangleright\!$. Note that such fd implies $\!(A^{(k)}\!B \!\setminus\! \Lambda^{(k)}\!B) \!\to\! A$. Since $B \in A^{(k)}$, we actually have $(A^{(k)}\!\setminus \Lambda^{(k)}\!B) \to A$. Observe that $(A^{(k)}\!\setminus \Lambda^{(k)}\!B) \subseteq (A^{(k)}Y \setminus \Lambda^{(k)}\!B)$. Then, by R0, we must have $(A^{(k)}Y \!\setminus \Lambda^{(k)}\!B) \to (A^{(k)}\!\setminus \Lambda^{(k)}\!B)$. Now, by R5, we get $(A^{(k)}Y \setminus \Lambda^{(k)}\!B) \to A$. Finally, as $A^{(k)}Y=A^{(k+1)}$ and $\Lambda^{(k)}B=\Lambda^{(k+1)}$, we infer $(A^{(k+1)} \!\setminus \Lambda^{(k+1)}) \to A$ is in $\Sigma^\vartriangleright\!$. That has been derived by implication from the inductive hypothesis through a finite application of $\{$R0, R5$\}$. 

Moreover, with the addition of $\,Y \!\to B$ into $\Delta^{(k+1)}\!$, observe that previous $\langle A^{(k)}\setminus \Lambda^{(k)},\, A\rangle\! \in \Sigma^\vartriangleright$ is no longer in $\Sigma^\looparrowright\!$, as it is not folded w.r.t. $\!\Delta^{(k+1)} \supset \langle Y, B\rangle$. In fact, in accordance with Lemma \ref{lemma:folding-unique}, it is replaced by $\langle A^{(k+1)} \setminus \Lambda^{(k+1)}\!,\, A\rangle$ in $\Sigma^\looparrowright\!$, which is folded w.r.t. $\Delta^{(k+1)}$, i.e., $A^\looparrowright\!=A^{(k+1)} \setminus \Lambda^{(k+1)}$.

Finally, as for the time bound, note that 
in worst case, exactly one fd is consumed into $\Delta$ for each step of the outer loop. That is, $|\Sigma|\!=\!n$ is decreased stepwise in arithmetic progression 
such that $n + (n\!-\!1) +  \hdots  + 1 = n\,(n\!-\!1)/2$ scans are required overall, thus Alg. \ref{alg:afolding} is bounded by $O(n^2)$.
\end{proof}

\subsection{Proof of Corollary \ref{cor:folding}}\label{a:folding}
\noindent
\emph{``Let $\Sigma$ be a parsimonious fd set on attributes $U$. Then algorithm \textsf{folding}($\Sigma$) correctly computes $\Sigma^\looparrowright\!$, the folding of $\Sigma$ in time that is $f(n)\,\Theta(n)$ in the size $|\Sigma| \cdot |U|$, where $f(n)$ is the time complexity of (Alg. \ref{alg:afolding}) \textsf{AFolding}.''}
\vspace{3pt}
\begin{proof}
By Theorem \ref{thm:afolding}, we know that sub-procedure (Alg. \ref{alg:afolding}) \textsf{AFolding} is correct and terminates. Then (Alg. \vspace{-2.5pt}\ref{alg:folding}) \textsf{folding} necessarily inserts in $\Gamma$ (initialized empty) exactly one fd $Z \!\xrightarrow{\looparrowright} A$ for each fd $X \!\to A$ in $\Sigma$ scanned. Thus, at termination we have $|\Gamma|\!=\!|\Sigma|$. Again, as \textsf{AFolding} is correct, we know $Z$ is the unique folding of $A$. Therefore it must be the case that Alg. \ref{alg:folding} is correct. Finally, for the time bound, the algorithm iterates in any case over each fd in $\Sigma$, and at each such step \textsf{AFolding} takes time that is $f(n)$. Thus \textsf{folding} takes $f(n)\,\Theta(n)$. But we know from Theorem 
\ref{thm:afolding} and Remark \ref{rmk:linear} that $f(n) \in O(n)$, then it takes $O(n^2)$.
\end{proof}

\subsection{Proof of Proposition \ref{prop:folding-parsimonious}}\label{a:folding-parsimonious}
\noindent
\emph{``Let $\Sigma$ be an fd set, and $\Sigma^\looparrowright\!$ its folding. If $\Sigma$ is parsimonious then so is $\Sigma^\looparrowright\!$.''}
\begin{proof}
Since $\Sigma$ is parsimonious, by Lemma \ref{lemma:folding-unique} we know that, for each fd $\langle X, A\rangle \in \Sigma$, the attribute folding $Z$ of $A$ such that $Z \xrightarrow{\looparrowright} A$ exists and is unique. That is, for no $Y \neq Z$ we have $Y \xrightarrow{\looparrowright} A$. Thus $\Sigma^\looparrowright\!$ := \textsf{folding}($\Sigma$) automatically satisfies Def. \ref{def:parsimonious}, as long as we show it is canonical (cf. Def. \ref{def:minimal}). 

First, note that $\Sigma$ is parsimonious. Then, by Def. \ref{def:parsimonious}, it must be canonical and, by Def. \ref{def:minimal}, it is singleton-rhs, non-redundant and left-reduced.
Now, consider by Lemma \ref{lemma:folding-unique} that \textsf{AFolding} builds a bijection mapping each\vspace{-2pt} $\langle X, A\rangle \in \Sigma$ to exactly one $\langle Z, A\rangle \in \Sigma^\looparrowright\!$ such that $Z \xrightarrow{\looparrowright} A$. As $\Sigma$ is singleton-rhs, it is obvious that $\Sigma^\looparrowright\!$ is as well. Also, the bijection implies $|\Sigma^\looparrowright\!| = |\Sigma|$. Since $\Sigma$ is non-redundant, then by Lemma \ref{lemma:singleton-rhs} so is $\Sigma^\looparrowright\!$.

Besides, suppose by contradiction that $\Sigma^\looparrowright\!$ is not left-reduced. Then there is some fd $Z \!\to A$ in $\Sigma^\looparrowright\!$ with $W \subset Z$ such that $W \!\to A$ is in $(\Sigma^\looparrowright\!)^+$. But as $\Sigma$ is parsimonious, $X \!\to A$ is the only fd with $A$ in its rhs and we know $W \nsubseteq X$ by $\Sigma$ being left-reduced. Then for $W \!\to A$ to be in $(\Sigma^\looparrowright\!)^+$, it must be the case that $W \xrightarrow{\triangleright} X$ and then $W \xrightarrow{\triangleright} A$. However, as $Z \xrightarrow{\triangleright} A$ is folded and $W \subset Z$, then $W \xrightarrow{\triangleright} A$ must be folded as well, i.e., the folding of $A$ is not unique. $\lightning$.
Therefore $\Sigma^\looparrowright\!$ must be parsimonious.
\end{proof}

\subsection{Proof of Theorem \ref{thm:bcnf}}\label{a:bcnf}
\noindent
\emph{``Let $\boldsymbol R[U]$ be a relational schema, defined $\boldsymbol R \triangleq$ \textsf{synthe\-size}($\Sigma^\looparrowright\!$) where $\Sigma^\looparrowright\!$ be the folding of parsimonious fd set $\Sigma$ on attributes $U$. We claim that $\boldsymbol R$ is in BCNF, is minimal-cardinality and preserves $\Sigma^\looparrowright\!$.''}
\begin{proof}

Verification of dependency preservation w.r.t. $\Sigma^\looparrowright\!$ follows just the same logic as in Proposition \ref{prop:3nf} except that the input fd set is now $\Sigma^\looparrowright\!$, the folding of parsimonious $\Sigma$ (known to have more specific properties). It must be the case that $\boldsymbol R$ preserves $\Sigma^\looparrowright\!$.

Now we concentrate on the BCNF property. 
We shall use Lemma \ref{lemma:check-bcnf}, originally from Osborn \cite{osborn1979},\footnote{Cf. also Ullman \cite[p. $\!$403]{ullman1988}.} to check it in a convenient way.  
$\!$Let $\langle Y, W\rangle \in (\Sigma^\looparrowright\!)^+$ be the fd over which scheme $R_k[YW]$ was synthesized. 
We shall prove that $R_k$[YW] must be in BCNF. By Proposition \ref{prop:folding-parsimonious}, $\Sigma^\looparrowright\!$ is parsimonious (then canonical), and then by Lemma \ref{lemma:check-bcnf} we only need to check for fd violations in $\Sigma^\looparrowright\!$, not $(\Sigma^\looparrowright)^+\!$. 
By contradiction, suppose there is some $\langle X, A\rangle \in \Sigma^\looparrowright\!$, with $XA \!\subseteq YW$, that violates BCNF in $R_k[YW]$. That is, by Def. $\!$\ref{def:nf}, we have $A \nsubseteq X$ but $X \!\not\to YW$. Then, as $Y$ is a key for $R_k[YW]$ and $\langle X, A\rangle \in \Sigma^\looparrowright\!$, we have $X \!\nsubseteq Y$ and then $X$ must (at least) partly intersect with $W$. 
So, let $X\!=\!ST$ for some $T \!\neq\! \varnothing$ with $T \subset W$, and $Y\!=\!SZ$ for some $Z \!\neq\! \varnothing$ with $Z \cap T \!=\! \varnothing$. By assumption, $\langle ST, A\rangle \in \Sigma^\looparrowright\!$. That is, by Def. $\!$\ref{def:folded}, there can be no $V \nsupseteq ST$ such that $V \to ST$ and $ST \not\to V$. Now, take $V\!=Y\!=SZ$. Note that $Z \cap T \!=\! \varnothing$ then $SZ \nsupseteq ST$. Since $T \subset W$ and $SZ \!\to W$, by (R3) decomposition we have $SZ \!\to T$ and then, by (R0) reflexivity, we get $SZ \!\to ST$. But ($X \not\to Y$) $ST \not\to SZ$. That is, $\langle ST, A\rangle \notin \Sigma^\looparrowright\!$. $\!\lightning$. Thus $R_k[YW]$ (and $\boldsymbol R$ in general) must be in BCNF.

For the minimality, note that any two schemes $R_{i}[YW]$, $R_{j}[XZ]$ are rendered by \textsf{synthesize} into $\boldsymbol R$ iff we have fd's $\langle Y, W\rangle, \langle X, Z\rangle \in (\Sigma^\looparrowright)^\prime$ and $X \not\leftrightarrow Y$, i.e., it is not the case that both $X \!\to Y$ and $Y \!\to X$ hold in $\!(\Sigma^\looparrowright)^+$. Now, to prove that $\boldsymbol R$ is minimal-cardinality, we have to find that merging any such pair of arbitrary schemes shall hinder BCNF in $\boldsymbol R$. In fact, take $\boldsymbol R^\prime := \boldsymbol R \setminus (R_i[YW] \cup R_j[XZ]) \cup R_k[YWXZ]$. As $X \not\leftrightarrow Y$, then neither $X$ nor $Y$ can be a superkey for $R_k$, i.e., $R_k$ is not in BCNF.
\end{proof}

\begin{mylemma}
$\!$Let $\boldsymbol R$[U] be a relational schema and $\Sigma$ a  
canonical fd set on attributes $U$. Then, $\boldsymbol R$ can be verified to be in BCNF by checking for violations w.r.t. $\!\Sigma$ only (not w.r.t. $\Sigma^+$).
\label{lemma:check-bcnf}
\vspace{-2pt}
\end{mylemma}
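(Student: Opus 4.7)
The plan is to prove the equivalence between the BCNF definition (Def. \ref{def:nf}, quantified over $\Sigma^+$) and the weaker test that quantifies only over $\Sigma$. The forward direction is immediate: since $\Sigma \subseteq \Sigma^+$, if every non-trivial $\langle X, A\rangle \in \Sigma^+$ has $X$ as a superkey of the enclosing scheme, the same holds in particular for every non-trivial fd in $\Sigma$. The substantive direction is the converse, which I would handle by contraposition.

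For the contrapositive, suppose some $R_k[S] \in \boldsymbol R$ admits a $\Sigma^+$-violation, witnessed by $\langle X, A\rangle \in \Sigma^+$ with $XA \subseteq S$, $A \notin X$, and $X \not\to S$. I would trace $A \in X^+$ via a step-by-step execution of \textsf{XClosure} (Alg. \ref{alg:xclosure}) on $\Sigma$: at some step an fd $Z \to B \in \Sigma$ is applied (singleton-rhs, by canonicality, Def. \ref{def:minimal}) with $Z$ contained in the current value of $X^+$. Focus on the earliest step at which an attribute of $S \setminus X$ is inserted, call it $B$. By the minimality of this step, every attribute already in $X^+$ at that point lies in $X \cup (U \setminus S)$, and therefore $Z \subseteq X \cup (U \setminus S)$.

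The favorable branch is $Z \subseteq X$. Then $Z \to B$ is an fd in $\Sigma$ with $ZB \subseteq S$ and $B \notin Z$ (singleton-rhs), and since $Z \subseteq X$ gives $Z^+ \subseteq X^+ \not\supseteq S$, $Z$ is not a superkey of $R_k$. This is the desired $\Sigma$-violation, present in the \emph{same} $R_k$. I expect this branch to form the bulk of the proof, essentially an Osborn-style argument adapted to the canonical-cover setting, exploiting that canonicality collapses redundant covers so that the single-step witness picked from \textsf{XClosure} is enough.

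The hard part will be the complementary branch where $Z$ contains attributes outside $S$. Here $R_k$ need not directly expose a $\Sigma$-violation, because the derivation of $X \to A$ ``exits'' the scheme through intermediate attributes not in $R_k$. The plan is to reduce this case to the favorable branch by one of two routes: (i) choosing a minimal violating $X$ (or $A$ added earliest into $X^+$) so that the canonical fd $Z \to B$ is forced to stay within $S$; or, failing that, (ii) leveraging additional structure on $\boldsymbol R$. In the use case of Thm \ref{thm:bcnf}, $\boldsymbol R \triangleq \textsf{synthesize}(\Sigma^\looparrowright)$ preserves $\Sigma^\looparrowright$, so $Z \to B$ lives entirely inside some scheme $R_j \in \boldsymbol R$, and one may relocate the test to $R_j$ and descend along the folded causal ordering, which is acyclic by Lemma \ref{lemma:folding-unique} and so terminates after finitely many steps with a scheme that does expose a $\Sigma$-violation. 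I would first attempt a purely combinatorial argument from canonicality alone, and only if that stalls would I invoke the preservation property of the synthesized schema to close the remaining gap.
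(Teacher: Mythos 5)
The paper offers no argument at all for this lemma --- its ``proof'' is the single line ``See Osborn \cite{osborn1979}'' --- so yours is the only actual derivation on the table. Your easy direction and your ``favorable branch'' are correct and are exactly the classical \textsf{XClosure} argument: the first fd $Z \to B \in \Sigma$ fired in computing $X^+$ has $Z \subseteq X^+$, hence $Z$ is not a superkey, and if $ZB$ lies inside the scheme you have exhibited a violation in $\Sigma$ itself. You are also right to be suspicious of the complementary branch, and your instinct that it can only be closed by importing structure from $\boldsymbol R$ (route (ii)) is the correct one.

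The gap, however, is not merely a step you have not yet filled in: the lemma as literally stated is false, so no combinatorial argument from canonicality alone (your route (i)) can succeed. Take $U = \{A,B,C,D\}$, $\Sigma = \{A \to B,\; B \to C\}$, which is canonical and even parsimonious, and $\boldsymbol R = \{R_1[AB],\, R_2[BC],\, R_3[ACD]\}$. No fd of $\Sigma$ has both sides inside $ACD$, so the ``check $\Sigma$ only'' test reports no violation in $R_3$; yet $A \to C \in \Sigma^+$ with $A^+ = ABC \nsupseteq ACD$, so $R_3$ is not in BCNF. This is exactly your hard branch ($Z = \{B\} \nsubseteq S$), and it is consistent with the coNP-hardness of BCNF-testing for subschemes that the paper itself cites: a sound $\Sigma$-only test for arbitrary $\boldsymbol R$ would make that problem polynomial. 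The result the citation actually supports is the single-scheme case $S = U$ (where your hard branch is vacuous because $ZB \subseteq U$ automatically), or the case where the check is made against a cover of the \emph{projected} dependencies $\pi_S(\Sigma^+)$. To make the lemma usable where the paper needs it (Theorem~\ref{thm:bcnf}), one must add a hypothesis along the lines of your route (ii) --- e.g.\ that $\boldsymbol R$ is synthesized from and preserves $\Sigma$, so that every fd of $\Sigma$ is embedded in some scheme and the witness $Z \to B$ can be relocated --- and that is a genuinely different, weaker statement than the one you were asked to prove.
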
	
\begin{proof}
See Osborn \cite{osborn1979}.
\end{proof}

\subsection{Proof of Proposition \ref{prop:lossless}}\label{a:lossless}
\noindent
\emph{``Let $\boldsymbol R[U] \triangleq$ \textsf{synthesize}($\Sigma^\looparrowright\!$) be a relational schema with $|\boldsymbol R|\geq 2$, where $\Sigma^\looparrowright\!$ is the folding of a parsimonious fd set $\Sigma \triangleq$ \textsf{h-encode}($\mathcal S$) on attributes $U$. Then $\boldsymbol R$ has a lossless join (w.r.t. $\Sigma^\looparrowright\!$) iff, for all $R_i[XZ] \in \boldsymbol R[U]$ with key constraint $X \to Z$ for $XZ \subseteq U$, we have $X \to U$ or there is some scheme $R_j[YW] \in \boldsymbol R[U]$ with key constraint $Y \to W$ for $YW \subseteq U$ such that $X \subset Y$.''}
\begin{proof}
We use Lemma \ref{lemma:lossless} from Ullman \cite[p. $\!$397]{ullman1988}, which gives a convenient necessary and sufficient condition for the lossless join property.
By Lemma \ref{lemma:lossless}, any pair\vspace{2pt} $R_i[XZ]$, $R_j[YW] \in \boldsymbol R[U]$ with $XZYW \subseteq U$ have a lossless join w.r.t. $\!\pi_{XZYW}(\Sigma^\looparrowright\!)$, iff $(X \cap Y) \to (XZ \setminus YW)$ or $(X \cap Y) \to (YW \setminus XZ)$ hold in $\!\pi_{XZYW}((\Sigma^\looparrowright\!)^+)$.

Now, let $R_i[XZ], R_j[YW] \!\in\! \boldsymbol R$ be arbitrary schemes. By Theorem \ref{thm:bcnf}, $\boldsymbol R$ is in BCNF. Then, if $R_i[XZ]$ and $R_j[YW]$ are two different schemes in $\boldsymbol R$, we must have $XZ \setminus W = XZ$ and $YW \setminus Z = YW$. That is, we can write the condition imposed by Lemma \ref{lemma:lossless} as $(X \cap\, Y) \to (XZ \setminus Y)$ or $(X \cap Y) \to (YW \setminus X)$.

 Suppose $X \subset Y$. That is, we have $X \cap Y=X$ and $X \to XZ$ then obviously $X \to XZ\setminus Y$. That is, $R_i[XZ]$ and $R_j[YW]$ have a lossless join. Now, suppose rather that $X \to U$. Then $X \to Y$ but $Y \not\to X$ and $Y \neq X$, otherwise $R_i[XZ]$ and $R_j[YW]$ would have been merged by \textsf{synthesize}. Moreover, by (R2) transitivity we have $X \to W$. But for some $B_i \in W$ we must have had $Y \xrightarrow{\looparrowright} B_i$ in $\Sigma^\looparrowright\!$. Since we do have $X \to Y$ and $Y \not\to X$, then by Def. $\!$\ref{def:folded} it must be the case that $X \supset Y$. Therefore $X \cap Y=Y$ and $Y \to YW$, then obviously $Y \to YW\setminus X$. Since $R_i$ and $R_j$ are picked arbitrarily and the natural join operator is associative, $\bowtie_{R_\ell \in \boldsymbol R} \pi_{R_\ell}(r)$ must be lossless.

For the converse, suppose rather that $X \not\to U$ and for all such $R_j[YW]$ we have $X \not\subset Y$. Since $X\neq Y$, we actually have $X \not\subseteq Y$. Now, let $X=ST$ and $Y=SV$ for some $T\neq V$ and $T\neq \varnothing$. That is, $X \cap Y=S$, and (i) $R_i \setminus R_j = STZ \setminus SVW=TZ$ but we have $S \not\to Z$ (otherwise $T$ would be extraneous in key $X$); and (ii) $R_j \setminus R_i = VW \setminus TZ=VW$ but we have $S \not\to V$ likewise. Then there can be no $R_j[YW]$ which $R_i[XZ]$ has a lossless join with, and thus $\boldsymbol R[U]$ cannot have a lossless join.
\end{proof}

\begin{mylemma}
\label{lemma:lossless}
Let $\Sigma$ be a set of fd's on attributes $U$, and $R_i[S], R_j[T] \in \boldsymbol R[U]$ be relation schemes with $ST \subseteq U$; and let $\pi_{ST}(\Sigma)$ be the projection of $\Sigma$ onto $ST$. Then $R_i[S]$ and $R_j[T]$ have a lossless join w.r.t. $\pi_{ST}(\Sigma)$ iff $(S \cap\, T) \to (S \setminus T)$ or $(S \cap\, T) \to (T \setminus S)$ hold in $\pi_{ST}(\Sigma^+)$.
\end{mylemma}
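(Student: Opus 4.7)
).}
The plan is to handle the two directions separately, with the sufficiency direction done by a direct instance-level argument (exploiting that the decomposition has only two pieces), and the necessity direction done by the classical two-row chase/tableau.

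For the sufficient direction, assume without loss of generality that $(S \cap T) \to (S \setminus T) \in \pi_{ST}(\Sigma^+)$, pick any instance $r$ over $ST$ satisfying $\pi_{ST}(\Sigma)$, and verify $\pi_S(r) \bowtie \pi_T(r) \subseteq r$ (the reverse inclusion is immediate from the definition of projection and natural join). Concretely, take $t \in \pi_S(r) \bowtie \pi_T(r)$; by definition there exist witnesses $t_1, t_2 \in r$ with $t_1.S = t.S$ and $t_2.T = t.T$, so in particular $t_1$ and $t_2$ agree on $S \cap T$. The fd then forces $t_1.(S\setminus T) = t_2.(S\setminus T)$, which yields $t_2.S = t_1.S = t.S$, and combined with $t_2.T = t.T$ and $S \cup T = ST$ gives $t_2 = t \in r$. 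The symmetric argument handles the case $(S \cap T) \to (T \setminus S)$.

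For the necessary direction, I would argue the contrapositive via the two-row tableau $\mathcal{T}$ with columns indexed by $ST$: row $1$ carries a distinguished symbol $a_A$ on every $A \in S$ and a non-distinguished $b_A$ on every $A \in T \setminus S$, and row $2$ is symmetric. Chase $\mathcal{T}$ with $\pi_{ST}(\Sigma)$ to a fixpoint $\mathcal{T}^\ast$. The equivalence ``lossless join w.r.t.\ $\pi_{ST}(\Sigma)$ iff $\mathcal{T}^\ast$ contains an all-distinguished row'' is the standard chase-test lemma and can be proved independently by showing (i) $\mathcal{T}^\ast$ itself is a concrete instance satisfying $\pi_{ST}(\Sigma)$ in which $\pi_S \bowtie \pi_T$ strictly grows iff no row is all-distinguished, and (ii) whenever the join is lossless, any instance whose projections include the two tableau rows must satisfy the required equalities. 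Given the equivalence, the necessity direction reduces to showing that an all-distinguished row in $\mathcal{T}^\ast$ can only arise if one of the two candidate fd's is already a logical consequence of $\pi_{ST}(\Sigma)$.

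The main obstacle is precisely this last claim. The plan is to observe that, with only two rows, every chase step either copies $a$-symbols from row $1$ to row $2$ or vice versa, so the set $X^{(k)} \subseteq ST$ of columns where row $2$ is distinguished at step $k$ grows monotonically from $X^{(0)} = S \cap T$; and row $2$ becomes all-distinguished precisely when $X^{(k)} \supseteq S \setminus T$. A straightforward induction then shows $X^{(k)} \subseteq (S \cap T)^+$ with respect to $\pi_{ST}(\Sigma)$, so getting an all-distinguished row $2$ witnesses $(S \cap T) \to (S \setminus T) \in \pi_{ST}(\Sigma^+)$, and symmetrically for row $1$. This is the subtle bookkeeping step; everything else is routine.
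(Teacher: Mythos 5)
Your proof is correct, but it is worth noting that the paper does not actually prove Lemma \ref{lemma:lossless} at all: its ``proof'' is a citation to Ullman's textbook, where the result is established essentially by the two-row chase test. Your proposal is therefore a genuine, self-contained replacement. The necessity direction you give coincides with the textbook route: the two-row tableau, the observation that in a two-way decomposition every column has at most one non-distinguished symbol (so ``the rows agree on $A$'' is the same as ``both carry $a_A$''), the monotone agreement set $X^{(k)}$ starting at $S\cap T$, and the induction $X^{(k)}\subseteq (S\cap T)^+$ showing that an all-distinguished row certifies one of the two fd's. Your sufficiency direction, by contrast, is more elementary than the chase-based one: you argue directly at the instance level that any $t\in \pi_S(r)\bowtie\pi_T(r)$ has witnesses $t_1,t_2$ agreeing on $S\cap T$, so the fd $(S\cap T)\to(S\setminus T)$ forces $t_2$ to coincide with $t$ on all of $ST$, hence $t\in r$. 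This buys a proof of the ``if'' half that needs no tableau machinery and no appeal to the general chase-test equivalence, which you only invoke (and correctly sketch how to re-derive) for the ``only if'' half. The one point to keep explicit when writing this up is the identification of $(\pi_{ST}(\Sigma))^+$ restricted to $ST$ with $\pi_{ST}(\Sigma^+)$, since the lemma is phrased in terms of the latter while your chase reasons with the former; this is routine but should be stated.
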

\begin{proof}
See Ullman \cite[p. $\!$397]{ullman1988}.
\end{proof}

\subsection{Proof of Proposition \ref{prop:u-factor}}\label{a:u-factor}
\noindent
\emph{``Let $H_k^\ell[XZ] \in \boldsymbol H_k[U]$ be an exogenous relation with (violated) key constraint $\langle X, Z\rangle \in \Sigma_k^\looparrowright\!$ for $XZ \subseteq U$. $\!$Then,
\begin{itemize}
\item[(a)] for any pair $Y_k^i[\,V_iD_i\,|\,X A_i\,],\; Y_k^j[\,V_jD_j\,|\,X A_j\,]$ of u-factor projections of $H_k^\ell$, they are independent.
\item[(b)] the join $\bowtie_{i=1}^m \! Y_k^i[\,V_iD_i\,|\,X A_i\,]$ of all u-factor projections of $H_k^\ell$ is lossless w.r.t. $\pi_{X A_1\,A_2\,...\,A_m}(\Sigma_k^\looparrowright\!)$.''
\end{itemize}}
\vspace{3pt}
\begin{proof}
$\!$We prove the claims in separate as follows.

(a) Suppose not. Then we have either $A_i \!\to A_j$ or $A_j \!\to A_i$. Let $G_i$ be such that $A_i \in G_i \subseteq Z$. Then, since $Y_k^i[\,V_iD_i\,|\,X A_i\,]$ is a u-factor projection of $H_k^\ell[XZ]$, by Def. $\!$\ref{def:u-factor} there can be no $C \in Z \setminus G_i$ with $A_i \!\to C$ or $C \!\to A_i$. Now, take $G_j$ such that $A_j \in G_j \subseteq Z$. Since $G_i, G_j \subseteq Z$ are learned to be maximal groups that have to be disjoint in $Z$ (cf. Problem \ref{prob:u-learning}), we must have such $A_j \in Z \setminus G_i$ with $A_i \!\to A_j$ or $A_j \!\to A_i$. $\lightning$.

(b) First, note that the lossless join property is considered w.r.t. $\!\pi_{X A_1\,A_2\,...\,A_p}(\Sigma_k^\looparrowright\!)$.
By Lemma \ref{lemma:lossless},\vspace{-2pt} we know that any pair $Y_k^i[\,V_iD_i\,| X A_i],\, Y_k^j[\,V_jD_j\,| X A_j]$ of u-factor projections will have a lossless join w.r.t. $\pi_{X A_i\,A_j}(\Sigma_k^\looparrowright\!)$ iff $(XA_i \,\cap\, XA_j) \to (XA_i \setminus XA_j)$ or $(XA_i \,\cap\, XA_j) \to (XA_j \setminus XA_i)$ hold in $\pi_{X A_i\,A_j}((\Sigma_k^\looparrowright\!)^+)$. By Def. \ref{def:u-factor}, we must have $XA_i \cap XA_j=X$, and $XA_i \setminus XA_j = A_i$. In fact $X \!\to A_i$ is in $\pi_{X A_1\,A_2\,...\,A_p}(\Sigma_k^\looparrowright\!$), thus $Y_k^i$ and $Y_k^j$ have a lossless join w.r.t. such fd set projection. Then, from the associativeness of the join, all the u-factor projections taken together must have a lossless join.
\end{proof}

\subsection{Proof of Theorem \ref{thm:u-propagation}}\label{a:u-propagation}
\noindent
\emph{``Let $H_k^q[Z_q V] \in \boldsymbol H_k$ be an endogenous relation with (violated) key constraint $\langle Z_q, V\rangle\! \in \Sigma_k^\looparrowright\!$, and $Y_k^r[\overline{V_iD_i}\,|\,Z_qT]$ be a predictive projection of $H_k^q$ w.r.t. $\Gamma_k^\looparrowright\!$ defined by formula (\ref{eq:u-propagation}) with $V \!\supseteq T$. $\!$\mbox{We claim that $Y_k^r$ correctly} captures all the uncertainty present in $H_k^q$ w.r.t. $\Gamma_k^\looparrowright\!$.''}
\begin{proof}
For the proof roadmap, consider that violated key constraint $\langle Z_q, V\rangle \in \Sigma_k^\looparrowright\!$ is `trivially' repaired `4C' by using special attribute `trial id' \textsf{tid} provisionally, viz., take $Z_q^\prime \!=\! Z_q \cup \{\textsf{tid}\}$ and then $Z_q^\prime \!\to V$ holds in endogenous relation $H_k^q[Z_q^\prime V]$. Exogenous relations of form $H_k^\ell[X_\ell\, L]$ are trivially repaired likewise, viz., $X_\ell^\prime\! := X_\ell \cup \{\textsf{tid}\}$ to get clean $H_k^\ell[X_\ell^\prime\, L]$, so that we have (i) a clean loading of trial data, and (ii) a (correct) natural join of tuples from exogenous and endogenous relations. In such setting `4C,' \textsf{tid} works for every endogenous tuple as a surrogate to the exogenous tuple(s) that are in its causal chain. In fact, what\vspace{-1pt} u-propagation does in synthesis `4U' is to replace \textsf{tid} by some set $\overline{V_iD_i}$ of pairs of condition columns such that, for $Z_q^{\prime\prime} \!= \overline{V_iD_i}Z_q$, we know that $Z_q^{\prime\prime} \!\to T$ holds in U-relation $Y_k^r[Z_q^{\prime\prime}\, T]$, a predictive projection of $H_k^q[Z_q^\prime V]$ w.r.t. $\Gamma_k^\looparrowright\!$. By Def. \ref{def:u-propagation}, query formula (\ref{eq:u-propagation}) we outline below accordingly defines exactly such replacement, and what we shall prove by construction is that it does it correctly w.r.t. $\Gamma_k^\looparrowright\!$.
\begin{eqnarray*}
Y_k^r \,:=\, \pi_{Z_qT}(\, \sigma_{\upsilon=k}(Y_0) \bowtie \pi_{X^\prime}(J) \bowtie \pi_{Z_q^\prime T}(H_k^q)\,)
\end{eqnarray*}
\noindent
where $X^\prime \!=\! \textsf{tid} \,\cup X$ and $Z_q^\prime \!=\! \textsf{tid} \,\cup Z_q$.

We start by considering (as the join is associative) sub-query $\sigma_{\upsilon=k}(Y_0) \bowtie \pi_{Z_q^\prime T}(H_k^q)$. By the rewriting rule for the product on U-relations (cf. \S\ref{subsec:u-relations}), it is rewritten to be applied over $\sigma_{\upsilon=k}(Y_0[V_0D_0\,|\,\phi\,\upsilon])$ and $\pi_{Z_q^\prime T}(H_k^q)$, where $\upsilon \in Z_q^\prime$. Moreover, since $\Sigma_k$ is asumed to have empirical grounding, we must have $\phi \in Z_q^\prime$ as well. The partial result set then is $Q_1[V_0D_0\,|\,Z_q^\prime T]$.

Now we consider $\pi_{X^\prime}(J)$, where $J$ is a join sub-query defined according to Def. \ref{def:u-propagation}.
Before we proceed to examine $J$, though, recall that $Y_k^r[\overline{V_iD_i}\,| Z_qT]$ is by assumption a predictive projection of $H_k^q[Z_q V]$. Then there must be $S \supseteq Z_q \setminus \{\phi\}$, where $\langle S, T\rangle \in (\Gamma_k^\looparrowright\!)^+$ is the key constraint of sketched scheme $R[ST]$ with $V \supseteq T$. Note that $S$ contains the u-factors for $T$, unfolded out of their compact representation by $\phi \in Z_q$ such that $\langle Z_q, V\rangle \in (\Sigma_k^\looparrowright\!)^+$. 
Then, by Def. \ref{def:u-propagation}, and taking advantage again of the associativeness of the join, we have: 
\begin{eqnarray*}
J \,=\;\; (\,\bowtie_{H_k^\ell \in \mathsf M} \! H_k^\ell \,) \bowtie (\, \bowtie_{Y_k^j \in \mathsf M(H_k^\ell)} \! (Y_k^j) \,).
\end{eqnarray*}
\noindent
where $\textsf{M}$ is the mapping, from each exogenous relation of form $H_k^\ell[X_\ell^\prime\, L]$ with $(L \,\cap\, S) \neq \varnothing$, to the set of all their u-factor projections $Y_k^j[V_jD_j | X_\ell\, A_j]$ such that we have $A_j \!\in (L \cap S)$ and $X \!= \bigcup_{H_k^\ell \in \mathsf M}X_\ell$. 

Note that, by Proposition \ref{prop:u-factor}, the join $\bowtie_{Y_k^j \in \mathsf M(H_k^\ell)} \! (Y_k^j)$ is lossless. 
Thus, by the rewriting on U-relations  
another partial result set is $\pi_{X^\prime}(J) = Q_2[\overline{V_jD_j} | X^\prime]$, where $\overline{V_jD_j}$ is the set of\vspace{1pt} pairs of condition columns containing the random variables associated with all the empirical u-factors in $S$ and that we want to propagate into $Y_k^r[\overline{V_iD_i}\,| Z_qT]$, i.e., $\overline{V_iD_i} = V_0D_0 \cup \overline{V_jD_j}$.
In fact, back to the formula (\ref{eq:u-propagation}),
\begin{eqnarray*}
Y_k^r \,:=\, \pi_{Z_q T} ( Q_1[V_0D_0\,|\,Z_q^\prime T]  \bowtie Q_2[\overline{V_jD_j} | X^\prime ].
\end{eqnarray*}

We expect this join to be lossless. By Proposition \ref{prop:lossless} that will be the case iff $X^\prime \subset Z_q^\prime$, i.e., $X \subset Z_q$, which we show next. In fact, recall that we have $S \to T$ in $(\Gamma^\looparrowright)^+$ and $Z_q \!\to T$ is an $\upsilon$-fd in $(\Sigma^\looparrowright\!)^+$. Let $B \in T$, then $S \!\to B$ is in $\Gamma^\looparrowright$ and $Z_q \!\to B$ is in $\Sigma^\looparrowright\!$. That is, $Z_q \xrightarrow{\looparrowright} B$ is folded. For each key constraint $X_\ell \!\to A_j$ such that $X_\ell$ was built into $X$, note that  $A_j \in S$ and 
$X_\ell \!\to A_j$ is a $\phi$-fd in $\Sigma^\looparrowright\!$, i.e., $X_\ell \xrightarrow{\looparrowright} A_j$ is folded as well. That is, $X_\ell$ must have replaced $A_j$ into $Z_q$. That is, we must have $X \subset Z_q$.

Therefore the join $Q_1[V_0D_0\,|\,Z_q^\prime T]  \bowtie Q_2[\overline{V_jD_j} | X^\prime ]$ is lossless w.r.t. $\Gamma_k^\looparrowright\!$ and we have the result set materialized into $Y_k^r[V_0D_0 \overline{V_jD_j} \,\,Z_q T]$, which has exactly the u-factors indicated by $\langle S, T\rangle \in (\Gamma_k^\looparrowright\!)^+$. That is, u-propagation into $Y_k^r$ according to formula (\ref{eq:u-propagation}) must be correct.
\end{proof}

\section{Detailed Comments}

\subsection{Comments on Conjecture \ref{conj:lossless}}\label{a:conjecture}

As previsouly suggested in \S\ref{sec:encoding}, our framework brings forth a translation of SEM's concepts into the language of fd's. In general, that seems to be an interesting result, as an expression of properties of deterministic hypothesis in relational theory. 
Along these lines, say, Proposition \ref{prop:exo-endo} below relates to Def. \ref{def:exo-endo}.

\begin{myprop}
Let $\Sigma$ be an fd set over attributes $U$, defined $\Sigma \!:=$\textsf{h-encode}($\mathcal S$) for some complete structure $\mathcal S(\mathcal E, \mathcal V)$, and $x \mapsto A$ for some $x \in \mathcal V$ and $A \in U$. Then $A$ is exogenous (endogenous) iff $x$ is exogenous (endogenous).
\label{prop:exo-endo}
\end{myprop}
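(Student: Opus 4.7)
The plan is to derive the proposition directly from the construction of Alg.~\ref{alg:h-encode} (h-encode) together with the bijectivity of the causal mapping $\varphi_t$ produced by \textsf{COA}$_t$. First I would invoke Theorem~\ref{thm:non-redundant} and the fact that $\varphi_t\!:\mathcal E \to \mathcal V$ is a bijection to conclude that $\Sigma$ contains \emph{exactly one} fd with $A$ on its right-hand side; call it $\langle X, A\rangle$. Then $A$ is exogenous (endogenous), per Def.~\ref{def:exo-endo}, precisely when $\upsilon \notin X$ (resp.\ $\upsilon \in X$). This reduces the proposition to determining which of h-encode's two branches (the $|Z|=1$ branch, which writes $\phi \to x$, versus the $\textbf{else}$ branch, which writes $Z\!\setminus\!\{x\}\cup\{\upsilon\} \to x$) is executed when processing the unique pair $\langle f_k, x\rangle \in \varphi_t$ with $x \mapsto A$.

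For the ($\Leftarrow$) direction, assume $x$ is exogenous. By Def.~\ref{def:exogenous} there is some $f_k \in \mathcal E$ whose only appearing variable is $x$, i.e.\ $(\{f_k\},\{x\})$ is a minimal complete substructure of $\mathcal S$. \textsf{COA}$_t$ isolates such minimal substructures in its outer loop, and its inner loop, operating on $\mathcal V^\prime = \{x\}$, has no choice but to set $\varphi_t(f_k) = x$. Running h-encode on $\langle f_k, x\rangle$ then computes $Z = \{x\}$, so the $|Z|=1$ branch fires and inserts $\phi \to x$ (hence $\phi \to A$) into $\Sigma$; therefore $A$ is exogenous.

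For the ($\Rightarrow$) direction, I would read h-encode backwards: an fd $\phi \to A$ is inserted into $\Sigma$ \emph{only} through its $|Z|=1$ branch, which requires that the unique $f_k$ with $\varphi_t(f_k)=x$ (where $x \mapsto A$) satisfies $A_S(k,j)=1$ iff $x_j = x$. This is exactly Def.~\ref{def:exogenous}, so $x$ is exogenous. The endogenous cases follow by contraposition, since Def.~\ref{def:exogenous} and Def.~\ref{def:exo-endo} each bipartition $\mathcal V$ (resp.\ $U$) into exogenous vs.\ endogenous classes, and h-encode's two-way conditional mirrors this bipartition.

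The main obstacle is the implicit step in the ($\Leftarrow$) direction, namely passing from ``\emph{there exists} $f_k$ with $f_k(x)=0$'' to ``$\varphi_t$ maps \emph{that} $f_k$ to $x$''. Here I would lean on condition (a) of the structure definition, which forbids any subset of $k$ equations from involving fewer than $k$ variables; in particular, at most one equation of $\mathcal S$ can involve $x$ alone, so the minimal substructure $(\{f_k\},\{x\})$ is unique and its handling by \textsf{COA}$_t$ unambiguously yields $\varphi_t(f_k)=x$.
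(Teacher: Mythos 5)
Your proposal is correct and takes essentially the same route as the paper, whose entire proof is the one-line remark that the claim follows from Def.~\ref{def:exogenous} and Def.~\ref{def:exo-endo} by inspecting \textsf{h-encode}; you simply carry out the inspection in full. The one place where you add genuine substance is in closing the gap between the existential in Def.~\ref{def:exogenous} and the particular equation selected by $\varphi_t$ (via condition (a) of the structure definition and the handling of singleton minimal substructures in \textsf{COA}$_t$), a point the paper's proof glosses over entirely.
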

\begin{proof}
This shall be straightforward from Def. \ref{def:exogenous} and Def. \ref{def:exo-endo}, by observing (Alg. \ref{alg:h-encode}) \textsf{h-encode}.
\end{proof}
\noindent
Conjecture \ref{conj:lossless} works in the converse direction towards more complete results in terms of the equivalence between SEM's and our design-theoretic framework.

\vspace{3pt}
\noindent
\emph{``The lossless join property is reducible to the structure $\mathcal S$ given as input to the pipeline.''} 
\vspace{3pt}

We leave to future work to show that the condition for the lossless property (cf. Proposition \ref{prop:lossless}) in fact translates into the structure $\mathcal S$ given as input to the pipeline. Note that we have technical means to decide/test for the lossless join efficiently: Ullman gives a polynomial-time algorithm for that \cite[p. $\!$406]{ullman1988}, and Proposition \ref{prop:lossless} gives an even more specific condition (as the schema is known to be in BCNF) that can be fruitfully exploited in that sense. Thus, we shall be able to decide whether a given structure $\mathcal S$ is ``fully interconnected'' by processing its folding and testing its rendered schema for the lossless join.

Moreover, consider fd set $\Gamma$ from Example \ref{ex:lossless}. Arguably, it may not make sense in practice to keep those two pairs of variables in the same hypothesis/model --- the modeler user would likely strive for each hypothesis to form a single, cohesive piece of scientific inquiry. That is, 
we expect in practice for the majority of hypotheses given/extracted that they shall have a lossless join. In fact, we find some evidence for that from our initial experiments in a real-world use case (cf. \S\ref{subsec:applicability}).

\end{document}